\keywords{Timed automata, interval Markov chains, probabilistic model checking}
\newcolumntype{L}{>{$ \displaystyle}l<{$}}
\newcolumntype{C}{>{$ \displaystyle}c<{$}}
\newcolumntype{R}{>{$ \displaystyle}r<{$}}
\definecolor{ly}{RGB}{250,250,200}
\definecolor{lg}{RGB}{200,250,200}
\def\Nset{\mathbb{N}}
\def\Qset{\mathbb{Q}}
\def\Rset{\mathbb{R}}
\newcommand{\true}{\mbox{\tt true}}
\def\ra{\rightarrow}
\newcommand{\sat}{\models}
\newcommand{\sectref}[1]{Section~\ref{#1}}
\newcommand{\propref}[1]{Proposition~\ref{#1}}
\newcommand{\lemref}[1]{Lemma~\ref{#1}}
\newcommand{\figref}[1]{Figure~\ref{#1}}
\newcommand{\sem}[1]{{[\hspace{-0.05cm}[#1]\hspace{-0.05cm}]}}
\newcommand{\comment}[1]{}
\newcommand{\setsep}{\mid}
\newcommand{\IGNORE}[1]{}
\newcommand{\nnr}{\Rset_{\geq 0}}
\newcommand{\dist}{\mathsf{Dist}}
\newcommand{\adist}{\mu}
\newcommand{\support}{\mathsf{support}}
\newcommand{\anMDP}{\mathcal{M}}
\newcommand{\ptsstates}{S}
\newcommand{\astate}{s}
\newcommand{\anaction}{a}
\newcommand{\afinrun}{r}
\newcommand{\last}{\mathit{last}}
\newcommand{\astrat}{\sigma}
\newcommand{\probl}[2]{\mathrm{Pr}^{#1}_{#2}}
\newcommand{\aclock}{x}
\newcommand{\anotherclock}{y}
\newcommand{\aval}{v}
\newcommand{\aclockset}{X}
\newcommand{\pedges}{\mathit{prob}}
\newcommand{\apedge}{p}
\newcommand{\pd}{\wp}
\newcommand{\locs}{L}
\newcommand{\inv}{\mathit{inv}}
\newcommand{\aloc}{l}
\newcommand{\g}{g}
\newcommand{\vinz}{\sat}
\newcommand{\ccons}{\mathit{\Psi}}
\newcommand{\ccl}{\psi}
\newcommand{\maxval}[3]{\mathbb{P}^\mathrm{max}_{{#1},{#3}}(\Diamond {#2})}
\newcommand{\minval}[3]{\mathbb{P}^\mathrm{min}_{{#1},{#3}}(\Diamond {#2})}
\newcommand{\Sup}[1]{\sup_{#1}}
\newcommand{\Inf}[1]{\inf_{#1}}
\newcommand{\adelay}{t}
\newcommand{\pthresh}{\lambda}
\newcommand{\adisttemp}{\pd}
\newcommand{\adtpara}[2]{{#1}[{#2}]}
\newcommand{\disttemps}{\mathsf{Temp}}
\newcommand{\acdpta}{\mathcal{P}}
\newcommand{\interval}[1]{I^{#1}}
\newcommand{\outcome}{e}
\newcommand{\fc}[2]{c^{#1}_{#2}}
\newcommand{\fd}[2]{d^{#1}_{#2}}
\newcommand{\rsuccact}{\tau}
\newcommand{\set}[1]{\{{#1}\}}
\newcommand{\ldot}{\, . \,} 
\newcommand{\intervals}{\mathcal{I}}
\newcommand{\anint}{I}
\newcommand{\lep}[1]{\mathsf{left}({#1})}
\newcommand{\rep}[1]{\mathsf{right}({#1})}
\newcommand{\adtmc}{\mathcal{C}}
\newcommand{\dstates}{S}
\newcommand{\dtrans}{\mathbf{P}}
\newcommand{\dpaths}[1]{\mathit{Paths}^{#1}}
\newcommand{\dprob}[2]{\mathrm{Pr}^{#1}_{#2}}
\newcommand{\adtmcparaone}[1]{\adtmc^{#1}}
\newcommand{\apath}{\mathbf{\afinpath}}
\newcommand{\anmdp}{\mathcal{M}}
\newcommand{\mstates}{S}
\newcommand{\mactions}{A}
\newcommand{\mtrans}{\Delta}
\newcommand{\noaction}{\bot}
\newcommand{\mactionsof}[1]{\mactions({#1})}
\newcommand{\mpaths}[1]{\mathit{Paths}^{#1}}
\newcommand{\mfinpaths}[1]{\mathit{Paths}^{#1}_*}
\newcommand{\afinpath}{r}
\newcommand{\afinpathpref}[1]{\afinpath_{#1}}
\newcommand{\mprob}[2]{\mathrm{Pr}^{#1}_{#2}}
\newcommand{\asched}{\sigma}
\newcommand{\ansched}{\pi}
\newcommand{\anoimc}{\mathfrak{C}}
\newcommand{\ocstates}{S}
\newcommand{\octrans}{\mathfrak{P}}
\newcommand{\apa}[1]{\mathbf{a}}
\newcommand{\targetset}{T}
\newcommand{\strats}[1]{\Sigma^{#1}}
\newcommand{\anoimdp}{\mathfrak{M}}
\newcommand{\omstates}{S}
\newcommand{\omactions}{\mathfrak{A}}
\newcommand{\omtrans}{\mathfrak{D}}
\newcommand{\omactionsof}[1]{\omactions({#1})}
\newcommand{\anomact}{\mathfrak{a}}
\newcommand{\omstacts}{\omstates \otimes \omactions}
\newcommand{\cmTOm}{f}
\newcommand{\mTOcm}{g}
\newcommand{\anintdist}{\mathfrak{d}}
\newcommand{\intdist}{\mathfrak{Dist}}
\newcommand{\ana}{\alpha}
\newcommand{\anab}{\beta}
\newcommand{\anac}{\gamma}
\newcommand{\ass}[1]{\mathfrak{G}({#1})}
\newcommand{\anabof}[2]{\anab_{({#1},{#2})}}
\newcommand{\oimcfrom}[1]{\anoimc[{#1}]}
\newcommand{\cfstates}{\tilde{\ocstates}}
\newcommand{\cftrans}{\tilde{\octrans}}
\newcommand{\probfin}[2]{\mathrm{Pr}^{#1}_{*,#2}}
\newcommand{\afincmpath}{\afinpath}
\newcommand{\mfinpathsendsin}[2]{\mathit{Paths}^{#1}_{*,{#2}}}
\newcommand{\pol}[2]{\mathit{pol}({#1},{#2})}
\newcommand{\cyl}{\mathsf{Cyl}}
\newcommand{\semmdp}[1]{\sem{#1}}
\newcommand{\anval}{w}
\newcommand{\targetlocs}{F}
\newcommand{\targetsetof}[1]{\targetset_{#1}}
\newcommand{\targetsetimdpof}[1]{\mathfrak{T}_{#1}}
\newcommand{\closure}[1]{\overline{#1}}
\newcommand{\constants}[1]{\mathit{Cst}({#1})}
\newcommand{\ab}{b}
\newcommand{\borders}{\mathbb{B}}
\newcommand{\intervalsof}[1]{\intervals_{#1}}
\newcommand{\abint}{B}
\newcommand{\oimdpfrom}[1]{\anoimdp[{#1}]}
\newcommand{\omfromstates}[1]{\omstates_{#1}}
\newcommand{\omfromactions}[1]{\omactions_{#1}}
\newcommand{\omfromtrans}[1]{\omtrans_{#1}}
\newcommand{\omfromactionsof}[2]{\omactions_{#1}({#2})}
\newcommand{\mfromstates}[1]{\mstates_{#1}}
\newcommand{\mfromactions}[1]{\mactions_{#1}}
\newcommand{\mfromtrans}[1]{\mtrans_{#1}}
\newcommand{\timeelapse}[1]{\tilde{#1}}
\newcommand{\regof}[1]{\mathsf{reg}({#1})}
\newcommand{\sourceof}[1]{\mathit{source}({#1})}
\newcommand{\guardof}[1]{\mathit{guard}({#1})}
\newcommand{\dir}{\mathsf{ep}}
\newcommand{\lb}{\mathsf{le}}
\newcommand{\rb}{\mathsf{re}}
\newcommand{\regstates}{\omfromstates{\oimdpfrom{\acdpta}}^\mathsf{reg}}
\newcommand{\eis}{\omfromstates{\oimdpfrom{\acdpta}}^\mathsf{end}}
\newcommand{\dummyact}{\tau}
\newcommand{\probof}[2]{\lambda^{#1}_{#2}}
\newcommand{\bstrats}[1]{\Sigma^{#1}_{\borders}}
\newcommand{\abfinpath}{\rho}
\newcommand{\abfinpathprefix}[1]{\abfinpath_{#1}}
\newcommand{\corrtwo}[2]{\Lambda^{#1}({#2})}
\newcommand{\timeelapsespecial}[3]{\tilde{\aval}^\mathrm{*}_{{#1},({#2},{#3})}}
\newcommand{\valnset}[3]{\Xi_{{#1},({#2},{#3})}}
\newcommand{\bactsset}{\Upsilon}
\newcommand{\fcno}{\fc{}{}}
\newcommand{\fdno}{\fd{}{}}
\newcommand{\probfinone}[1]{\mathrm{Pr}^{#1}_{*}}
\newcommand{\anosched}{\hat{\ansched}}
\newcommand{\dummyassof}[1]{\ana^{#1}}
\newcommand{\afinpathrepr}[2]{\hat{\afinpath}} 
\newcommand{\assTOvaln}[2]{h_{#1}}
\newcommand{\futfpaths}[3]{\Omega_{{#1},({#2},{#3})}}
\newcommand{\finalass}[1]{\mathsf{FA}({#1})}
\newcommand{\msuffpathfrom}[2]{\mathit{SuffFrom}_*^{#1}({#2})}
\newcommand{\msuffpath}[1]{\mathit{Suffixes}_*^{#1}}
\newcommand{\critical}[3]{\mathit{Critical}_{#1}^{#2}}
\newcommand{\futass}[3]{\Psi_{{#1},({#2},{#3})}}
\newcommand{\anaof}[3]{\ana^{#1}_{({#2},{#3})}}
\newcommand{\afinpathspecial}{\check{\afinpath}}
\newcommand{\yanotherclock}{z}
\begin{document}

\title[PTA with One Clock and Initialised Clock-Dependent Probabilities]{Probabilistic Timed Automata with One Clock and Initialised Clock-Dependent Probabilities}

\author[Jeremy~Sproston]{Jeremy Sproston}
\address{Dipartimento di Informatica, University of Turin, Italy}
\email{sproston@di.unito.it}


\begin{abstract}
Clock-dependent probabilistic timed automata extend classical timed automata with discrete probabilistic choice,
where the probabilities are allowed to depend on the exact values of the clocks.
Previous work has shown that the quantitative reachability problem
for clock-dependent probabilistic timed automata with at least three clocks is undecidable.
In this paper, we consider the subclass of clock-dependent probabilistic timed automata
that have \emph{one clock},
that have clock dependencies described by affine functions,
and that satisfy an \emph{initialisation} condition
requiring that,
at some point between taking edges with non-trivial clock dependencies,
the clock must have an integer value.
We present an approach for solving in polynomial time
quantitative and qualitative reachability problems of
such one-clock initialised clock-dependent probabilistic timed automata.
Our results are obtained by a transformation to interval Markov decision processes.
\end{abstract}
%

\maketitle

\allowdisplaybreaks%

\section{Introduction}\label{sec:intro}
The diffusion of complex systems with timing requirements
that operate in unpredictable environments
has led to interest in formal modelling and verification techniques for
timed and probabilistic systems.
Model checking~\cite{CGP01,BK08} is an example of a formal verification technique,
and comprises the automatic verification of a system model against formally-specified properties.
%
A well-established modelling formalism for timed systems is \emph{timed automata}~\cite{AD94}.
A timed automaton consists of a finite graph equipped with
a set of real-valued variables called \emph{clocks},
which increase at the same rate as real time and which can be used
to constrain the relative time of events.
To model probabilistic systems formally,
frameworks such as Markov chains or Markov decision processes are used typically.
Model-checking algorithms for these formalisms have been presented in the literature:
for overviews of these techniques see, for example,~\cite{BFLMOW18} for timed automata,
and~\cite{BK08,FKNP11} for Markov chains and Markov decision processes.
Furthermore, timed automata and Markov decision processes
have been combined to obtain
the formalism of \emph{probabilistic timed automata}~\cite{GJ95,KNSS02,NPS13},
which can be viewed as timed automata with probabilities associated with their edges
(or, equivalently, as Markov decision processes equipped with clocks and their associated constraints).
%


\begin{figure}[t]
\centering
%

\centering
\tiny

\begin{tikzpicture}[->,>=stealth',shorten >=1pt,auto, thin] 
  \tikzstyle{state}=[draw=black, text=black, shape=circle, inner sep=3pt, outer sep=0pt, circle, rounded corners] 
  \tikzstyle{final_state}=[draw=black, text=black, shape=circle, inner sep=3pt, outer sep=0pt, circle, rounded corners] 


	\node[state](W){$\mathrm{W}$};
	\node[yshift=-0.55cm,above of = W](inv_W){$\aclock < 3$};

	\node[fill=black, node distance=1.7cm] (nail_W) [below of=W] {};

	\node[state, node distance=1.7cm](S)[left of=nail_W]{$\mathrm{S}$};

	\node[state, node distance=1.7cm](T)[below of=nail_W]{$\mathrm{T}$};

	\node[state, node distance=1.7cm](F)[right of=nail_W]{$\mathrm{F}$};
	\node[yshift=-0.6cm,above of = F](inv_F){$\aclock < 5$};

	\node[fill=black, node distance=1.7cm] (nail_F) [right of=F] {};

	\path[rounded corners]

	(W)
		edge [right] node {$1<\aclock<3$} (nail_W)

	(nail_W)
		edge [above] node {\colorbox{gray!15}{$\frac{3\aclock - 3}{8}$}} (S)
		edge [left] node[yshift=-2mm] {\colorbox{gray!15}{$\frac{11 - 3\aclock}{16}$}} (T)
		edge [below] node {\colorbox{gray!15}{$\frac{11 - 3\aclock}{16}$}} (F)

	(F)
		edge [above] node {$4<\aclock<5$} (nail_F);


\draw[->,rounded corners] (nail_F.north) |- node[above,text=black,pos=.7]{\colorbox{gray!15}{$\frac{\aclock-4}{2}$}} (W.east);
\draw[->,rounded corners] (nail_F.north) |- node[below,text=black,pos=.7]{$\set{\aclock}$} (W.east);
\draw[->,rounded corners] (nail_F.south) |- node[above,text=black,pos=.7]{\colorbox{gray!15}{$\frac{6-\aclock}{2}$}} (T.east);

%

\end{tikzpicture}


\caption{An example of one-clock clock-dependent probabilistic automaton $\acdpta$.}%
\label{fig:one_clock}

\end{figure}


For the modelling of certain systems,
it may be advantageous to model the fact that the probability of some events,
in particular those concerning the environment in which the system is operating,
vary as time passes.
For example, in automotive and aeronautic contexts,
the probability of certain reactions of human operators may depend on factors such as fatigue,
which can increase over time
(see, for example,~\cite{FWHT16});
an increase in the amount of time that an unmanned aerial vehicle spends performing a search and rescue operation
in a hazardous zone may increase the probability that the vehicle incurs damage from the environment;
an increase in the time elapsed before a metro train arrives at a station
can result in an increase in the number of passengers on the station's platform,
which can in turn increase the probability of the doors failing to shut at the station,
due to overcrowding of the train
(see~\cite{BBHMPS19}).
A natural way of representing such a dependency of probability of events on time
is using a continuous function:
for example,
for the case in which a task can be completed between $1$ and $3$ time units in the future,
we could represent the successful completion of the task by probability $\frac{\aclock+1}{4}$,
where the clock variable $\aclock$
(measuring the amount of time elapsed)
ranges over the interval $[1,3]$.
The standard probabilistic timed automaton formalism cannot express such a continuous relationship between
probabilities and time,
being limited to step functions
(where the intervals along which the function is constant must have rational-numbered endpoints).
This limitation led to the development of an extension of probabilistic timed automata
called \emph{clock-dependent probabilistic timed automata}~\cite{Spr21},
in which the probabilities of crossing edges can depend on clock values according to piecewise constant functions.
\figref{fig:one_clock} gives an example of such a clock-dependent probabilistic timed automaton,
using the standard conventions for the graphical representation of (probabilistic) timed automata
(the model has one clock denoted by $\aclock$, and black boxes denote probabilistic choices over outgoing edges).
In location $\mathrm{W}$, the system is \emph{working} on a task,
which is completed after between $1$ and $3$ units of time.
When the task is completed, it is either \emph{successful} (edge to location $\mathrm{S}$),
\emph{fails} (edge to location $\mathrm{F}$)
or leads to system \emph{termination} (edge to location $\mathrm{T}$).
For the case in which the task completion fails,
between $4$ and $5$ time units after work on the task started
the system may either restart the task from the beginning (edge to location $\mathrm{W}$, resetting $\aclock$ to $0$),
or terminate (edge to location $\mathrm{T}$).
The edges corresponding to probabilistic choices are labelled with \emph{expressions} over the clock $\aclock$,
which describe how the probability of those edges changes in accordance with changes in the value of $\aclock$.
For example, the longer the time spent in location $\mathrm{W}$,
the higher the value of $\aclock$ when location $\mathrm{W}$ is left,
and the higher the probability of making a transition to location $\mathrm{S}$,
which corresponds to the successful completion of the task.

Previous work on clock-dependent probabilistic timed automata showed that a basic
\emph{quantitative} reachability problem,
regarding whether there is a scheduler of nondeterministic choice such that the probability of reaching a set of target locations
exceeds some probability threshold,
is undecidable,
but that an approach based on the region graph
(which is a finite-state abstraction used extensively for timed automata)
can be employed to approximate optimal
reachability probabilities~\cite{Spr21}.
The undecidability result relied on the presence of at least three clocks:
in this paper, following similar precedents in the context of (non-probabilistic and probabilistic) variants of timed automata
(for example,~\cite{LMS04,BLM08,JLS08,BBBMBGJ14,BBG14,ABKMT16}),
we restrict our attention to clock-dependent probabilistic timed automata with a \emph{single} clock variable.
As in~\cite{Spr21}, we consider the case in which the dependencies of transition probabilities
on the value of the clock
are described by affine functions.
Furthermore, we assume that,
between any two edges with a non-constant dependence on the clock,
the clock must have a natural-numbered value,
either through being reset to $0$ or by increasing as time passes.
We call this condition \emph{initialisation},
following the precedents of~\cite{ABKMT16} and~\cite{HKPV98},
in which similar conditions are used to obtain decidability results for stochastic timed systems with one clock,
and hybrid automata, respectively;
intuitively, the value of the clock is ``reinitialised''
(either explicitly, through a reset to $0$, or implicitly, through the passage of time)
to a known, natural value between
non-constant dependencies of probability on the value of the clock.
Note that the clock-dependent probabilistic timed automaton of \figref{fig:one_clock}
satisfies this assumption
(although clock $\aclock$ is not reset on the edge to location $\mathrm{F}$, it must take values $3$ and $4$,
i.e., at least one natural-numbered value,
before location $\mathrm{F}$ can be left).
We show that, for such clock-dependent probabilistic timed automata,
quantitative reachability problems can be solved in polynomial time.
Similarly, we can also solve in polynomial time \emph{qualitative} reachability problems,
which ask whether there exists a scheduler of nondeterminism
such that a set of target locations can be reached with probability $1$ (or $0$),
or whether all schedulers of nondeterminism result in the target locations being reached with probability $1$ (or $0$).

These results rely on the construction of an \emph{interval Markov decision process} from
the one-clock clock-dependent probabilistic timed automaton.
Interval Markov decision processes have been well-studied in the verification context
(for example, in~\cite{PLSS13,HHK14,HM18}),
and also in other contexts, such as planning~\cite{GLD00} and control~\cite{NG05,WK08}.
They comprise a finite state space where transitions between states are achieved in the following manner:
for each state, there is a nondeterministic choice between a set of actions,
where each action is associated with a decoration of the set of edges from the state with \emph{intervals} in $[0,1]$;
then a nondeterministic choice as to the exact probabilities associated with each outgoing edge is chosen from
the intervals associated with the action chosen in the first step;
finally, a probabilistic choice is made over the edges according to the probabilities chosen in the second step,
thus determining the next state.
In contrast to the standard formulation of interval Markov decision processes,
we allow edges corresponding to probabilistic choices to be labelled not only with closed intervals,
but also with open and half-open intervals.
While (half-)open intervals have been considered previously in the context of interval Markov chains in~\cite{CK15,Spr18},
we are unaware of any work considering them in the context of interval Markov decision processes.
%
Open intervals in the constructed interval Markov decision process
provide a natural way of representing strict constraints on clocks
in the one-clock clock-dependent probabilistic timed automaton.

We proceed by giving some preliminary concepts in Section~\ref{sec:oimdps}:
this includes a reduction from interval Markov decision processes
to interval Markov chains~\cite{JL91,KU02,SVA06}
with the standard Markov decision process-based semantics,
which may be of independent interest.
The reduction takes open and half-open intervals into account;
while~\cite{CK15} has shown that open interval Markov chains can be reduced to closed Markov chains for the purposes
of quantitative properties,~\cite{Spr18} shows that the open/closed distinction is critical for the evaluation of qualitative properties.
In Section~\ref{sec:onec-cdpta},
we present the definition of one-clock clock-dependent probabilistic timed automata,
and describe the transformation to interval Markov decision processes in Section~\ref{sec:general}.
This paper extends the conference version~\cite{Spr20}
with full proofs of the results.

\section{Interval Markov Decision Processes}\label{sec:oimdps}

In this section, we focus on interval Markov decision processes,
a formalism that we will subsequently use to construct exact finite-state abstractions
from one-clock clock-dependent probabilistic timed automata.
First we present a number of preliminary concepts,
before proceeding to define interval Markov chains and interval Markov decision processes.
At the end of the section, we present a novel result,
namely a translation from interval Markov decision processes
to interval Markov chains,
and prove its correctness with respect to quantitative and qualitative reachability problems.

\subsection{Preliminaries.}
We use $\nnr$ to denote the set of non-negative real numbers,
$\Qset$ to denote the set of rational numbers, and
$\Nset$ to denote the set of natural numbers.
%
%
%
A (discrete) probability \emph{distribution} over a countable set $Q$ is
a function ${\adist: Q \ra [0,1]}$ such that $\sum_{q \in Q} \adist(q) = 1$.
Let $\dist(Q)$ be the set of distributions over $Q$.
For a (possibly uncountable) set $Q$ and a function $\adist : Q \ra [0,1]$,
we define $\support(\adist) = \set{q \in Q \setsep \adist(q) > 0}$.
Then, for an uncountable set $Q$, we define $\dist(Q)$ to be the set of
functions ${\adist : Q \ra [0,1]}$ such that $\support(\adist)$ is a
countable set and $\adist$ restricted to $\support(\adist)$ is a distribution.
Given a binary function $f: Q \times Q \ra [0,1]$ and element $q \in Q$,
we denote by $f(q,\cdot): Q \ra [0,1]$ the unary function
such that $f(q,\cdot)(q')=f(q,q')$ for each $q' \in Q$.
%


A \emph{Markov chain}
(MC) $\adtmc$ is a pair $(\dstates,\dtrans)$
where $\dstates$ is a set of \emph{states}
and $\dtrans: \dstates \times \dstates \ra [0,1]$ is a \emph{transition probability function},
such that $\dtrans(\astate,\cdot) \in \dist(\dstates)$ for each state $\astate \in \dstates$.
A \emph{path} of MC $\adtmc$ is a sequence $\astate_0 \astate_1 \cdots$ of states
such that $\dtrans(\astate_i,\astate_{i+1})>0$ for all $i \geq 0$.
Given a path $\apath = \astate_0 \astate_1 \cdots$ and $i \geq 0$,
we let $\apath(i) = \astate_i$ be the $(i+1)$-th state along $\apath$.
The set of paths of $\adtmc$ starting in state $\astate \in \dstates$ is denoted by $\dpaths{\adtmc}(\astate)$.
In the standard manner (see, for example,~\cite{BK08,FKNP11}),
given a state $\astate \in \dstates$,
we can define a probability measure $\dprob{\adtmc}{\astate}$ over $\dpaths{\adtmc}(\astate)$.

A \emph{Markov decision process} (MDP)
$\anMDP = ( \mstates, \mactions, \mtrans )$ comprises
a set $\mstates$ of \emph{states}, 
a set $\mactions$ of \emph{actions},
and a \emph{probabilistic transition function}
$\mtrans: \mstates \times \mactions \ra \dist(\mstates) \cup \set{ \noaction }$.
The symbol $\noaction$ is used to represent the unavailability of an action in a state,
i.e., $\mtrans(\astate,\anaction) = \noaction$ signifies that
action $\anaction \in \mactions$ is not available in state $\astate \in \mstates$.
For each state $\astate \in \mstates$,
let $\mactionsof{\astate} = \set{ \anaction \in \mactions \setsep \mtrans(\astate,\anaction) \neq \noaction }$,
and assume that $\mactionsof{\astate} \neq \emptyset$,
i.e., there is at least one available action in each state.
Transitions from state to state of an MDP are performed in two
steps: if the current state is $\astate$, the first step concerns a
nondeterministic selection of an action $\anaction \in \mactionsof{\astate}$;
the second step comprises a probabilistic choice, made according to the distribution $\mtrans(\astate,\anaction)$,
as to which state to make the transition
(that is, a transition to a state $\astate' \in \mstates$ is made with probability $\mtrans(\astate,\anaction)(\astate')$).
In general, the sets of states and actions
can be uncountable.
We say that an MDP is \emph{finite} if $\mstates$ and $\mactions$ are finite sets.

A(n infinite) path of an MDP $\anmdp$ is a sequence $\astate_0 \anaction_0 \astate_1 \anaction_1 \cdots$
such that $\anaction_i \in \mactionsof{\astate_i}$ and $\mtrans(\astate_i,\anaction_i)(\astate_{i+1})>0$
for all $i \geq 0$.
Given an infinite path $\apath = \astate_0 \anaction_0 \astate_1 \anaction_1 \cdots$ and $i \geq 0$,
we let $\apath(i) = \astate_i$ be the $(i+1)$-th state along $\apath$.
Let $\mpaths{\anmdp}$ be the set of infinite paths of $\anmdp$.
A finite path is a sequence $\afinpath = \astate_0 \anaction_0 \astate_1 \anaction_1 \cdots \anaction_{n-1} \astate_n$
such that $\anaction_i \in \mactionsof{\astate_i}$ and $\mtrans(\astate_i,\anaction_i)(\astate_{i+1})>0$
for all $0 \leq i < n$.
Let $\last(\afinpath) = \astate_n$ denote the final state of $\afinpath$. 
For $\anaction \in \mactionsof{\astate_n}$ and $\astate \in \mstates$
such that $\mtrans(\astate_n,\anaction)(\astate)>0$,
we use $\afinpath \anaction \astate$ to denote the finite path
$\astate_0 \anaction_0 \astate_1 \anaction_1 \cdots \anaction_{n-1} \astate_n  \anaction \astate$.
Let $\mfinpaths{\anmdp}$ be the set of finite paths of the MDP $\anmdp$.
Let $\mpaths{\anmdp}(\astate)$ and $\mfinpaths{\anmdp}(\astate)$
be the sets of infinite paths and finite paths, respectively,
of $\anmdp$ starting in state $\astate \in \mstates$.

A \emph{scheduler} is a function
$\astrat: \mfinpaths{\anmdp} \ra \bigcup_{\astate \in \mstates} \dist(\mactionsof{\astate})$
such that $\astrat(\afinpath) \in \dist(\mactionsof{\last(\afinpath)})$
for all $\afinpath \in \mfinpaths{\anmdp}$.\footnote{From~\cite[Lemma~4.10]{Hah13},
without loss of generality we can assume henceforth that schedulers
map to distributions assigning positive probability to \emph{finite} sets of actions,
i.e., schedulers $\astrat$ for which $|\support(\astrat(\afinrun))|$ is finite for all $\afinrun \in \mfinpaths{\anmdp}$.}
Let $\strats{\anmdp}$ be the set of schedulers of the MDP $\anmdp$.
We say that infinite path
$\apath = \astate_0 \anaction_0 \astate_1 \anaction_1 \cdots$ is \emph{generated} by $\astrat$
if $\astrat(\astate_0 \anaction_0 \astate_1 \anaction_1 \cdots \anaction_{i-1} \astate_i)(\anaction_i)>0$
for all $i \in \Nset$.
Let $\mpaths{\astrat}$ be the set of paths generated by $\astrat$.
The set $\mfinpaths{\astrat}$ of finite paths generated by $\astrat$ is defined similarly.
Let $\mpaths{\astrat}(\astate) = \mpaths{\astrat} \cap \mpaths{\anmdp}(\astate)$
and $\mfinpaths{\astrat}(\astate) = \mfinpaths{\astrat} \cap \mfinpaths{\anmdp}(\astate)$.
Given a scheduler $\astrat \in \strats{\anmdp}$,
we can define a countably infinite-state MC $\adtmcparaone{\astrat}$
that corresponds to the behaviour of $\astrat$:
we let $\adtmcparaone{\astrat} = (\mfinpaths{\astrat},\dtrans)$,
where, for $\afinpath, \afinpath' \in \mfinpaths{\astrat}$,
we have $\dtrans(\afinpath, \afinpath') = \astrat(\afinpath)(\anaction) \cdot \mtrans(\last(\afinpath),\anaction)(\astate)$
if $\afinpath' = \afinpath \anaction \astate$ and $\anaction \in \mactionsof{\last(\afinpath)}$,
and $\dtrans(\afinpath, \afinpath') = 0$ otherwise.
For $\afinpath = \astate_0 \anaction_0 \astate_1 \anaction_1 \cdots \anaction_{n-1} \astate_n$,
we denote the $(i+1)$-th prefix of $\afinpath$ by $\afinpathpref{i}$,
i.e., $\afinpathpref{i} = \astate_0 \anaction_0 \astate_1 \anaction_1 \cdots \anaction_{i-1} \astate_i$,
for $i \leq n$
(note that $\afinpathpref{0} = \astate_0$).
Given $\astate \in \mstates$ and $\afinpath \in \mfinpaths{\astrat}$,
we let $\probfin{\asched}{\astate}(\afinpath)
= \dtrans(\afinpathpref{0},\afinpathpref{1}) \cdot \dots \cdot \dtrans(\afinpathpref{n-1},\afinpathpref{n})$ 
if $\astate = \afinpathpref{0}$,
and let $\probfin{\asched}{\astate}(\afinpath) = 0$ otherwise.
Let $\cyl(\afinpath) \subseteq \mpaths{\anmdp}(\astate)$ be the set of infinite paths starting in $\astate$
that have the finite path $\afinpath$ as a prefix.
Then we let $\mprob{\astrat}{\astate}$ be the unique probability measure over $\mpaths{\astrat}(\astate)$
such that $\mprob{\astrat}{\astate}(\cyl(\afinpath)) = \probfin{\asched}{\astate}(\afinpath)$
(for more details, see~\cite{BK08,FKNP11}).

Given a set $\targetset \subseteq \mstates$,
we define $\Diamond \targetset
= \{ \apath \in \mpaths{\anmdp} \setsep \exists i \in \Nset \ldot \apath(i) \in \targetset  \}$
as the set of infinite paths of $\anmdp$ such that some state of $\targetset$ is visited along the path.
Let $\astate \in \mstates$.
We define the
\emph{maximum probability of reaching $\targetset$ from $\astate$} as
$\maxval{\anmdp}{\targetset}{\astate}
=
\Sup{\astrat \in \strats{\anmdp}}~ \probl{\astrat}{\astate}(\Diamond \targetset)$.
Similarly,
the \emph{minimum probability of reaching $\targetset$ from $\astate$}
is defined as
$\minval{\anmdp}{\targetset}{\astate}
=
\Inf{\astrat \in \strats{\anmdp}}~ \probl{\astrat}{\astate}(\Diamond \targetset)$.
The \emph{maximal reachability problem}
for $\anmdp$, $\targetset \subseteq \mstates$, $\astate \in \mstates$,
$\unrhd \in \{ \geq, > \}$ and $\pthresh \in [0,1]$
is to decide whether $\maxval{\anmdp}{\targetset}{\astate} \unrhd \pthresh$.
Similarly, the \emph{minimal reachability problem}
for $\anmdp$, $\targetset \subseteq \ptsstates$, $\astate \in \mstates$,
$\unlhd \in \{ \leq, < \}$ and $\pthresh \in [0,1]$
is to decide whether $\minval{\anmdp}{\targetset}{\astate} \unlhd \pthresh$.
The maximal and minimal reachability problems are called \emph{quantitative} problems.
We also consider the following \emph{qualitative} problems:
($\forall 0$)
decide whether $\probl{\astrat}{\astate}(\Diamond \targetset)=0$ for all $\astrat \in \strats{\anmdp}$;
($\exists 0$)
decide whether there exists $\astrat \in \strats{\anmdp}$
such that $\probl{\astrat}{\astate}(\Diamond \targetset)=0$;
($\exists 1$)
decide whether there exists $\astrat \in \strats{\anmdp}$
such that $\probl{\astrat}{\astate}(\Diamond \targetset)=1$;
($\forall 1$)
decide whether $\probl{\astrat}{\astate}(\Diamond \targetset)=1$ for all $\astrat \in \strats{\anmdp}$.


\subsection{Interval Markov Chains.}
We let $\intervals$ denote the set of (open, half-open or closed) intervals
that are subsets of $[0,1]$
and that have rational-numbered endpoints.
Given an interval $\anint \in \intervals$,
we let $\lep{\anint}$ (respectively, $\rep{\anint}$) be the left (respectively, right) endpoint of $\anint$.
%

An \emph{interval distribution} over a finite set $Q$ is a function $\anintdist: Q \ra \intervals$
such that:
\begin{description}
\item[(1)]
$\sum_{q \in Q} \lep{\anintdist(q)} \leq 1
\leq \sum_{q \in Q} \rep{\anintdist(q)}$,
\item[(2a)]
$\sum_{q \in Q} \lep{\anintdist(q)} = 1$ implies that
$\anintdist(q)$ is left-closed for all $q \in Q$,
and
\item[(2b)]
$\sum_{q \in Q} \rep{\anintdist(q)} = 1$ implies that
$\anintdist(q)$ is right-closed for all $q \in Q$.
\end{description}
We define $\intdist(Q)$ as the set of interval distributions over $Q$.
An \emph{assignment for interval distribution $\anintdist$}
is a distribution $\ana \in \dist(Q)$ such that
$\ana(q) \in \anintdist(q)$ for each $q \in Q$.
Note that conditions (1), (2a) and (2b) in the definition of interval distributions
guarantee that there exists at least one assignment for each interval distribution.
Let $\ass{\anintdist}$ be the set of assignments for $\anintdist$.

An (open) \emph{interval Markov chain} (IMC) $\anoimc$ is a pair $(\ocstates,\octrans)$,
where $\ocstates$ is a finite set of \emph{states},
and $\octrans: \ocstates \times \ocstates \ra \intervals$
is a \emph{interval-based transition function}
such that $\octrans(\astate,\cdot)$ is an interval distribution for each $\astate \in \ocstates$
(formally, $\octrans(\astate,\cdot) \in \intdist(\ocstates)$).
An IMC makes a transition from a state $\astate \in \ocstates$ in two steps:
first an assignment $\ana$ 
is chosen from the set $\ass{\octrans(\astate,\cdot)}$ of assignments for $\octrans(\astate,\cdot)$,
then a probabilistic choice over target states is made according to $\ana$.
Let $\ass{\octrans} = \bigcup_{\astate \in \ocstates} \ass{\octrans(\astate,\cdot)}$
be the set of all assignments with respect to $\octrans$.
%
%
%
The semantics of an IMC corresponds to an MDP that has the same state space as the IMC
and the action set $\ass{\octrans}$,
and where selecting assignment $\ana \in \ass{\octrans(\astate,\cdot)}$ from state $\astate$
means that the target state of the transition is chosen probabilistically according to $\ana$.
%
%
%
Formally, the semantics of an IMC $\anoimc = (\ocstates,\octrans)$
is the MDP $\sem{\anoimc} = (\ocstates,\ass{\octrans},\mtrans)$,
where
(1)~$\mtrans(\astate,\ana) = \ana$ for all $\astate \in \ocstates$
and $\ana \in \ass{\octrans(\astate,\cdot)}$,
and (2) $\mtrans(\astate,\ana) = \noaction$ for all $\astate \in \ocstates$
and $\ana \in \ass{\octrans} \setminus \ass{\octrans(\astate,\cdot)}$.
In previous literature (for example,~\cite{SVA06,CSH08,CHK13}),
this semantics is called the ``IMDP semantics''.

Computing $\maxval{\sem{\anoimc}}{\targetset}{\astate}$
and $\minval{\sem{\anoimc}}{\targetset}{\astate}$
can be done for an IMC $\anoimc$ simply by transforming the IMC by closing all of its (half-)open intervals,
then employing a standard maximum/minimum reachability probability computation on the new, ``closed'' IMC
(for example, the algorithms of~\cite{SVA06,CHK13}):
the correctness of this approach is shown in~\cite{CK15}.
Algorithms for qualitative problems of IMCs (with open, half-open and closed intervals)
are given in~\cite{Spr18}.
All of the aforementioned algorithms run in polynomial time in the size of the IMC,
which is obtained as the sum over all states $\astate,\astate' \in \ocstates$
of the binary representation of the endpoints of $\octrans(\astate,\astate')$,
where rational numbers are encoded as the quotient of integers written in binary.

\subsection{Interval Markov Decision Processes.}
An (open) \emph{interval Markov decision process} (IMDP) $\anoimdp = (\omstates,\omactions,\omtrans)$
comprises a finite set $\omstates$ of states,
a finite set $\omactions$ of actions,
and an \emph{interval-based transition function}
$\omtrans: \omstates \times \omactions \ra \intdist(\omstates) \cup \set{ \noaction }$.
Let $\omactionsof{\astate} =
\set{ \anomact \in \omactions \setsep \omtrans(\astate,\anomact) \neq \noaction }$,
and assume that $\omactionsof{\astate} \neq \emptyset$ for each state $\astate \in \omstates$.
In contrast to IMCs,
an IMDP makes a transition from a state $\astate \in \omstates$ in \emph{three} steps:
(1)~an action $\anomact \in \omactionsof{\astate}$ is chosen,
then (2)~an assignment $\ana$ for 
$\omtrans(\astate,\anomact)$ is chosen,
and finally (3)~a probabilistic choice over target states to make the transition to is performed
according to $\ana$.
%
%
%
%
Formally, the semantics of an IMDP $\anoimdp = (\omstates,\omactions,\omtrans)$
is the MDP $\sem{\anoimdp} = (\omstates,\mactions,\mtrans)$
where $\mactions = \omactions \times \dist(\omstates)$,
and where
(1)~$\mtrans(\astate,(\anomact,\ana)) = \ana$
for all $\astate \in \omstates$, $\anomact \in \omactionsof{\astate}$
and $\ana \in \ass{\omtrans(\astate,\anomact)}$,
and
(2)~$\mtrans(\astate,(\anomact,\ana)) = \noaction$
for all $\astate \in \omstates$
for which either (a)~$\anomact \not\in \omactionsof{\astate}$
or (b)~$\anomact \in \omactionsof{\astate}$ and $\ana \not\in \ass{\omtrans(\astate,\anomact)}$.
%
%
%
%
%
Note that
we adopt a \emph{cooperative} resolution of nondeterminism for IMDPs
(as in, for example,~\cite{PLSS13,HHK14,HM18}),
in which the choice of action and assignment (steps~(1) and (2) above)
is combined into a \emph{single} nondeterministic choice in the semantic MDP\@.

Given the cooperative nondeterminism for IMDPs,
we can show that, given an IMDP, an IMC can be constructed in polynomial time
such that the maximal and minimal reachability probabilities for the IMDP and the constructed IMC coincide,
and furthermore qualitative properties agree on the IMDP and the constructed IMC\@.
Formally, given the IMDP $\anoimdp = (\omstates,\omactions,\omtrans)$,
we construct an IMC $\oimcfrom{\anoimdp} = (\cfstates,\cftrans)$ in the following way:
\begin{itemize}
\item
the set of states is defined as $\cfstates =\omstates \cup (\omstacts)$,
where $\omstacts =
\bigcup_{\astate \in \omstates}
\set{ (\astate,\anomact) \in \omstates \times \omactions \setsep \anomact \in \omactionsof{\astate} }$;
\item
for $\astate \in \omstates$ and $\anomact \in \omactionsof{\astate}$,
let $\cftrans(\astate,(\astate,\anomact)) = [0,1]$,
and let $\cftrans((\astate,\anomact),\cdot) = \omtrans(\astate,\anomact)$.
\end{itemize}
%


\begin{figure}[t]
\centering
%

\centering
\tiny

\begin{tikzpicture}[->,>=stealth',shorten >=1pt,auto, thin] 
  \tikzstyle{state}=[draw=black, text=black, shape=circle, inner sep=3pt, outer sep=0pt, circle, rounded corners] 
  \tikzstyle{big_state}=[draw=black, text=black, shape=rectangle, inner sep=3pt, outer sep=0pt, rectangle, rounded corners] 
  \tikzstyle{final_state}=[draw=black, text=black, shape=circle, inner sep=3pt, outer sep=0pt, circle, rounded corners] 



	\node[state](sl){$\astate$};

	\node[node distance=1.5cm] (inv_mid)[right of=sl]{};

	\node[fill=black, node distance=1.5cm] (nail_top) [above of=inv_mid] {};
	\node[xshift=7mm,left of = nail_top](a_nail_top){$\anomact_1$};

	\node[fill=black, node distance=1.5cm] (nail_bottom) [below of=inv_mid] {};
	\node[xshift=7mm,left of = nail_bottom](a_nail_bottom){$\anomact_2$};

	\node[state, node distance=1.5cm](st)[right of=nail_top]{$\astate_1$};
	\node[state, node distance=1.5cm](sm)[right of=inv_mid]{$\astate_2$};
	\node[state, node distance=1.5cm](sb)[right of=nail_bottom]{$\astate_3$};


	\node[state, node distance=3cm](slc)[right of=sm]{$\astate$};

	\node[node distance=1.5cm] (inv_midc)[right of=slc]{};

	\node[big_state, node distance=1.5cm] (state_top) [above of=inv_midc] {$(\astate,\anomact_1)$};
	\node[big_state, node distance=1.5cm] (state_bottom) [below of=inv_midc] {$(\astate,\anomact_2)$};

	\node[state, node distance=1.5cm](stc)[right of=state_top]{$\astate_1$};
	\node[state, node distance=1.5cm](smc)[right of=inv_midc]{$\astate_2$};
	\node[state, node distance=1.5cm](sbc)[right of=state_bottom]{$\astate_3$};


	\path[rounded corners]

	(sl)
		edge (nail_top)
		edge (nail_bottom)

	(nail_top)
		edge [above] node {$(\frac{1}{4},\frac{2}{3}]$} (st)
		edge [right] node {$[\frac{1}{3},\frac{3}{4})$} (sm)

	(nail_bottom)
		edge [right] node {$[\frac{1}{2},\frac{1}{2}]$} (sm)
		edge [below] node {$[\frac{1}{2},\frac{1}{2}]$} (sb)


	(slc)
		edge [left] node {$[0,1]$} (state_top)
		edge [left] node {$[0,1]$} (state_bottom)

	(state_top)
		edge [above] node {$(\frac{1}{4},\frac{2}{3}]$} (stc)
		edge [right] node {$[\frac{1}{3},\frac{3}{4})$} (smc)

	(state_bottom)
		edge [right] node {$[\frac{1}{2},\frac{1}{2}]$} (smc)
		edge [below] node {$[\frac{1}{2},\frac{1}{2}]$} (sbc)
;

\end{tikzpicture}


\caption{Example fragments of IMDP $\anoimdp$ (left) and the constructed IMC $\oimcfrom{\anoimdp}$ (right).}%
\label{fig:IMDP_IMC_example}

\end{figure}


\begin{exa}
In \figref{fig:IMDP_IMC_example}
we illustrate a fragment of an IMDP (left),
and give the corresponding fragment of the constructed IMC $\oimcfrom{\anoimdp}$ (right).
Note that the nondeterministic choice
between actions $\anomact_1$ and $\anomact_2$
from state $\astate$ of $\anoimdp$
is replaced in $\oimcfrom{\anoimdp}$ by a choice between states $(\astate,\anomact_1)$ and $(\astate,\anomact_2)$;
more precisely, the choice of a scheduler of $\anoimdp$ from a path ending in state $\astate$
as to probabilities to assign to actions $\anomact_1$ and $\anomact_2$
is reflected by the choice of assignment made in $\oimcfrom{\anoimdp}$
over states $(\astate,\anomact_1)$ and $(\astate,\anomact_2)$
(note that this latter choice is unconstrained by the interval $[0,1]$ used for these transitions).
\end{exa}

The following proposition states the correctness of the construction of $\oimcfrom{\anoimdp}$
with respect to quantitative and qualitative problems.
%
\begin{prop}\label{prop:imdp2imc_props}
Let $\anoimdp = (\omstates,\omactions,\omtrans)$ be an IMDP,
and let $\astate \in \omstates$, $\targetset \subseteq \omstates$ and $\lambda \in \set{ 0,1 }$.
Then:
\begin{itemize}
\item
$\maxval{\sem{\anoimdp}}{\targetset}{\astate}
= \maxval{\sem{\oimcfrom{\anoimdp}}}{\targetset}{\astate}$
and
$\minval{\sem{\anoimdp}}{\targetset}{\astate}
= \minval{\sem{\oimcfrom{\anoimdp}}}{\targetset}{\astate}$;
\item
there exists $\astrat \in \strats{\sem{\anoimdp}}$
such that $\probl{\astrat}{\astate}(\Diamond \targetset)=\lambda$
if and only if there exists $\astrat' \in \strats{\sem{\oimcfrom{\anoimdp}}}$
such that $\probl{\astrat'}{\astate}(\Diamond \targetset)=\lambda$; 
\item
$\probl{\astrat}{\astate}(\Diamond \targetset)=\lambda$
for all $\astrat \in \strats{\sem{\anoimdp}}$
if and only if
$\probl{\astrat'}{\astate}(\Diamond \targetset)=\lambda$
for all $\astrat' \in \strats{\sem{\oimcfrom{\anoimdp}}}$.
\end{itemize}
\end{prop}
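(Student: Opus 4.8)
\emph{Plan.} It suffices to set up a value-preserving correspondence between the schedulers of the two semantic MDPs: for every $\astrat \in \strats{\sem{\anoimdp}}$ a scheduler $\astrat'$ of $\sem{\oimcfrom{\anoimdp}}$ with $\probl{\astrat'}{\astate}(\Diamond\targetset) = \probl{\astrat}{\astate}(\Diamond\targetset)$, and symmetrically. From this the two scheduler sets realise the same set of values $\probl{\cdot}{\astate}(\Diamond\targetset)$, and each of the three items is determined by that set alone (its supremum, its infimum, and whether $\lambda$ lies in it or equals it). The construction of $\oimcfrom{\anoimdp}$ makes $\sem{\oimcfrom{\anoimdp}}$ bipartite: from a state $\astate \in \omstates$ an assignment for $\cftrans(\astate,\cdot)$ is just an arbitrary distribution over $\set{(\astate,\anomact) \setsep \anomact \in \omactionsof{\astate}}$, so a move from $\astate$ picks a successor $(\astate,\anomact) \in \omstacts$, while a move from $(\astate,\anomact)$ is governed by an assignment for $\omtrans(\astate,\anomact)$, exactly as a move of $\sem{\anoimdp}$ out of $\astate$ after committing to action $\anomact$. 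Hence every path from an $\omstates$-state alternates between $\omstates$ and $\omstacts$, and --- since $\targetset \subseteq \omstates$ --- reaching $\targetset$ in $\sem{\oimcfrom{\anoimdp}}$ coincides with reaching $\targetset$ along the subsequence of its $\omstates$-states, which is itself a path of $\sem{\anoimdp}$. Splitting the single cooperative choice $(\anomact,\ana)$ of $\sem{\anoimdp}$ into these two consecutive choices is precisely the equivalence the construction targets; the fact that makes it sound is the convexity of every assignment set $\ass{\anintdist}$ (intervals are convex, so a convex combination of assignments is an assignment), so that averaging assignments never leaves the available action set.

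\emph{From $\sem{\anoimdp}$ to $\sem{\oimcfrom{\anoimdp}}$.} Let $\Lambda$ send a finite path $\astate_0\,\zeta_0\,(\astate_0,\anomact_0)\,\ana_0\,\astate_1 \cdots \astate_n$ of $\sem{\oimcfrom{\anoimdp}}$ ending in an $\omstates$-state to the path $\astate_0\,(\anomact_0,\ana_0)\,\astate_1 \cdots \astate_n$ of $\sem{\anoimdp}$ obtained by deleting the $\omstacts$-states and the assignment labels. Given $\astrat$, I would define $\astrat'$ thus: on a path $\rho$ ending in $\astate$, put $d = \astrat(\Lambda(\rho))$ and have $\astrat'(\rho)$ choose deterministically the assignment $\zeta$ with $\zeta((\astate,\anomact)) = \sum_{\ana} d(\anomact,\ana)$; on a path $\rho\,\zeta\,(\astate,\anomact)$ where $\zeta$ is the assignment just prescribed for $\rho$ (and $d = \astrat(\Lambda(\rho))$ as above), have $\astrat'$ choose the assignment $\ana$ for $\omtrans(\astate,\anomact)$ with probability $d(\anomact,\ana)/\zeta((\astate,\anomact))$; elsewhere $\astrat'$ is arbitrary. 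By the footnote's finiteness assumption $d$ has finite support, so these are genuine distributions and $\astrat' \in \strats{\sem{\oimcfrom{\anoimdp}}}$. An induction on length then gives $\probl{\astrat'}{\astate}(\cyl(\rho)) = \probl{\astrat}{\astate}(\cyl(\Lambda(\rho)))$ for every $\rho$ generated by $\astrat'$ (and $\probl{\astrat'}{\astate}(\cyl(\rho)) = 0$ otherwise), while $\Lambda$ restricts to a prefix-preserving bijection from the $\astrat'$-generated paths onto the $\astrat$-generated paths. Consequently the pushforward of $\probl{\astrat'}{\astate}$ under (the extension of) $\Lambda$ agrees with $\probl{\astrat}{\astate}$ on cylinders, hence on all measurable sets, and since an infinite path of $\sem{\oimcfrom{\anoimdp}}$ reaches $\targetset$ exactly when its $\Lambda$-image does, we obtain $\probl{\astrat'}{\astate}(\Diamond\targetset) = \probl{\astrat}{\astate}(\Diamond\targetset)$.

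\emph{From $\sem{\oimcfrom{\anoimdp}}$ to $\sem{\anoimdp}$, and the main obstacle.} For the converse I would first reduce to schedulers $\astrat'$ that select a single assignment at every finite path: whenever $\astrat'$ randomises over assignments at a path, replace that distribution by its (finite) average assignment --- still available, by convexity --- and at each successor continue with the posterior-weighted average of the sub-schedulers $\astrat'$ would have used in the various branches; carried out coinductively this produces an assignment-deterministic scheduler with the same induced distribution over sequences of states, hence the same value of $\probl{\cdot}{\astate}(\Diamond\targetset)$. For such an $\astrat'$ the interspersed assignments, and thus the entire $\sem{\oimcfrom{\anoimdp}}$-path, are determined by the corresponding $\sem{\anoimdp}$-path $r$ (the restriction of $\Lambda$ to a prefix does not depend on the $\zeta_i$), so one can define $\astrat(r)$ to pick $(\anomact,\ana)$ with probability $\zeta((\astate,\anomact))$, where $\zeta$ is $\astrat'$'s choice after the $\sem{\oimcfrom{\anoimdp}}$-path determined by $r$ and $\ana$ is its choice one step later at $(\astate,\anomact)$; running the cylinder-and-measure argument of the previous paragraph through the inverse of $\Lambda$ then gives $\probl{\astrat}{\astate}(\Diamond\targetset) = \probl{\astrat'}{\astate}(\Diamond\targetset)$. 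Together the two directions establish the equalities of the maximal and minimal reachability probabilities and both qualitative equivalences. I expect the main obstacle to be the determinisation step --- making the posterior-weighted averaging precise and verifying that it preserves the induced distribution over state sequences, not merely the optimal value --- together with the measure-theoretic bookkeeping needed to lift finite-prefix equalities to the reachability probabilities (measurability of the path maps, their bijectivity on generated paths, and their mapping cylinders to cylinders); the behaviour on unreachable finite paths, where the schedulers are set arbitrarily, is a routine side case.
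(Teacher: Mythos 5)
Your proposal is correct and follows essentially the same route as the paper: two mimicking-scheduler lemmata (one per direction), an induction on finite-path probabilities via the path-projection map between the two semantic MDPs, and closure of assignment sets under convex combinations to justify the determinisation assumptions. The only inessential difference is in the IMDP-to-IMC direction, where you have the scheduler deterministically pick the aggregated assignment over $\omstacts$ and randomise over assignments at the second step, whereas the paper randomises over Dirac assignments first and is deterministic second; your variant is precisely the alternative construction the paper mentions in a footnote but does not use.
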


\noindent
In order to show \propref{prop:imdp2imc_props},
we consider two lemmata
that show that a scheduler of $\anoimdp$
can be matched by a ``mimicking'' scheduler of $\oimcfrom{\anoimdp}$ (\lemref{lem:imdp2imc}),
and vice versa (\lemref{lem:imc2imdp}),
such that the probabilities of reaching a set of target states for the two schedulers coincide.
Together, these lemmata suffice to establish \propref{prop:imdp2imc_props}.

\begin{lem}\label{lem:imdp2imc}
Let $\anoimdp = (\omstates,\omactions,\omtrans)$ be an IMDP,
$\astate \in \omstates$ and $\targetset \subseteq \omstates$.
Then, for each $\asched \in \strats{\sem{\anoimdp}}$,
there exists $\ansched \in \strats{\sem{\oimcfrom{\anoimdp}}}$
such that
$\probl{\asched}{\astate}(\Diamond \targetset) = \probl{\ansched}{\astate}(\Diamond \targetset)$.
\end{lem}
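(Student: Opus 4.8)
The plan is to transform a given scheduler $\asched \in \strats{\sem{\anoimdp}}$ into a ``mimicking'' scheduler $\ansched \in \strats{\sem{\oimcfrom{\anoimdp}}}$ and then to match the two induced Markov chains path by path. The guiding structural fact is that in $\sem{\oimcfrom{\anoimdp}}$ every run issuing from a state of $\omstates$ alternates strictly between $\omstates$ and $\omstacts$: from $\astate \in \omstates$, $\cftrans(\astate,\cdot)$ is supported on $\set{(\astate,\anomact) \setsep \anomact \in \omactionsof{\astate}} \subseteq \omstacts$, and from $(\astate,\anomact) \in \omstacts$ we have $\cftrans((\astate,\anomact),\cdot) = \omtrans(\astate,\anomact) \in \intdist(\omstates)$. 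Consequently a single transition of $\sem{\anoimdp}$ out of $\astate$ --- which selects an action--assignment pair $(\anomact,\ana)$ and then a successor $\astate'$ with probability $\ana(\astate')$ --- is to be simulated by two transitions of $\sem{\oimcfrom{\anoimdp}}$: one from $\astate$ to $(\astate,\anomact)$, where the assignment used ranges over $\omstacts$ and is completely unconstrained because every interval $\cftrans(\astate,(\astate,\anomact))$ equals $[0,1]$; and one from $(\astate,\anomact)$ to $\astate'$, where the assignment used must lie in $\ass{\omtrans(\astate,\anomact)}$. Note also that $\targetset \subseteq \omstates$, so no state of $\omstacts$ lies in $\targetset$.

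I would define $\ansched$ by induction on path length, simultaneously with a projection $\Lambda$ sending each finite path $\abfinpath$ of $\sem{\oimcfrom{\anoimdp}}$ that is generated by $\ansched$ and ends in an $\omstates$-state to a finite path $\afinpath$ of $\sem{\anoimdp}$ generated by $\asched$ with the same final state. Given such $\abfinpath$ ending in $\astate$, with $\afinpath = \Lambda(\abfinpath)$: by the finite-support convention on schedulers (the footnote following their definition), $\asched(\afinpath)$ has finite support; let $p_{\anomact}$ denote the total probability it places on actions of the form $(\anomact,\ana)$. Since $\asched(\afinpath) \in \dist(\mactionsof{\astate})$, the numbers $p_{\anomact}$ for $\anomact \in \omactionsof{\astate}$ are nonnegative and sum to $1$, so setting $\anab((\astate,\anomact)) = p_{\anomact}$ yields an assignment $\anab \in \ass{\cftrans(\astate,\cdot)}$; put $\ansched(\abfinpath) = \dirac{\anab}$. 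For the one-step extension $\abfinpath\,\anab\,(\astate,\anomact)$: if $p_{\anomact} > 0$, let $\ansched(\abfinpath\,\anab\,(\astate,\anomact))$ give probability $\asched(\afinpath)(\anomact,\ana)/p_{\anomact}$ to each $\ana \in \ass{\omtrans(\astate,\anomact)} = \ass{\cftrans((\astate,\anomact),\cdot)}$ with $(\anomact,\ana) \in \support(\asched(\afinpath))$; if $p_{\anomact} = 0$, let it be a Dirac distribution on an arbitrary element of the nonempty set $\ass{\omtrans(\astate,\anomact)}$. Finally set $\Lambda(\abfinpath\,\anab\,(\astate,\anomact)\,\ana\,\astate') = \afinpath\,(\anomact,\ana)\,\astate'$. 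On every finite path of $\sem{\oimcfrom{\anoimdp}}$ not arising in this way, define $\ansched$ to be a Dirac distribution on an arbitrary available assignment; this also shows $\ansched$ is finitely supported everywhere, consistently with the same convention.

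The correctness check is the identity that, when the two-step segment from $\astate$ via $(\astate,\anomact)$ to $\astate'$ is appended to $\abfinpath$, the product of the two transition probabilities in the Markov chain induced by $\ansched$ is $\anab((\astate,\anomact)) \cdot (\asched(\afinpath)(\anomact,\ana)/p_{\anomact}) \cdot \ana(\astate') = p_{\anomact} \cdot (\asched(\afinpath)(\anomact,\ana)/p_{\anomact}) \cdot \ana(\astate') = \asched(\afinpath)(\anomact,\ana) \cdot \ana(\astate')$, which is exactly the one-step probability in the Markov chain induced by $\asched$ of extending $\afinpath$ by $(\anomact,\ana)\,\astate'$ (recalling $\mtrans(\astate,(\anomact,\ana)) = \ana$). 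An induction on length then yields that $\Lambda$ is a bijection from the finite paths of $\sem{\oimcfrom{\anoimdp}}$ generated by $\ansched$ from $\astate$ and ending in $\omstates$ onto the finite paths of $\sem{\anoimdp}$ generated by $\asched$ from $\astate$, with $\probfin{\ansched}{\astate}(\abfinpath) = \probfin{\asched}{\astate}(\Lambda(\abfinpath))$ for all such $\abfinpath$. Lifting $\Lambda$ to infinite paths and to cylinder sets, and using the strict $\omstates$/$\omstacts$ alternation --- so that an infinite run of $\sem{\oimcfrom{\anoimdp}}$ from $\astate$ meets $\targetset$ if and only if its projection does --- I obtain $\probl{\ansched}{\astate}(\Diamond \targetset) = \probl{\asched}{\astate}(\Diamond \targetset)$, as required.

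The hard part here is not conceptual but bookkeeping: making $\ansched$ a genuine total function on finite paths even though it is pinned down only along the paths it generates, treating the $p_{\anomact} = 0$ branches cleanly, verifying that $\Lambda$ is well defined (which rests on $\ansched$ being deterministic at $\omstates$-states, so that the projection of a prefix is determined), and promoting the finite-path bijection to an equality of the two cylinder-based probability measures. The finite-support assumption on $\asched$ (from the cited lemma of Hahn) is precisely what makes the decomposition of $\asched(\afinpath)$ into a marginal over actions $\anomact$ and conditionals over assignments $\ana$ unambiguous; beyond this I expect no step of real difficulty, the crux being the ``two steps simulate one step'' observation of the first paragraph.
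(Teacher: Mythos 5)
Your proposal is correct, and it follows the same overall strategy as the paper: build a mimicking scheduler so that one transition of $\sem{\anoimdp}$ is simulated by two transitions of $\sem{\oimcfrom{\anoimdp}}$, relate finite paths via a projection that forgets the first-step assignments, prove equality of finite-path probabilities by induction on length, and lift to $\Diamond \targetset$ using the $\omstates$/$\omstacts$ alternation. The constructions of $\ansched$ differ in a small but genuine way, essentially dual to one another. The paper first invokes a w.l.o.g.\ reduction (closure of $\ass{\omtrans(\astate,\anomact)}$ under convex combinations) so that $\asched$ uses at most one assignment per action, then lets $\ansched$ \emph{randomize} over Dirac assignments $\anabof{\astate}{\anomact}$ in the first step and choose the unique assignment $\pol{\cdot}{\anomact}$ deterministically in the second step. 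You instead let $\ansched$ deterministically pick the single ``marginal'' assignment $\anab$ with $\anab(\astate,\anomact)=p_{\anomact}$ in the first step and then randomize according to the conditional $\asched(\afinpath)(\anomact,\ana)/p_{\anomact}$ in the second step; this dispenses with the convexity-based w.l.o.g.\ entirely, at the price of handling the $p_{\anomact}=0$ branches and arguing injectivity of the projection via determinism at $\omstates$-states. The paper's own footnote mentions (but does not use) exactly your first-step choice, so your route can be seen as completing that alternative; both yield the same two-step product identity $p_{\anomact}\cdot\bigl(\asched(\afinpath)(\anomact,\ana)/p_{\anomact}\bigr)\cdot\ana(\astate') = \asched(\afinpath)(\anomact,\ana)\cdot\ana(\astate')$ that drives the induction.
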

\begin{proof}
The proof of the lemma consists of constructing $\ansched \in \strats{\sem{\oimcfrom{\anoimdp}}}$
based on $\asched$,
and showing that
$\probl{\asched}{\astate}(\Diamond \targetset) = \probl{\ansched}{\astate}(\Diamond \targetset)$.
Intuitively, the construction of scheduler $\ansched$
proceeds in the following manner:
each choice made by $\asched$ (which we recall, from the definition of $\sem{\anoimdp}$,
is a distribution over $\anoimdp$-action/assignment pairs)
is mimicked by a sequence of \emph{two} choices made by $\ansched$,
the first of which mimics the choice of $\asched$ over $\anoimdp$-actions,
and where the second mimics the choice of $\asched$ over assignments.
The proof consists of three parts:
first, we recall some useful facts and introduce some notation;
second, we present the construction of $\ansched$;
finally, we show that
$\probl{\asched}{\astate}(\Diamond \targetset) = \probl{\ansched}{\astate}(\Diamond \targetset)$.

\sloppypar{
Let $\anoimdp = (\omstates,\omactions,\omtrans)$.
To avoid ambiguity, we henceforth refer to elements of $\omactions$ as $\anoimdp$-actions.
Recall that
$\sem{\anoimdp} = (\omstates,\mactions,\mtrans_\sem{\anoimdp})$
where
\[\mactionsof{\astate} =
\set{ (\anomact,\ana) \in \omactions \times \dist(\omstates) \setsep \anomact \in \omactionsof{\astate} \mbox{ and } \ana \in \ass{\omtrans(\astate,\anomact)} }\]
and $\mtrans_\sem{\anoimdp}(\astate,(\anomact,\ana)) = \ana$
for each $\astate \in \omstates$ and $(\anomact,\ana) \in \mactionsof{\astate}$.
Note that an action $(\anomact,\ana) \in \mactions$ of $\sem{\anoimdp}$
comprises an $\omactions$-action $\anomact$
and an assignment $\ana$ from $\dist(\omstates)$.
We can observe that finite paths of $\sem{\anoimdp}$ (i.e., elements of $\mfinpaths{\sem{\anoimdp}}$)
have the form
$\astate_0 (\anomact_0,\ana_0) \astate_1 (\anomact_1,\ana_1) \cdots (\anomact_{n-1},\ana_{n-1}) \astate_n$,
where $\astate_i \in \omstates$ for all $0 \leq i \leq n$,
and $(\anomact_j,\ana_j) \in \omactions \times \dist(\omstates)$ for all $0 \leq i < n$.
}

Recall that $\oimcfrom{\anoimdp} = (\omstates \cup (\omstacts),\cftrans)$,
where $\cftrans(\astate,(\astate,\anomact)) = [0,1]$
and $\cftrans((\astate,\anomact),\cdot) = \omtrans(\astate,\anomact)$
for $\astate \in \omstates$ and $\anomact \in \omactionsof{\astate}$,
i.e., $\oimcfrom{\anoimdp}$ alternates between states from $\omstates$ and $\omstacts$,
where the probability of transitions from $\omstates$ to $\omstacts$ are unconstrained,
and the probability of transitions from $\omstacts$ to $\omstates$ are constrained by
intervals defined by $\omtrans$.
Finally, we also recall that
$\sem{\oimcfrom{\anoimdp}} = (\omstates \cup (\omstacts),\ass{\cftrans},\mtrans_\sem{\oimcfrom{\anoimdp}})$,
where $\ass{\cftrans} = \bigcup_{\tilde{\astate} \in \omstates \cup (\omstacts)} \ass{\cftrans(\tilde{\astate},\cdot)}$
and $\mtrans_\sem{\oimcfrom{\anoimdp}}(\tilde{\astate},\ana) = \ana$ for all $\tilde{\astate} \in \omstates \cup (\omstacts)$
and $\ana \in \ass{\cftrans(\tilde{\astate},\cdot)}$.
Note that the actions of $\sem{\oimcfrom{\anoimdp}}$, i.e., elements of the set $\ass{\cftrans}$,
are themselves the assignments
that are used to determine the next state:
given the aforementioned alternation between $\omstates$ and $\omstacts$ in $\oimcfrom{\anoimdp}$,
from states in $\omstates$, only assignments in $\dist(\omstacts)$ are available;
similarly, from states in $\omstacts$, only assignments in $\dist(\omstates)$ are available.
We partition $\mfinpaths{\sem{\oimcfrom{\anoimdp}}}$
into two sets, on the basis of whether the final state of a path is in $\omstates$ or $\omstacts$:
let $\mfinpathsendsin{\sem{\oimcfrom{\anoimdp}}}{\omstates}
\subseteq \mfinpaths{\sem{\oimcfrom{\anoimdp}}}$
be the set of finite paths 
of the form
$\astate_0 \anab_0 (\astate_0,\anomact_0) \anac_0
\astate_1 \anab_1 (\astate_1,\anomact_1) \anac_1  \cdots
\astate_{n-1} \anab_{n-1} (\astate_{n-1},\anomact_{n-1}) \anac_{n-1}
\astate_n$,
and let $\mfinpathsendsin{\sem{\oimcfrom{\anoimdp}}}{\omstacts}
\subseteq \mfinpaths{\sem{\oimcfrom{\anoimdp}}}$
be the set of finite paths 
of the form
$\astate_0 \anab_0 (\astate_0,\anomact_0) \anac_0
\astate_1 \anab_1 (\astate_1,\anomact_1) \anac_1  \cdots
\astate_{m} \anab_{m} (\astate_{m},\anomact_{m})$.
Note that, in the context of paths,
notation such as $\astate_i \anab_i (\astate_i,\anomact_i)$
refers to a transition from state $\astate_i$ to state $(\astate_i,\anomact_i)$ with assignment $\anab_i$
(recall that the actions of $\sem{\oimcfrom{\anoimdp}}$ are assignments).

Let $\astate \in \omstates$, $\targetset \subseteq \omstates$ and $\asched \in \strats{\sem{\anoimdp}}$.
We now proceed to construct $\ansched \in \strats{\sem{\oimcfrom{\anoimdp}}}$
based on $\asched$.
To describe formally the construction of $\ansched$,
we first introduce the function
$\cmTOm: \mfinpathsendsin{\sem{\oimcfrom{\anoimdp}}}{\omstates}(\astate) \ra \mfinpaths{\sem{\anoimdp}}(\astate)$
that associates, for each finite path in $\mfinpathsendsin{\sem{\oimcfrom{\anoimdp}}}{\omstates}(\astate)$,
a finite path in $\mfinpaths{\sem{\anoimdp}}(\astate)$
that (1)~visits the same states of $\omstates$,
and (2)~features the same actions from $\omactions$ and assignments from $\dist(\omstates)$.
Formally, for $\afincmpath \in \mfinpathsendsin{\sem{\oimcfrom{\anoimdp}}}{\omstates}(\astate)$
such that $\afincmpath =
\astate_0 \anab_0 (\astate_0,\anomact_0) \anac_0
\astate_1 \anab_1 (\astate_1,\anomact_1) \anac_1  \cdots
\astate_{n-1} \anab_{n-1} (\astate_{n-1},\anomact_{n-1}) \anac_{n-1}
\astate_n$,
we let $\cmTOm(\afincmpath) =
\astate_0 (\anomact_0,\anac_0) \astate_1 (\anomact_1,\anac_1) \cdots (\anomact_{n-1},\anac_{n-1}) \astate_n$.
That is, $\cmTOm(\afincmpath)$ retains fully the subsequence $\astate_0 \astate_1 \cdots \astate_n$
of states from $\omstates$,
the subsequence $\anomact_0 \anomact_1 \cdots \anomact_{n-1}$ of actions of $\anoimdp$ from $\omactions$, and
the subsequence $\anac_0 \anac_1 \cdots \anac_{n-1}$ of assignments from $\dist(\omstates)$,
but does not retain $\anab_0 \anab_1 \cdots \anab_{n-1}$ of assignments from $\dist(\omstacts)$.
We note that $\cmTOm(\afincmpath)$ is an element of $\mfinpaths{\sem{\anoimdp}}(\astate)$ by the following reasoning:
for each $0 \leq i < n$,
we require that (a)~$\anomact_i \in \omactionsof{\astate_i}$
and (b)~$\anac_i \in \ass{\omtrans(\astate_i,\anomact_i)}$;
both (a) and (b) follow from the definition of $\oimcfrom{\anoimdp}$,
with (a) following because $(\astate_i,\anomact_i) \in \omstacts$ implies that $\anomact_i \in \omactionsof{\astate_i}$,
and (b) following because $\cftrans((\astate_i,\anomact_i),\cdot) = \omtrans(\astate_i,\anomact_i)$,
and hence $\anac_i \in \ass{\cftrans((\astate_i,\anomact_i),\cdot)}$
implies that $\anac_i \in \ass{\omtrans(\astate_i,\anomact_i)}$.
Consider finite path $\afincmpath \in \mfinpathsendsin{\sem{\oimcfrom{\anoimdp}}}{\omstates}(\astate)$.
We define the choice of $\ansched$ after $\afincmpath$
by mimicking the choice of $\asched$ after $\cmTOm(\afincmpath)$.
Recall that $\asched(\cmTOm(\afincmpath)) \in \dist(\mactions)$,
and that $\mactions \subseteq \omactions \times \dist(\omstates)$,
i.e., $\support(\asched(\cmTOm(\afincmpath)))$ contains pairs of the form $(\anomact,\ana)$
for $\anomact \in \mactions$ and $\ana \in \dist(\omstates)$.
We assume w.l.o.g.\ that, for $\anomact \in \omactionsof{\last(\afincmpath)}$,
there exists at most one $\ana \in \ass{\omtrans(\last(\afincmpath),\anomact)}$
such that $(\anomact,\ana) \in \support(\asched(\cmTOm(\afincmpath)))$.
The assumption can be made w.l.o.g.\ because
the set of assignments $\ass{\omtrans(\last(\afincmpath),\anomact)}$
is closed under convex combinations
(as noted in, for example,~\cite{HHK14}).
In the following, for $\anomact \in \omactionsof{\last(\afincmpath)}$
such that $(\anomact,\ana) \in \support(\asched(\cmTOm(\afincmpath)))$
for some $\ana \in \dist(\omstates)$,
we let $\pol{\afincmpath}{\anomact} = \ana$,
i.e., $\pol{\afincmpath}{\anomact}$ denotes the unique assignment
such that $(\anomact,\pol{\afincmpath}{\anomact}) \in \support(\asched(\cmTOm(\afincmpath)))$.
For each $\anomact \in \omactionsof{\last(\afincmpath)}$,
let $\anabof{\last(\afincmpath)}{\anomact}: \omstacts \ra \set{0,1}$ be the function
such that, for each $(\astate',\anomact') \in \omstacts$,
we have $\anabof{\last(\afincmpath)}{\anomact}(\astate',\anomact') = 1$
if $(\astate',\anomact') = (\last(\afincmpath),\anomact)$
and $\anabof{\last(\afincmpath)}{\anomact}(\astate',\anomact') = 0$ otherwise.
Note that $\anabof{\last(\afincmpath)}{\anomact} \in \ass{\cftrans(\last(\afincmpath),\cdot)}$,
i.e., $\anabof{\last(\afincmpath)}{\anomact}$ is an assignment for $\cftrans(\last(\afincmpath),\cdot)$
(recall that, for any $\anomact \in \omactionsof{\last(\afincmpath)}$,
we have $\cftrans(\astate,(\last(\afincmpath),\anomact)) = [0,1]$).
We then let $\ansched(\afincmpath)$ be such that
$\ansched(\afincmpath)(\anabof{\last(\afincmpath)}{\anomact})
= \asched(\cmTOm(\afincmpath))(\anomact,\pol{\afincmpath}{\anomact})$
for each $\anomact \in \omactionsof{\last(\afincmpath)}$ such that
$(\anomact,\pol{\afincmpath}{\anomact}) \in \support(\asched(\cmTOm(\afincmpath)))$.\footnote{We
note that the scheduler $\ansched(\afincmpath)$ can be defined in an alternative way, as follows.
Define $\anab: \omstacts \ra [0,1]$ to be the function
such that $\anab(\last(\afincmpath),\anomact) =
\asched(\cmTOm(\afincmpath))(\anomact,\pol{\afincmpath}{\anomact})$
for each $\anomact \in \omactionsof{\last(\afincmpath)}$ such that
$(\anomact,\pol{\afincmpath}{\anomact}) \in \support(\asched(\cmTOm(\afincmpath)))$,
and let $\ansched(\afincmpath)(\anab) = 1$,
i.e., the probabilistic choice of the distribution $\asched(\cmTOm(\afincmpath))$
is encoded in $\anab$, which is chosen by $\ansched(\afincmpath)$ with probability $1$.
We do not use this alternative definition of $\ansched$.}
Now consider finite path
$\afincmpath' \in \mfinpathsendsin{\sem{\oimcfrom{\anoimdp}}}{\omstacts}(\astate)$,
where $\afincmpath' = \afincmpath \anab (\last(\afincmpath),\anomact)$ for some
$\afincmpath \in \mfinpathsendsin{\sem{\oimcfrom{\anoimdp}}}{\omstates}(\astate)$,
$\anab \in \ass{\cftrans(\last(\afincmpath),\cdot)}$ and
$(\last(\afincmpath),\anomact) \in \omstacts$
(note that we use the notation $\afincmpath \anab (\last(\afincmpath),\anomact)$
to denote the finite path with prefix $\afincmpath$
and suffix $\last(\afincmpath) \anab (\last(\afincmpath),\anomact)$;
we use similar notation throughout this and subsequent proofs).
We define the choice of $\ansched$ after $\afincmpath'$ as follows.
Consider the case in which $(\anomact,\pol{\afincmpath}{\anomact}) \in \support(\asched(\cmTOm(\afincmpath)))$;
then we let $\ansched(\afincmpath')$ be such that $\ansched(\afincmpath')(\pol{\afincmpath}{\anomact}) = 1$.
Instead, in the case in which there does not exist any
$(\anomact,\ana) \in \support(\asched(\cmTOm(\afincmpath)))$,
we let $\ansched(\afincmpath')$ be an arbitrary distribution.
We now proceed to the third part of the proof,
which consists in showing that
$\probl{\asched}{\astate}(\Diamond \targetset) = \probl{\ansched}{\astate}(\Diamond \targetset)$.
Recall that $\mfinpaths{\asched}(\astate)$ and $\mfinpaths{\ansched}(\astate)$
are the sets of finite paths from state $\astate$ induced by $\asched$ and $\ansched$, respectively.
Let $\mfinpathsendsin{\ansched}{\omstates}(\astate)
= \mfinpaths{\ansched}(\astate) \cap \mfinpathsendsin{\sem{\oimcfrom{\anoimdp}}}{\omstates}(\astate)$
and
$\mfinpathsendsin{\ansched}{\omstacts}(\astate)
= \mfinpaths{\ansched}(\astate) \cap \mfinpathsendsin{\sem{\oimcfrom{\anoimdp}}}{\omstacts}(\astate)$.
Let $\probfin{\asched}{\astate}$ (respectively, $\probl{\ansched}{\astate}$)
be the probability measure over finite paths induced by $\asched$ (respectively, $\ansched$),
defined in the standard manner.
In particular, we note that, for finite paths
$\afincmpath, \afincmpath' \in \mfinpathsendsin{\ansched}{\omstates}(\astate)$,
if $\afincmpath' = \afincmpath \anab (\last(\afincmpath),\anomact) \anac \astate'$,
then $\probfin{\ansched}{\astate}(\afincmpath') =
\probfin{\ansched}{\astate}(\afincmpath) \cdot \ansched(\afincmpath)(\anab) \cdot \anab[(\last(\afincmpath),\anomact)] \cdot
\ansched(\afincmpath \anab (\last(\afincmpath),\anomact))(\anac) \cdot
\anac(\astate')$
(where $\probfin{\ansched}{\astate}(\astate) = 1$,
and where we write $\anab[(\last(\afincmpath),\anomact)]$ to denote the probability
of $(\last(\afincmpath),\anomact)$ according to the assignment $\anab$
rather than $\anab(\last(\afincmpath),\anomact)$ to avoid ambiguity).
Similarly, for finite paths $\afinpath, \afinpath' \in \mfinpaths{\asched}(\astate)$,
if $\afinpath' = \afinpath (\anomact,\ana) \astate'$,
then $\probfin{\asched}{\astate}(\afincmpath') =
\probfin{\asched}{\astate}(\afincmpath) \cdot \asched(\afincmpath)(\anomact,\ana) \cdot \ana(\astate')$
(where $\probfin{\asched}{\astate}(\astate) = 1$).
We show that $\probfin{\ansched}{\astate}(\afincmpath) = \probfin{\asched}{\astate}(\cmTOm(\afincmpath))$,
for $\afincmpath \in \mfinpathsendsin{\ansched}{\omstates}(\astate)$,
by induction on the length of paths,
where the length of a finite path of $\mfinpaths{\asched}(\astate)$ or
$\mfinpathsendsin{\ansched}{\omstates}(\astate)$
is the number of states from $\omstates$ along the path
(although paths $\mfinpathsendsin{\ansched}{\omstates}(\astate)$
visit states from $\omstates$ \emph{and} from $\omstacts$,
when considering the length of such a path, we only consider states from $\omstates$).
For the base case, i.e., for the path of length $1$, which comprises state $\astate$ only,
we note that $\cmTOm(\astate)=\astate$,
and that $\probfin{\ansched}{\astate}(\astate) = \probfin{\asched}{\astate}(\cmTOm(\astate)) = 1$.
Now consider $\afincmpath, \afincmpath' \in \mfinpathsendsin{\ansched}{\omstates}(\astate)$
such that $\afincmpath' = \afincmpath \anab (\last(\afincmpath),\anomact) \anac \astate'$,
and assume that we have already established that
$\probfin{\ansched}{\astate}(\afincmpath) = \probfin{\asched}{\astate}(\cmTOm(\afincmpath))$.
Furthermore (given that $\afincmpath' \in \mfinpathsendsin{\ansched}{\omstates}(\astate)$),
we have $\anab[(\last(\afincmpath),\anomact)] = 1$ and $\anac = \pol{\afincmpath}{\anomact}$.
Recalling that
$\cmTOm(\afincmpath') = \cmTOm(\afincmpath) (\anomact,\pol{\afincmpath}{\anomact}) \astate'$
and, by construction, $\ansched(\afincmpath)(\anab)=\asched(\cmTOm(\afincmpath))(\anomact,\pol{\afincmpath}{\anomact})$
and $\ansched(\afincmpath \anab (\last(\afincmpath),\anomact))(\pol{\afincmpath}{\anomact})=1$,
we can establish the following:
\begin{eqnarray*}
\probfin{\ansched}{\astate}(\afincmpath')
& = &
\probfin{\ansched}{\astate}(\afincmpath) \cdot \ansched(\afincmpath)(\anab) \cdot \anab[(\last(\afincmpath),\anomact)]
\cdot \ansched(\afincmpath \anab (\last(\afincmpath),\anomact))(\anac) \cdot \anac(\astate')
\\
& = &
\probfin{\asched}{\astate}(\cmTOm(\afincmpath))
\cdot \asched(\cmTOm(\afincmpath))(\anomact,\pol{\afincmpath}{\anomact})
\cdot \pol{\afincmpath}{\anomact}(\astate')
\\
& = &
\probfin{\asched}{\astate}(\cmTOm(\afincmpath')) \; .
\end{eqnarray*}
The fact that $\probfin{\ansched}{\astate}$ and $\probfin{\asched}{\astate}$
assign the same probability to finite paths related by $\cmTOm$,
together with the fact that finite paths related by $\cmTOm$ visit the same states from $\omstates$,
means that
$\probfin{\ansched}{\astate}$ and $\probfin{\asched}{\astate}$
assign the same probability to sets of finite paths that visit $\targetset$.
By standard reasoning, this means that
$\probl{\asched}{\astate}(\Diamond \targetset) = \probl{\ansched}{\astate}(\Diamond \targetset)$.
\end{proof}

\begin{lem}\label{lem:imc2imdp}
Let $\anoimdp = (\omstates,\omactions,\omtrans)$ be an IMDP,
$\astate \in \omstates$ and $\targetset \subseteq \omstates$.
Then, for each $\ansched \in \strats{\sem{\oimcfrom{\anoimdp}}}$,
there exists $\asched \in \strats{\sem{\anoimdp}}$
such that
$\probl{\asched}{\astate}(\Diamond \targetset) = \probl{\ansched}{\astate}(\Diamond \targetset)$.
\end{lem}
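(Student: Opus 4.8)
The plan is to mirror the construction of \lemref{lem:imdp2imc}, but going in the opposite direction: given a scheduler $\ansched$ of $\sem{\oimcfrom{\anoimdp}}$, I would build a scheduler $\asched$ of $\sem{\anoimdp}$ that collapses the two-step choices of $\ansched$ (first a choice of an assignment over $\omstacts$, then a choice of an assignment over $\omstates$) into a single choice of a distribution over $\omactions \times \dist(\omstates)$. First I would set up the correspondence in the other direction: define a map $\mTOcm : \mfinpaths{\sem{\anoimdp}}(\astate) \ra \mfinpathsendsin{\sem{\oimcfrom{\anoimdp}}}{\omstates}(\astate)$ that, to a finite path $\astate_0 (\anomact_0,\anac_0) \astate_1 \cdots (\anomact_{n-1},\anac_{n-1}) \astate_n$ of $\sem{\anoimdp}$, associates a canonical $\oimcfrom{\anoimdp}$-path reinserting the intermediate states $(\astate_i,\anomact_i)$ and the "Dirac-like" assignments $\anab_i$ used by $\ansched$ along the way. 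Because $\ansched$ may choose assignments over $\omstacts$ that are genuinely spread over several $(\astate_n,\anomact)$ with $\anomact \in \omactionsof{\astate_n}$, I would need to record, for a given history $\afinpath$ of $\sem{\anoimdp}$, which assignment $\anab$ of $\ansched$ it corresponds to; the natural choice is to let the canonical $\oimcfrom{\anoimdp}$-history use the actual assignment $\anab = \ansched(\mTOcm(\afinpath))$ (after the w.l.o.g.\ reduction to schedulers choosing a single assignment with probability~$1$, which is justified exactly as in \lemref{lem:imdp2imc} via closure under convex combinations).

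Next I would define $\asched$. Given $\afinpath \in \mfinpaths{\sem{\anoimdp}}(\astate)$ with $\last(\afinpath) = \astate_n$, consider the $\oimcfrom{\anoimdp}$-history $\afincmpath = \mTOcm(\afinpath)$ and the assignment $\anab = \ansched(\afincmpath)$ chosen by $\ansched$ from $\astate_n$ (an assignment over $\omstacts$, supported on pairs $(\astate_n,\anomact)$ with $\anomact \in \omactionsof{\astate_n}$, valid because the interval used is $[0,1]$). For each such $\anomact$ with $\anab[(\astate_n,\anomact)] > 0$, the scheduler $\ansched$ then chooses, from the state $(\astate_n,\anomact)$ along the extended history $\afincmpath\, \anab\, (\astate_n,\anomact)$, an assignment $\anac_\anomact \in \ass{\omtrans(\astate_n,\anomact)}$ (again reduced w.l.o.g.\ to a single assignment). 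I would then set $\asched(\afinpath)$ to be the distribution over $\mactionsof{\astate_n}$ assigning probability $\anab[(\astate_n,\anomact)]$ to the pair $(\anomact,\anac_\anomact)$. One must check that $(\anomact,\anac_\anomact) \in \mactionsof{\astate_n}$, which holds because $(\astate_n,\anomact) \in \omstacts$ forces $\anomact \in \omactionsof{\astate_n}$, and because $\cftrans((\astate_n,\anomact),\cdot) = \omtrans(\astate_n,\anomact)$ forces $\anac_\anomact \in \ass{\omtrans(\astate_n,\anomact)}$; and that the weights sum to~$1$, which holds since $\anab \in \dist(\omstacts)$ is supported exactly on $\{(\astate_n,\anomact) : \anomact \in \omactionsof{\astate_n}\}$.

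The final part is the probability-matching argument, which proceeds by induction on path length, where length counts the number of $\omstates$-states, exactly as in \lemref{lem:imdp2imc}. The key one-step identity is that the probability under $\ansched$ of the segment $\astate_n\, \anab\, (\astate_n,\anomact)\, \anac_\anomact\, \astate'$ equals $\anab[(\astate_n,\anomact)] \cdot \anac_\anomact(\astate')$ (the factors $\ansched(\afincmpath)(\anab)$ and $\ansched(\afincmpath\,\anab\,(\astate_n,\anomact))(\anac_\anomact)$ both being~$1$ after the w.l.o.g.\ reduction), while the corresponding one-step probability under $\asched$ of $\astate_n\,(\anomact,\anac_\anomact)\,\astate'$ is $\asched(\afinpath)(\anomact,\anac_\anomact)\cdot\anac_\anomact(\astate') = \anab[(\astate_n,\anomact)]\cdot\anac_\anomact(\astate')$; these agree. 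Summing over the relevant histories, $\probfin{\asched}{\astate}$ and $\probfin{\ansched}{\astate}$ assign equal probability to corresponding finite paths, and since corresponding paths visit the same $\omstates$-states, the reachability probabilities for $\targetset \subseteq \omstates$ coincide, giving $\probl{\asched}{\astate}(\Diamond \targetset) = \probl{\ansched}{\astate}(\Diamond \targetset)$. I expect the main obstacle to be bookkeeping rather than conceptual: one has to be careful that a single $\sem{\anoimdp}$-history $\afinpath$ corresponds, under $\mTOcm$, to a \emph{unique} $\oimcfrom{\anoimdp}$-history carrying the $\anab$-assignment actually played by $\ansched$ (so that $\asched$ is well-defined and deterministic as a function of $\afinpath$), and that the w.l.o.g.\ single-assignment reduction is applied consistently at \emph{both} the $\omstates$-level and the $\omstacts$-level of $\ansched$.
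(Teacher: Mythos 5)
Your proposal follows essentially the same route as the paper's proof: the same w.l.o.g.\ reduction of $\ansched$ to deterministic assignment choices via convexity, the same correspondence map $\mTOcm$ from finite paths of $\sem{\anoimdp}$ to the $\ansched$-generated paths of $\sem{\oimcfrom{\anoimdp}}$ (the paper additionally sends unmatched histories to $\bot$ and lets $\asched$ act arbitrarily there, which resolves the well-definedness bookkeeping you flag), the same definition of $\asched(\afinpath)$ assigning weight $\anab[(\last(\afinpath),\anomact)]$ to $(\anomact,\anac_\anomact)$, and the same induction on path length matching $\probfin{\asched}{\astate}$ with $\probfin{\ansched}{\astate}$. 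The argument is correct as planned.
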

\begin{proof}
We proceed by constructing $\asched$ from $\ansched$
by using the ``reverse'' of the construction used for \lemref{lem:imdp2imc}:
each sequence of two choices of $\ansched$,
the first determining a transition from a state in $\omstates$ to a state in $\omstacts$,
the second determining a transition from a state in $\omstacts$ to a state in $\omstates$,
is mimicked by a single choice of $\asched$.

In the following,
we assume w.l.o.g.\ that, for each finite path $\afinpath \in \mfinpaths{\sem{\oimcfrom{\anoimdp}}}$,
there exists some $\anab \in \ass{\cftrans(\last(\afinpath),\cdot)}$
such that $\ansched(\afinpath)(\anab) = 1$.
The assumption is w.l.o.g.\ because
the set $\ass{\cftrans(\last(\afinpath),\cdot)}$ is closed under convex combinations.

As in the proof of \lemref{lem:imdp2imc},
we introduce a function from the set of finite paths of $\mfinpaths{\sem{\anoimdp}}(\astate)$
to the set of finite paths of $\mfinpaths{\sem{\oimcfrom{\anoimdp}}}$;
however, in contrast to the proof of \lemref{lem:imdp2imc},
the range of the function includes only finite paths that are generated by $\ansched$,
with $\bot$ used to represent cases in which a finite path of $\mfinpaths{\sem{\anoimdp}}(\astate)$
has no corresponding finite path in $\mfinpaths{\ansched}(\astate)$.
Let $\mTOcm: \mfinpaths{\sem{\anoimdp}}(\astate) \ra \mfinpaths{\ansched}(\astate) \cup \set{\bot}$
be the function defined as follows:
for $\afinpath \in \mfinpaths{\sem{\anoimdp}}(\astate)$,
where  $\afinpath = \astate_0 (\anomact_0,\ana_0) \astate_1 (\anomact_1,\ana_1) \cdots (\anomact_{n-1},\ana_{n-1}) \astate_n$,
if there exists sequence $\anab_0 \anab_1 \cdots \anab_{n-1}$
for which the finite path
\[
\afincmpath' = \astate_0 \anab_0 (\astate_0,\anomact_0) \ana_0
\astate_1 \anab_1 (\astate_1,\anomact_1) \ana_1  \cdots
\astate_{n-1} \anab_{n-1} (\astate_{n-1},\anomact_{n-1}) \ana_{n-1}
\astate_n
\]
is such that $\afincmpath' \in \mfinpaths{\ansched}(\astate)$,
then $\mTOcm(\afinpath) = \afincmpath'$,
otherwise $\mTOcm(\afinpath) = \bot$.
Note that, from the assumption on $\ansched$ made in the previous paragraph,
if such a sequence $\anab_0 \anab_1 \cdots \anab_{n-1}$ exists,
it is unique, and hence the function $\mTOcm$ is well-defined.

Let $\afinpath \in \mfinpaths{\sem{\anoimdp}}(\astate)$.
If $\mTOcm(\afinpath) = \bot$,
then we define $\asched(\afinpath)$ to be an arbitrary distribution.
Otherwise we derive $\asched(\afinpath)$ from $\ansched(\mTOcm(\afinpath))$.
First, recall that $\ansched(\mTOcm(\afinpath))(\anab) = 1$
for some $\anab \in \ass{\cftrans(\last(\afinpath),\cdot)}$.
Furthermore, recall that $\anab \in \dist(\omstacts)$ and note that, by construction,
for all $(\astate',\anomact) \in \support(\anab)$
we have $\astate' = \last(\afinpath)$.
Consider some $(\last(\afinpath),\anomact) \in \support(\anab)$.
For the resulting finite path $\mTOcm(\afinpath) \anab (\last(\afinpath),\anomact)$,
we now consider $\ansched(\mTOcm(\afinpath) \anab (\last(\afinpath),\anomact))$.
As above, 
we can assume w.l.o.g.\ that $\ansched(\mTOcm(\afinpath) \anab (\last(\afinpath),\anomact))(\anac) = 1$
for some $\anac \in \ass{\cftrans((\last(\afinpath),\anomact),\cdot)}$.
We can now define $\asched(\afinpath)(\anomact',\ana)$ for each $(\anomact',\ana) \in \mactionsof{\astate}$:
if $(\anomact',\ana) = (\anomact,\anac)$,
then $\asched(\afinpath)(\anomact',\ana) = \anab[(\last(\afinpath),\anomact')]$
and $\asched(\afinpath)(\anomact',\ana) = 0$ otherwise.

It remains to establish that
$\probl{\asched}{\astate}(\Diamond \targetset) = \probl{\ansched}{\astate}(\Diamond \targetset)$.
We proceed by showing that
$\probfin{\asched}{\astate}(\afinpath) = \probfin{\ansched}{\astate}(\mTOcm(\afinpath))$
for $\afinpath \in \mfinpaths{\asched}(\astate)$
by induction on the length of paths.
For the base case,
given that $\mTOcm(\astate)=\astate$,
we have $\probfin{\asched}{\astate}(\astate) = \probfin{\ansched}{\astate}(\mTOcm(\astate)) = 1$.
Now consider $\afinpath, \afinpath' \in \mfinpaths{\asched}(\astate)$
such that $\afinpath' = \afinpath (\anomact,\ana) \astate'$,
and assume that we have already established that
$\probfin{\asched}{\astate}(\afinpath) = \probfin{\ansched}{\astate}(\mTOcm(\afinpath))$.
From the definition of $\mTOcm$, 
we have that $\mTOcm(\afinpath') = \mTOcm(\afinpath) \anab (\last(\afinpath),\anomact) \ana \astate'$,
where $\anab \in \ass{\cftrans(\last(\afincmpath),\cdot)}$ such that
$\ansched(\mTOcm(\afinpath))(\anab) = 1$
(and hence $\mTOcm(\afinpath') \in \mfinpaths{\ansched}(\astate)$).
Recall that we have constructed $\asched$ such that
$\asched(\afinpath)(\anomact,\ana) = \anab[(\last(\afinpath),\anomact)]$;
furthermore, we have assumed w.l.o.g.\ $\ansched(\mTOcm(\afinpath))(\anab) = 1$
and $\ansched(\mTOcm(\afinpath) \anab (\last(\afinpath),\anomact))(\ana) = 1$.
Then we have:
\begin{eqnarray*}
\probfin{\asched}{\astate}(\afinpath')
& = & \probfin{\asched}{\astate}(\afinpath)
\cdot \asched(\afinpath)(\anomact,\ana)
\cdot \ana(\astate')
\\
& = & \probfin{\ansched}{\astate}(\mTOcm(\afinpath)) \cdot \ansched(\mTOcm(\afinpath))(\anab)
\cdot \anab[(\last(\afinpath),\anomact)]
\cdot \ansched(\mTOcm(\afinpath) \anab (\last(\afincmpath),\anomact))(\ana) \cdot \ana(\astate')
\\
& = & \probfin{\ansched}{\astate}(\mTOcm(\afinpath')) \; .
\end{eqnarray*}
As in the proof of \lemref{lem:imdp2imc},
this fact suffices to establish that
$\probl{\asched}{\astate}(\Diamond \targetset) = \probl{\ansched}{\astate}(\Diamond \targetset)$.
\end{proof}

\section{Clock-Dependent Probabilistic Timed Automata with One Clock}\label{sec:onec-cdpta}
%

In this section, we recall the formalism of clock-dependent probabilistic timed automata.
The definition of clock-dependent probabilistic timed automata of~\cite{Spr21}
features an arbitrary number of clock variables.
In contrast,
we consider models with only \emph{one} clock variable.
This clock variable will be denoted $\aclock$ for the remainder of the paper.

A \emph{clock valuation} is a value $\aval \in \nnr$,
interpreted as the current value of clock $\aclock$.
Following the usual notational conventions for modelling formalisms based on timed automata,
we use the powerset notation $2^{\set{\aclock}}$
to refer to the set $\set{\set{\aclock},\emptyset}$,
which we will use in the sequel to indicate whether the clock is reset to $0$
(denoted by $\set{\aclock}$)
or retains its current value (denoted by $\emptyset$).

%
The set $\ccons$ of \emph{clock constraints} over $\aclock$ is defined
as the set of conjunctions over atomic formulae of the
form $\aclock \sim c$,
where $\sim \in \{ <,\leq,\geq,> \}$ and $c \in \Nset$.
A clock valuation $\aval$ satisfies a clock constraint $\ccl$,
denoted by $\aval \vinz \ccl$,
if $\ccl$ resolves to $\true$ when substituting each occurrence of clock $\aclock$ with $\aval$.

For a set $Q$,
a \emph{distribution template} $\adisttemp : \nnr \ra \dist(Q)$
gives a distribution over $Q$ for each clock valuation.
In the following, we use notation $\adtpara{\adisttemp}{\aval}$, rather than $\adisttemp(\aval)$,
to denote the distribution corresponding to distribution template $\adisttemp$ and clock valuation $\aval$.
Let $\disttemps(Q)$ be the set of distribution templates over $Q$.

A \emph{one-clock clock-dependent probabilistic timed automaton} (1c-cdPTA)
$\acdpta = (\locs, \inv, \pedges)$
comprises the following components:
\begin{itemize}
\item
a finite set
$\locs$ of \emph{locations}; 
\item
a function $\inv: \locs \ra \ccons$ associating an \emph{invariant condition} with each location;
\item
a set $\pedges \subseteq \locs \times \ccons \times \disttemps(2^{\set{\aclock}} \times \locs)$
of \emph{probabilistic edges}.
\end{itemize}
A probabilistic edge $(\aloc,\g,\pd) \in \pedges$ comprises:
(1)~a source location $\aloc$;
(2)~a clock constraint $\g$, called a \emph{guard}; and
(3)~a distribution template $\pd$
with respect to pairs of the form $(\aclockset,\aloc') \in 2^{\set{\aclock}} \times \locs$
(i.e., pairs consisting of a first element indicating whether $\aclock$ should be reset to $0$ or not,
and a second element corresponding to a target location $\aloc'$).
We refer to pairs $(\aclockset,\aloc') \in 2^{\set{\aclock}} \times \locs$ as \emph{outcomes}.
The behaviour of a 1c-cdPTA takes a similar form to
that of a standard (one-clock) probabilistic timed automaton~\cite{GJ95,KNSS02,JLS08}.
%
A \emph{state} of a 1c-cdPTA is a pair comprising a location and a clock valuation
satisfying the location's invariant condition,
i.e., $(\aloc,\aval) \in \locs \times \nnr$ such that $\aval \vinz \inv(\aloc)$.
In any state $(\aloc,\aval)$, a certain amount of time $\adelay \in \nnr$ elapses,
then a probabilistic edge is traversed.
The choice of $\adelay$ is nondeterministic.
It is required that the invariant $\inv(\aloc)$ remains satisfied continuously while time passes.
The resulting state after the elapse of time is $(\aloc,\aval{+}\adelay)$.
A probabilistic edge $(\aloc',\g,\pd) \in \pedges$ can then be chosen
from state $(\aloc,\aval{+}\adelay)$
if $\aloc = \aloc'$ and it is \emph{enabled},
i.e., the clock constraint $\g$ is satisfied by $\aval{+}\adelay$.
The choice of which enabled probabilistic edge to take is nondeterministic.
Once a probabilistic edge $(\aloc',\g,\pd)$ is chosen,
a successor location, and whether to reset the clock to $0$,
is chosen at random,
according to the distribution $\adtpara{\pd}{\aval{+}\adelay}$.
For example, in the case of the 1c-cdPTA of \figref{fig:one_clock},
from state $(\mathrm{W},0)$
(i.e., the location is $\mathrm{W}$ and the value of clock $\aclock$ is equal to $0$),
a time delay $\adelay \in (1,3)$ elapses,
increasing the value of $\aclock$ to $\adelay$,
before the probabilistic edge leaving $\mathrm{W}$ is traversed.
Then the resulting state will be $(\mathrm{S},\adelay)$ with probability $\frac{3 \adelay - 3}{8}$,
$(\mathrm{T},\adelay)$ with probability $\frac{11 \adelay - 3}{16}$,
and $(\mathrm{F},\adelay)$ with probability $\frac{11 \adelay - 3}{16}$.

We make the following assumptions on 1c-cdPTAs,
in order to simplify the definition of their semantics.
Firstly, we consider 1c-cdPTAs featuring invariant conditions
that prevent the clock from exceeding some upper bound
and impose no lower bound:
formally, for each location $\aloc \in \locs$,
we have that $\inv(\aloc)$
is a constraint $\aclock \leq c$ for some $c \in \Nset$,
or a constraint $\aclock < c$ for some $c \in \Nset \setminus \set{0}$.
Secondly, we restrict our attention to 1c-cdPTAs for which it is always possible to take a probabilistic edge,
either immediately or after letting time elapse.
Formally, for each location $\aloc \in \locs$,
if $\inv(\aloc) = (\aclock \leq c)$ then (viewing $c$ as a clock valuation) $c \vinz \g$
for some $(\aloc,\g,\pd) \in \pedges$;
instead, if $\inv(\aloc) = (\aclock < c)$ then $c-\varepsilon \vinz \g$
for all $\varepsilon \in (0,1)$ and $(\aloc,\g,\pd) \in \pedges$.
Thirdly, we assume that all possible target states of probabilistic edges satisfy their invariants.
Observe that,
given the first assumption, this may not be the case only when the clock is \emph{not} reset.
Formally, for all probabilistic edges $(\aloc,\g,\pd) \in \pedges$,
for all clock valuations $\aval \in \nnr$ such that $\aval \vinz \g$,
and for all $\aloc' \in \locs$,
we have that $\adtpara{\pd}{\aval}(\emptyset,\aloc')>0$ implies
$\aval \vinz \inv(\aloc')$
(recall that the clock is not reset in the case of outcomes for which $\emptyset$ is the first component).
Note that we relax some of these assumptions when depicting 1c-cdPTAs graphically
(for example, the 1c-cdPTA  of \figref{fig:one_clock}
can be modified so that it satisfies these assumptions by adding invariant conditions
and self-looping probabilistic edges to locations $\mathrm{S}$ and $\mathrm{T}$).

The semantics of the 1c-cdPTA
$\acdpta = (\locs, \inv, \pedges)$
is the MDP
$\semmdp{\acdpta} = ( \mstates, \mactions, \mtrans )$
where:
\begin{itemize}
\item
$\mstates =
\{ (\aloc,\aval) \in \locs \times \nnr \setsep \aval \vinz \inv(\aloc) \}$;
\item
$\mactions = \nnr \times \pedges$;
\item
for $(\aloc,\aval) \in \mstates$,
$\timeelapse{\aval} \in \nnr$ and $(\aloc,\g,\pd) \in \pedges$ such that
(1)~$\timeelapse{\aval} \geq \aval$, (2)~$\timeelapse{\aval} \vinz \g$
and (3)~$\anval \vinz \inv(\aloc)$ for all $\aval \leq \anval \leq \timeelapse{\aval}$,
then we let $\mtrans((\aloc,\aval),(\timeelapse{\aval},(\aloc,\g,\pd)))$ be the distribution
such that, for $(\aloc',\aval') \in \mstates$:
\[
\mtrans((\aloc,\aval),(\timeelapse{\aval},(\aloc,\g,\pd)))(\aloc',\aval') =
\left\{
\begin{array}{ll}
\adtpara{\pd}{\timeelapse{\aval}}(\set{\aclock},\aloc')
+
\adtpara{\pd}{\timeelapse{\aval}}(\emptyset,\aloc')
&
\mbox{if } \aval' = \timeelapse{\aval} = 0
\\
\adtpara{\pd}{\timeelapse{\aval}}(\emptyset,\aloc')
&
\mbox{if } \aval' = \timeelapse{\aval} > 0
\\
\adtpara{\pd}{\timeelapse{\aval}}(\set{\aclock},\aloc')
&
\mbox{if } \aval' = 0 \mbox{ and } \timeelapse{\aval} > 0
\\
0
&
\mbox{otherwise.}
\end{array}
\right.
\]
\end{itemize}
Note that the summation in the first case of the definition of $\mtrans$
is performed because the valuation $\aval' = 0$ may be obtained from $\timeelapse{\aval} = 0$
either by resetting the clock to $0$ (summand $\adtpara{\pd}{\timeelapse{\aval}}(\set{\aclock},\aloc')$)
or by not resetting the clock to $0$ (summand $\adtpara{\pd}{\timeelapse{\aval}}(\emptyset,\aloc')$).

Let $\targetlocs \subseteq \locs$ be a set of locations,
and let $\targetsetof{\targetlocs} = \set{ (\aloc,\aval) \in \mstates \setsep \aloc \in \targetlocs}$
be the set of states of $\semmdp{\acdpta}$
that have their location component in $\targetlocs$.
Then the maximum value of reaching $\targetlocs$ from state $(\aloc,\aval) \in \mstates$
corresponds to $\maxval{\semmdp{\acdpta}}{\targetsetof{\targetlocs}}{(\aloc,\aval)}$.
Similarly, the minimum value of reaching $\targetlocs$ from state $(\aloc,\aval) \in \mstates$
corresponds to $\minval{\semmdp{\acdpta}}{\targetsetof{\targetlocs}}{(\aloc,\aval)}$.
As in \sectref{sec:oimdps}, we can define a number of quantitative and qualitative reachability problems on 1c-cdPTA,
where the initial state is $(\aloc,0)$ for a particular $\aloc \in \locs$.
The maximal reachability problem for $\acdpta$, $\targetlocs \subseteq \locs$,
$\aloc \in \locs$, $\unrhd \in \{ \geq, > \}$ and $\pthresh \in [0,1]$
is to decide whether $\maxval{\semmdp{\acdpta}}{\targetsetof{\targetlocs}}{(\aloc,0)} \unrhd \pthresh$;
similarly, the minimal reachability problem for $\acdpta$, $\targetlocs \subseteq \locs$,
$\aloc \in \locs$, $\unlhd \in \{ \leq, < \}$ and $\pthresh \in [0,1]$
is to decide whether $\minval{\semmdp{\acdpta}}{\targetsetof{\targetlocs}}{(\aloc,0)} \unlhd \pthresh$.
Furthermore, we can define analogues of the qualitative problems
featured in \sectref{sec:oimdps}:
($\forall 0$)
decide whether
$\probl{\astrat}{(\aloc,0)}(\Diamond \targetsetof{\targetlocs})=0$ for all $\astrat \in \strats{\semmdp{\acdpta}}$;
($\exists 0$)
decide whether there exists $\astrat \in \strats{\semmdp{\acdpta}}$
such that $\probl{\astrat}{(\aloc,0)}(\Diamond \targetsetof{\targetlocs})=0$;
($\exists 1$)
decide whether there exists $\astrat \in \strats{\semmdp{\acdpta}}$
such that $\probl{\astrat}{(\aloc,0)}(\Diamond \targetsetof{\targetlocs})=1$;
($\forall 1$)
decide whether $\probl{\astrat}{(\aloc,0)}(\Diamond \targetsetof{\targetlocs})=1$ for all $\astrat \in \strats{\semmdp{\acdpta}}$.


\subsection{Affine Clock Dependencies.}
In this paper, we consider distribution templates
that are defined in terms
of sets of affine functions in the following way.
Given probabilistic edge $\apedge = (\aloc,\g,\pd) \in \pedges$,
let $\interval{\apedge}$ be the set of clock valuations in which $\apedge$ is enabled,
i.e., $\interval{\apedge} = \set{\aval \in \nnr \setsep \aval \sat \g \wedge \inv(\aloc)}$.
Note that $\interval{\apedge} \subseteq \nnr$ corresponds to an interval
with natural-numbered endpoints.
Let $\closure{\interval{\apedge}}$ be the closure of $\interval{\apedge}$.
We say that $\apedge$ is \emph{affine} if,
for each $\outcome \in 2^{\set{\aclock}} \times \locs$,
there exists a pair $(\fc{\apedge}{\outcome}, \fd{\apedge}{\outcome}) \in \Qset^2$ of rational constants,
such that
$\adtpara{\pd}{\aval}(\outcome) = \fc{\apedge}{\outcome} + \fd{\apedge}{\outcome} \cdot \aval$
for all $\aval \in \closure{\interval{\apedge}}$.
Note that, by the definition of distribution templates,
for all $\aval \in \closure{\interval{\apedge}}$,
we have $\fc{\apedge}{\outcome} + \fd{\apedge}{\outcome} \cdot \aval \geq 0$
for each $\outcome \in 2^{\set{\aclock}} \times \locs$, and
$\sum_{\outcome  \in 2^{\set{\aclock}} \times \locs} (\fc{\apedge}{\outcome} + \fd{\apedge}{\outcome} \cdot \aval) = 1$.
A 1c-cdPTA is affine if all of its probabilistic edges are affine.
Henceforth we assume that the 1c-cdPTAs we consider are affine.
An affine probabilistic edge $\apedge$ is \emph{constant} if, for each $\outcome \in 2^{\set{\aclock}} \times \locs$,
we have $\fd{\apedge}{\outcome} = 0$,
i.e., $\adtpara{\pd}{\aval}(\outcome) = \fc{\apedge}{\outcome}$  for some $\fc{\apedge}{\outcome} \in \Qset$,
for all $\aval \in \closure{\interval{\apedge}}$.
The following technical fact will be useful in subsequent sections:
for a probabilistic edge $\apedge \in \pedges$,
outcome $\outcome \in 2^{\set{\aclock}} \times \locs$
and open interval $\anint \subseteq \interval{\apedge}$,
if $\fd{\apedge}{\outcome} \neq 0$,
then $\adtpara{\pd}{\aval}(\outcome) > 0$ for all $\aval \in \anint$
(because the existence of $\aval_{=0} \in \anint$ such that
$\adtpara{\pd}{\aval_{=0}}(\outcome) = 0$,
together with $\fd{\apedge}{\outcome} \neq 0$ and the fact that $\anint$ is open,
would mean that there exists $\aval' \in \anint$ such that
$\adtpara{\pd}{\aval'}(\outcome) < 0$,
which contradicts the definition of distribution templates).

\subsection{Initialisation.}
In this paper, we also introduce a specific requirement for 1c-cdPTAs
that allows us to analyse faithfully 1c-cdPTA using IMDPs
in Section~\ref{sec:general}.
A \emph{symbolic path fragment} is a sequence
\[
(\aloc_0,\g_0,\pd_0) (\aclockset_0,\aloc_1) (\aloc_1,\g_1,\pd_1) (\aclockset_1,\aloc_2)
\cdots
(\aloc_n,\g_n,\pd_n)
\in {(\pedges \times (2^{\set{\aclock}} \times \locs))}^+ \times \pedges
\]
of probabilistic edges and outcomes
such that
$\adtpara{\pd_i}{\aval}(\aclockset_i,\aloc_{i+1}) > 0$ for all $\aval \in \interval{(\aloc_i,\g_i,\pd_i)}{}$
for all $i<n$.
In this paper, we consider 1c-cdPTAs for which each symbolic path fragment
that begins and ends with non-constant probabilistic edges
requires that the clock takes a natural numbered value
at some point along the path fragment,
either from being reset or from passing through guards that have at most one (natural numbered) value in common.
Formally, a 1c-cdPTA is \emph{initialised} if,
for any symbolic path fragment
$(\aloc_0,\g_0,\pd_0) (\aclockset_0,\aloc_1) (\aloc_1,\g_1,\pd_1) (\aclockset_1,\aloc_2)
\linebreak[0] \cdots \linebreak[0]
(\aloc_n,\g_n,\pd_n)$
such that $(\aloc_0,\g_0,\pd_0)$ and $(\aloc_n,\g_n,\pd_n)$ are non-constant,
either (1)~$\aclockset_i = \set{\aclock}$ for some $0 \leq i < n$
or (2)~$\interval{(\aloc_i,\g_i,\pd_i)}{} \cap \interval{(\aloc_{i+1},\g_{i+1},\pd_{i+1})}{}$
is empty or contains a single valuation,
for some $0< i < n$.
We henceforth assume that all 1c-cdPTAs considered in this paper are initialised.

\section{Translation from 1c-cdPTAs to IMDPs}\label{sec:general}
In this section,
we show that we can solve quantitative and qualitative problems of (affine and initialised) 1c-cdPTAs.
In contrast to the approach for quantitative problems of multiple-clock cdPTAs presented in~\cite{Spr21},
which involves the construction of an \emph{approximate} MDP,
we represent the 1c-cdPTA \emph{precisely} using an IMDP,
by adapting the standard region-graph construction for
one-clock (probabilistic) timed automata of~\cite{LMS04,JLS08}.


\begin{figure}[t]
\centering

\tiny

\begin{tikzpicture}[shorten >=1pt,auto, thin] 

\matrix (m) [matrix of math nodes,row sep=3em,column sep=6em,minimum width=2em]
  {
     \mbox{1c-cdPTA }\acdpta  &  \mbox{IMDP }\oimdpfrom{\acdpta} & \mbox{IMC }\oimcfrom{\oimdpfrom{\acdpta}}\\
     \mbox{MDP }\sem{\acdpta}  &  \mbox{MDP }\sem{\oimdpfrom{\acdpta}} & \mbox{MDP }\sem{\oimcfrom{\oimdpfrom{\acdpta}}} \\};
  \path[->]
    (m-1-1) edge (m-2-1)
            edge [double] (m-1-2)
    (m-1-2) edge (m-2-2)
            edge [double] (m-1-3)
    (m-1-3) edge (m-2-3);
  \path[dashed]
    (m-2-1) edge node [below,<->] {\tiny \begin{tabular}{c}{\lemref{lem:sched_cdPTA2IMDP}}\\ {\lemref{lem:sched_IMDP2cdPTA}}\end{tabular}} (m-2-2)
    (m-2-2) edge node [below,<->] {\tiny \begin{tabular}{c}{\lemref{lem:imdp2imc}}\\ {\lemref{lem:imc2imdp}}\end{tabular}} (m-2-3);

\end{tikzpicture}

\caption{Overview of the use of the IMDP construction in the overall solution process.}%
\label{fig:summary}

\end{figure}


We summarise our overall approach in \figref{fig:summary}.
From the 1c-cdPTA $\acdpta$,
we construct an IMDP $\oimdpfrom{\acdpta}$,
from which we can obtain in turn the IMC $\oimcfrom{\oimdpfrom{\acdpta}}$
according to the construction of Section~\ref{sec:oimdps}
(top line, from left to right).
Furthermore, by \lemref{lem:imdp2imc} and \lemref{lem:imc2imdp},
for any scheduler of $\sem{\oimdpfrom{\acdpta}}$,
we can find an equivalent (in terms of assigning the same probabilities for reaching a set of locations)
scheduler of $\sem{\oimcfrom{\oimdpfrom{\acdpta}}}$,
and vice versa
(as indicated by the lower right dashed line).
In this section, in \lemref{lem:sched_cdPTA2IMDP} and \lemref{lem:sched_IMDP2cdPTA},
we will also show that analogous results allow us to find,
for any scheduler of $\sem{\acdpta}$, an equivalent scheduler of $\sem{\oimdpfrom{\acdpta}}$,
and vice versa
(lower left dashed line).
Overall, this allows us to relate the quantitative and qualitative problems
defined at the level of 1c-cdPTA to analogous problems defined at the level of IMCs,
for which there exist efficient solution algorithms~\cite{PLSS13,CHK13,CK15,Spr18}.

Let $\acdpta = (\locs, \inv, \pedges)$ be a 1c-cdPTA\@.
Let $\constants{\acdpta}$ be the set of constants that are used in the guards of probabilistic edges
and invariants of $\acdpta$,
and let $\borders = \constants{\acdpta} \cup \set{0}$.
We write $\borders = \set{\ab_0, \ab_1, \ldots , \ab_k}$,
where $0 = \ab_0 < \ab_1 < \ldots < \ab_k$.
The set $\borders$ defines the set $\intervalsof{\borders} =
\set{[\ab_0,\ab_0], (\ab_0,\ab_1), [\ab_1,\ab_1], \cdots , [\ab_k,\ab_k]}$,
i.e., $\intervalsof{\borders}$ is a partition of the interval $[0,\ab_k]$
into subintervals with endpoints in $\borders$,
and where each element of $\borders$ has a corresponding closed interval in $\intervalsof{\borders}$
comprising only that element.
We define a total order on $\intervalsof{\borders}$ in the following way:
$[\ab_0,\ab_0] < (\ab_0,\ab_1) < [\ab_1,\ab_1] < \cdots < [\ab_k,\ab_k]$.
Given an open interval $\abint=(\ab,\ab') \in \intervalsof{\borders}$,
its closure is written as $\closure{\abint}$, i.e., $\closure{\abint}=[\ab,\ab']$.
Furthermore, let $\lb(\abint)=\ab$ and $\rb(\abint)=\ab'$ refer to the left- and right-endpoints of $\abint$.
For a closed interval $[\ab,\ab] \in  \intervalsof{\borders}$,
we let $\lb(\abint)=\rb(\abint)=\ab$.

Let $\ccl$ be a guard of a probabilistic edge or an invariant of $\acdpta$.
By definition, we have that, for each $\abint \in \intervalsof{\borders}$,
either $\abint \subseteq \set{\aval \in \nnr \setsep \aval \vinz \ccl}$ or
$\abint \cap \set{\aval \in \nnr \setsep \aval \vinz \ccl} = \emptyset$.
We write $\abint \vinz \ccl$ in the case of $\abint \subseteq \set{\aval \in \nnr \setsep \aval \vinz \ccl}$
(representing the fact that all valuations of $\abint$ satisfy $\ccl$).

\begin{exa}\label{ex:constants}
Consider the 1c-cdPTA of \figref{fig:one_clock}.
We have $\borders = \set{0,1,3,4,5}$
and $\intervalsof{\borders} =
\set{[0,0], (0,1), [1,1], (1,3), [3,3], (3,4), [4,4], (4,5), [5,5]}$.
Consider the clock constraint $\aclock < 3$:
we have $\abint \vinz (\aclock < 3)$ for all $\abint \in \set{[0,0], (0,1), [1,1], (1,3)}$.
Similarly, for the clock constraint $4 < \aclock < 5$,
we have $(4,5) \vinz (4 < \aclock < 5)$.
\end{exa}

\subsection{\texorpdfstring{$\borders$}--minimal Schedulers.}
The following technical lemma specifies that any scheduler of the 1c-cdPTA
can be made ``more deterministic'' in the following way:
for each interval $\timeelapse{\abint} \in \intervalsof{\borders}$ and probabilistic edge $\apedge \in \pedges$,
if, after executing a certain finite path,
a scheduler chooses (assigns positive probability to) multiple actions
$(\timeelapse{\aval}_1,\apedge), \cdots, (\timeelapse{\aval}_n,\apedge)$
that share the same probabilistic edge $\apedge$
and for which $\timeelapse{\aval}_i \in \timeelapse{\abint}$
for all $1 \leq i \leq n$,
then we can construct another scheduler for which the aforementioned actions are replaced by
a \emph{single} action $(\timeelapse{\aval},\apedge)$ such that $\timeelapse{\aval} \in \timeelapse{\abint}$.
Formally, we say that a scheduler $\asched \in \strats{\semmdp{\acdpta}}$ of $\semmdp{\acdpta}$
is \emph{$\borders$-minimal} if,
for all finite paths $\afinpath \in \mfinpaths{\semmdp{\acdpta}}$,
for all probabilistic edges $\apedge \in \pedges$,
and for all pairs of actions
$(\timeelapse{\aval}_1,\apedge_1), (\timeelapse{\aval}_2,\apedge_2) \in \support(\asched(\afinpath))$,
either $\apedge_1 \neq \apedge_2$ or $\timeelapse{\aval}_1$ and $\timeelapse{\aval}_2$
belong to distinct intervals in $\intervalsof{\borders}$,
i.e., the intervals $\timeelapse{\abint}_1, \timeelapse{\abint}_2 \in \intervalsof{\borders}$
for which $\timeelapse{\aval}_1 \in \timeelapse{\abint}_1$
and $\timeelapse{\aval}_2 \in \timeelapse{\abint}_2$
are such that $\timeelapse{\abint}_1 \neq \timeelapse{\abint}_2$.
Let $\bstrats{\semmdp{\acdpta}}$ be
the set of schedulers of $\strats{\semmdp{\acdpta}}$ that are $\borders$-minimal.
The lemma allows us to consider only $\borders$-minimal schedulers in the sequel,
permitting us to obtain a close correspondence between the schedulers of $\semmdp{\acdpta}$
and the schedulers of the IMDP that we describe how to construct in the next subsection.
\begin{lem}\label{lem:bminimal}
Let $(\aloc,\aval) \in \mfromstates{\acdpta}$ and $\targetlocs \subseteq \locs$.
Then, for each $\asched \in \strats{\semmdp{\acdpta}}$,
there exists $\ansched \in \bstrats{\semmdp{\acdpta}}$
such that
$\probl{\asched}{(\aloc,\aval)}(\Diamond \targetsetof{\targetlocs})
= \probl{\ansched}{(\aloc,\aval)}(\Diamond \targetsetof{\targetlocs}) $.
\end{lem}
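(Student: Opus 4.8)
The plan is to transform an arbitrary scheduler $\asched \in \strats{\semmdp{\acdpta}}$ into a $\borders$-minimal one by a local ``merging'' operation applied, after each finite path, to the sets of actions that share a common probabilistic edge $\apedge$ and whose time-delay components all fall inside the same interval $\timeelapse{\abint} \in \intervalsof{\borders}$. First I would fix such a finite path $\afinpath$ and a pair $(\apedge, \timeelapse{\abint})$, and consider the collection of actions $(\timeelapse{\aval}_1,\apedge), \dots, (\timeelapse{\aval}_n,\apedge) \in \support(\asched(\afinpath))$ with each $\timeelapse{\aval}_j \in \timeelapse{\abint}$; let $p_j = \asched(\afinpath)(\timeelapse{\aval}_j,\apedge)$ and $p = \sum_j p_j$. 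The key observation is that, because $\apedge$ is affine, the outcome distribution $\adtpara{\pd}{\timeelapse{\aval}}$ depends affinely (hence linearly, up to the constant term) on $\timeelapse{\aval}$, and the resulting transition distribution $\mtrans((\aloc,\aval),(\timeelapse{\aval},\apedge))$ of $\semmdp{\acdpta}$ is an affine function of $\timeelapse{\aval}$ as long as $\timeelapse{\aval}$ stays within a single interval of $\intervalsof{\borders}$ (so that the ``$=0$'', ``$>0$'' case split in the definition of $\mtrans$ does not change, and the set of enabled target locations is fixed). Therefore the $p$-weighted mixture $\frac{1}{p}\sum_j p_j \cdot \mtrans(\dots,(\timeelapse{\aval}_j,\apedge))$ equals $\mtrans(\dots,(\timeelapse{\aval}^*,\apedge))$ for the single convex-combination point $\timeelapse{\aval}^* = \frac{1}{p}\sum_j p_j \timeelapse{\aval}_j$, which again lies in $\timeelapse{\abint}$ (using that intervals in $\intervalsof{\borders}$ are convex — even the open ones are closed under the strict convex combination obtained here, since all the $\timeelapse{\aval}_j$ already lie strictly inside).

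Next I would define the new scheduler $\ansched$ that behaves exactly like $\asched$ except that, after $\afinpath$, for each such $(\apedge,\timeelapse{\abint})$ it replaces the actions $(\timeelapse{\aval}_1,\apedge),\dots,(\timeelapse{\aval}_n,\apedge)$ by the single action $(\timeelapse{\aval}^*,\apedge)$ carrying the pooled probability $p$ — and, crucially, it must do this coherently along \emph{every} finite path, so that the construction is well-defined on the whole tree $\mfinpaths{\semmdp{\acdpta}}$. The natural way to make this rigorous is to give a bijective (or at least probability-preserving) correspondence $\pathrel$ between the finite paths generated by $\asched$ and those generated by $\ansched$: a path under $\asched$ that uses action $(\timeelapse{\aval}_j,\apedge)$ at some step and then reaches state $(\aloc',\aval')$ is mapped to the path under $\ansched$ that uses $(\timeelapse{\aval}^*,\apedge)$ at that step and reaches the \emph{same} $(\aloc',\aval')$; several $\asched$-paths differing only in which $\timeelapse{\aval}_j$ was chosen collapse to one $\ansched$-path. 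I would then prove by induction on path length that $\mprob{\ansched}{(\aloc,\aval)}$ of a cylinder over an $\ansched$-path equals the sum of $\mprob{\asched}{(\aloc,\aval)}$ over the corresponding set of $\asched$-paths; the inductive step is exactly the affine-mixture identity above, $p \cdot \mtrans(\dots,(\timeelapse{\aval}^*,\apedge))(\aloc',\aval') = \sum_j p_j \cdot \mtrans(\dots,(\timeelapse{\aval}_j,\apedge))(\aloc',\aval')$. Since corresponding paths visit the same sequence of states (in particular the same location components), the events $\Diamond\targetsetof{\targetlocs}$ match up, giving $\probl{\asched}{(\aloc,\aval)}(\Diamond\targetsetof{\targetlocs}) = \probl{\ansched}{(\aloc,\aval)}(\Diamond\targetsetof{\targetlocs})$.

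Two technical points need care. First, after a single pass the scheduler need not yet be $\borders$-minimal, because merging is done ``per path'' but the construction above should be described as a single simultaneous rewrite of the scheduler rather than an iteration; I would phrase $\ansched$ directly by the formula $\ansched(\afinpath)(\timeelapse{\aval}^*,\apedge) = \sum_{\timeelapse{\aval}\in\timeelapse{\abint}} \asched(\afinpath)(\timeelapse{\aval},\apedge)$ with the unique representative $\timeelapse{\aval}^*$ chosen as the $\asched(\afinpath)$-barycentre of the $\timeelapse{\aval}$'s in $\timeelapse{\abint}$ that receive positive probability, so that $\ansched$ is $\borders$-minimal by construction. Second, one must check that $\ansched$ is a genuine scheduler of $\semmdp{\acdpta}$: the replacement action $(\timeelapse{\aval}^*,\apedge)$ must be available in $\last(\afinpath)$, i.e. conditions (1)–(3) in the definition of $\mtrans$ hold — this follows because $\timeelapse{\aval}^*$ lies in the same interval $\timeelapse{\abint}$ as the $\timeelapse{\aval}_j$, the guard of $\apedge$ and all invariants are constant on $\timeelapse{\abint}$, and $\timeelapse{\aval}^* \geq \aval$ since every $\timeelapse{\aval}_j \geq \aval$. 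The main obstacle I anticipate is the bookkeeping for the path correspondence $\pathrel$ and the clean statement of the measure-preservation induction — the probabilistic content is entirely contained in the elementary affine-mixture identity, but making the many-to-one collapse of paths precise (and verifying it respects $\Diamond\targetsetof{\targetlocs}$) is where the proof's length will go. The affine structure of $\apedge$ (and the fact that $\borders$ contains all guard/invariant constants, so each interval of $\intervalsof{\borders}$ is a region of constancy) is exactly what makes the identity hold, and this is the one place the hypotheses of the paper are genuinely used.
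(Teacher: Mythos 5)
There is a genuine gap, and it sits exactly where you locate the ``bookkeeping''. After the merge, the single $\ansched$-path ending in $(\aloc',\timeelapse{\aval}^*)$ has as pre-image \emph{several} $\asched$-paths ending in \emph{different} states $(\aloc',\timeelapse{\aval}_1),\dots,(\aloc',\timeelapse{\aval}_n)$ (the successor states coincide only when the clock is reset, so your one-step identity ``reaching the same $(\aloc',\aval')$'' is already ill-formed for non-reset outcomes). Hence $\ansched$ cannot ``behave exactly like $\asched$'' afterwards: it must aggregate the futures of $\asched$ over the whole pre-image, and this forces the construction to be organised not per finite path but per sequence of locations, intervals and probabilistic edges (the paper's $\borders$-paths, with $\ansched(\afinpath)$ defined as a $\probfinone{\asched}$-weighted average over $\corrtwo{\asched}{\abfinpath}$). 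More importantly, your blanket ``barycentre at every step'' rule is wrong. Take a constant probabilistic edge guarded by $0<\aclock<1$ (no reset) from $\mathrm{A}$ to $\mathrm{B}$, and a non-constant edge from $\mathrm{B}$ with probability $\aclock$ of reaching the target. Let $\asched$ split $\tfrac12$/$\tfrac12$ between delays $0.1$ and $0.8$ at $\mathrm{A}$; from $(\mathrm{B},0.1)$ it takes the non-constant edge at $0.15$ with probability $1$, from $(\mathrm{B},0.8)$ it takes it at $0.85$ with probability $0.01$ (and otherwise a constant edge). Your rule puts $\ansched$ at $(\mathrm{B},0.45)$, but the probability-weighted value it must realise at the non-constant edge is $\approx 0.157<0.45$, which is unreachable since time cannot decrease. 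This is precisely why the paper's construction uses the \emph{minimum} of the corresponding clock values for constant edges taken before the non-constant edge within an open interval, reserves the weighted average for the (single) non-constant edge, and allows an arbitrary value afterwards.

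The second missing ingredient is initialisation, which you never invoke; your claim that the one-step affine-mixture identity carries all the probabilistic content fails as soon as two non-constant edges can be taken while the clock stays in one open interval of $\intervalsof{\borders}$, because the path probability is then a \emph{product} of affine functions of the (shared, non-decreasing) clock value and is not reproduced by any single representative valuation. The paper's three-phase argument works exactly because initialisation guarantees at most one non-constant probabilistic edge per open-interval stretch, so the affine identity is needed at only one step and the constant edges around it leave the representative valuation free (minimum before, arbitrary after). So the overall strategy (collapse actions sharing an $(\timeelapse{\abint},\apedge)$ pair, path correspondence, induction on length) is the right one, but as stated the choice of representative valuations is incorrect, the aggregation defining $\ansched$ after the collapse is not specified, and the role of initialisation --- the hypothesis that makes the whole scheme sound --- is absent.
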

%

Before presenting the proof of \lemref{lem:bminimal},
we introduce the general approach for the construction
of the scheduler $\ansched \in \bstrats{\semmdp{\acdpta}}$
from $\asched \in \strats{\semmdp{\acdpta}}$.
Each finite path of $\ansched$ will be associated with
a number of finite paths of $\asched$;
then a choice of $\ansched$ after a finite path $\afinpath$
is based on a weighted average of the corresponding choices
made by $\asched$ after the finite paths associated with $\afinpath$.
The association between finite paths of $\asched$ and finite paths of $\ansched$ is defined
on the basis of the locations that are visited,
the intervals from $\intervalsof{\borders}$ that the value of the clock passes through,
and the probabilistic edges that are taken
along those finite paths.
To define the choice of transition of $\ansched$ after a finite path $\afinpath$,
we consider each pair $(\timeelapse{\abint},\apedge) \in \intervalsof{\borders} \times \pedges$,
and, if $\asched$ assigns positive probability to at least one transition $(\timeelapse{\aval},\apedge)$
such that $\timeelapse{\aval} \in \timeelapse{\abint}$,
we define a unique transition $(\timeelapse{\aval}',\apedge)$
that is assigned positive probability by $\ansched$
and for which $\timeelapse{\aval}' \in \timeelapse{\abint}$.
There are two principal cases for the transitions assigned positive probability by $\ansched$:
if $\timeelapse{\abint}$ is a closed interval,
then there is only one possible choice for $\timeelapse{\aval}'$
(i.e., if $\timeelapse{\abint} = [\ab,\ab]$ then $\timeelapse{\aval}' = \ab$);
instead, if $\timeelapse{\abint}$ is an open interval,
the choice of clock valuation $\timeelapse{\aval}' \in \timeelapse{\abint}$
used in the definition of the $\ansched$ is substantially more complicated.
%
While the scheduler $\ansched$ chooses transitions
for which the value of the clock remains within $\timeelapse{\abint}$,
the scheduler proceeds between three phases,
where the number of transitions in the first and third phases is arbitrary,
and where the second phase consists of at most one transition.
The first phase consists of transitions derived from constant probabilistic edges,
where the value of the clock is kept ``low''
(equal to the minimum of the value of the clock at the end of the corresponding finite paths of $\asched$).
The second phase consists of at most one transition derived from a non-constant probabilistic edge,
where the value of the clock is chosen
as a weighted average of the corresponding choices
made by $\asched$,
so as to replicate exactly the probability of taking the same
probabilistic edge by $\asched$.
The third phase consists of transitions derived from constant probabilistic edges,
and the value of the clock can be chosen in an arbitrary manner.
Note that the assumption of initialisation guarantees that we cannot have
more than one non-constant probabilistic edge between points at which the clock is equal to a value from $\borders$,
and hence at most one non-constant probabilistic edge can be taken while the value of the clock
remains continuously in the open interval $\timeelapse{\abint}$.

\begin{exa}\label{ex:bminimal}
Consider the 1c-cdPTA of \figref{fig:one_clock}.
Note that the 1c-cdPTA contains only non-constant probabilistic edges,
and hence our focus will be on the choice of clock valuations obtained as weighted averages
(the second phase described above).
In the following, we denote the outgoing probabilistic edges from $\mathrm{W}$ and $\mathrm{F}$
as $\apedge_\mathrm{W}$ and $\apedge_\mathrm{F}$, respectively.
Consider a scheduler $\asched \in \strats{\semmdp{\acdpta}}$,
where $\asched(\mathrm{W},0)$
(i.e., the choice of $\asched$ after the finite path comprising the single state $(\mathrm{W},0)$)
assigns probability $\frac{1}{2}$ to the action $(\frac{5}{4},\apedge_\mathrm{W})$
and probability $\frac{1}{2}$ to the action $(\frac{7}{4},\apedge_\mathrm{W})$
(where the two actions refer to either $\frac{5}{4}$ or $\frac{7}{4}$ time units elapsing,
after which the probabilistic edge $\apedge_\mathrm{W}$ is taken).
Then, following the approach that we will describe in the proof of \lemref{lem:bminimal},
we construct a $\borders$-minimal scheduler $\ansched \in \bstrats{\semmdp{\acdpta}}$
such that $\ansched(\mathrm{W},0)$ assigns probability $1$
to the action $(\frac{3}{2},\apedge_\mathrm{W})$
(i.e., where $\frac{3}{2} = \frac{1}{2} \cdot \frac{5}{4} + \frac{1}{2} \cdot \frac{7}{4}$).
Now consider finite paths
$\afinpath = (\mathrm{W},0) (\frac{5}{4},\apedge_\mathrm{W}) (\mathrm{F},\frac{5}{4})$
and
$\afinpath' = (\mathrm{W},0) (\frac{7}{4},\apedge_\mathrm{W}) (\mathrm{F},\frac{7}{4})$,
which are generated by $\asched$.
Note that $\probfin{\asched}{(\mathrm{W},0)}(\afinpath)
= \frac{1}{2} \cdot \frac{11 - \frac{3 \cdot 5}{4}}{16}$
and $\probfin{\asched}{(\mathrm{W},0)}(\afinpath')
= \frac{1}{2} \cdot \frac{11 - \frac{3 \cdot 7}{4}}{16}$.
Say that $\asched(\afinpath)$ assigns probability $1$ to $(\frac{17}{4},\apedge_\mathrm{F})$
and $\asched(\afinpath')$ assigns probability $1$ to $(\frac{19}{4},\apedge_\mathrm{F})$.
Then we continue the construction of $\ansched$
by letting $\ansched((\mathrm{W},0)(\frac{3}{2},\apedge_\mathrm{W})(\mathrm{F},\frac{3}{2}))$
assign probability $1$
to action $(\timeelapse{\aval},\apedge_\mathrm{F})$,
where $\timeelapse{\aval} =
\probfin{\asched}{(\mathrm{W},0)}(\afinpath) \cdot 1 \cdot \frac{17}{4}
+
\probfin{\asched}{(\mathrm{W},0)}(\afinpath') \cdot 1 \cdot \frac{19}{4}$,
i.e., a weighted sum of the time delays chosen by $\asched$ after $\afinpath$ and $\afinpath'$,
where the weights correspond to the probabilities of $\afinpath$ and $\afinpath'$
under $\asched$.
As we will show in the proof of \lemref{lem:bminimal},
repeating this reasoning for all finite paths will yield a $\borders$-minimal scheduler $\ansched$ such that
the probability of reaching a set of target states from $(\mathrm{W},0)$
is the same for both $\asched$ and $\ansched$.
\end{exa}

\begin{proof}[Proof of \lemref{lem:bminimal}.]
%
Let $(\aloc,\aval) \in \mfromstates{\acdpta}$, $\targetlocs \subseteq \locs$
and $\asched \in \strats{\semmdp{\acdpta}}$.
We proceed by describing formally the construction of $\ansched  \in \bstrats{\semmdp{\acdpta}}$,
and then show that
$\probl{\asched}{(\aloc,\aval)}(\Diamond \targetsetof{\targetlocs})
= \probl{\ansched}{(\aloc,\aval)}(\Diamond \targetsetof{\targetlocs})$.
We first introduce the following notation.
Given a probabilistic edge $(\aloc,\g,\pd) \in \pedges$,
we let $\sourceof{\aloc,\g,\pd} = \aloc$ and $\guardof{\aloc,\g,\pd} = \g$.
Furthermore, given interval $\timeelapse{\abint} \in \intervalsof{\borders}$,
we say that $(\aloc',\abint') \in \locs \times \intervalsof{\borders}$
is a \emph{successor of $(\timeelapse{\abint},(\aloc,\g,\pd))$}
if either $\abint'=[0,0]$ and
$\adtpara{\pd}{\aval}(\set{\aclock},\aloc') > 0$ for some $\aval \in \timeelapse{\abint}$,
or $\abint'=\timeelapse{\abint}$ and
$\adtpara{\pd}{\aval}(\emptyset,\aloc') > 0$ for some $\aval \in \timeelapse{\abint}$.

A (finite) \emph{$\borders$-path} is a sequence
$(\aloc_0,\abint_0)(\timeelapse{\abint}_0,\apedge_0)
(\aloc_1,\abint_1)(\timeelapse{\abint}_1,\apedge_1) \linebreak[0]
\cdots \linebreak[0] (\timeelapse{\abint}_{n-1},\apedge_{n-1}) \linebreak[0] (\aloc_n,\abint_n)$
where
(1)~$\aloc_i \in \locs$, $\abint_i \in \intervalsof{\borders}$ and $\abint_i \vinz \inv(\aloc_i)$
for all $0 \leq i \leq n$,
and (2)~$\sourceof{\apedge_i} = \aloc_i$,
$\timeelapse{\abint}_i \in \intervalsof{\borders}$,
$\timeelapse{\abint}_i \geq \abint_i$,
$\timeelapse{\abint}_i \vinz \guardof{\apedge_i} \wedge \inv(\aloc_i)$
and $(\aloc_{i+1},\abint_{i+1})$ is a successor of $(\timeelapse{\abint}_i,\apedge_i)$,
for all $0 \leq i < n$.

Consider a finite path $\afinpath =
(\aloc_0,\aval_0) (\timeelapse{\aval}_0,\apedge_0)
(\aloc_1,\aval_1) (\timeelapse{\aval}_1,\apedge_1)
\cdots
(\timeelapse{\aval}_{n-1},\apedge_{n-1}) (\aloc_n,\aval_n)$
of $\semmdp{\acdpta}$
(i.e., $\afinpath \in \mfinpaths{\semmdp{\acdpta}}$),
and a $\borders$-path
$\abfinpath =
(\aloc_0',\abint_0)(\timeelapse{\abint}_0,\apedge_0')
(\aloc_1',\abint_1)(\timeelapse{\abint}_1,\apedge_1')
\cdots (\timeelapse{\abint}_{m-1},\apedge_{m-1}')(\aloc_m',\abint_m)$.
Then $\afinpath$ \emph{corresponds to} $\abfinpath$ if
(1)~$n=m$,
(2)~$\aloc_i = \aloc_i'$ and $\aval_i \in \abint_i$ for all $0 \leq i \leq n$,
and (3)~$\timeelapse{\aval}_i \in \timeelapse{\abint}_i$ and $\apedge_i = \apedge_i'$ for all $0 \leq i < n$.
Given a scheduler $\asched \in \strats{\semmdp{\acdpta}}$ 
and a $\borders$-path $\abfinpath$,
we let $\corrtwo{\asched}{\abfinpath} \subseteq \mfinpaths{\asched}$
be the set of finite paths of $\asched$ that correspond to $\abfinpath$.

Note that, for a $\borders$-minimal scheduler $\ansched \in \bstrats{\semmdp{\acdpta}}$
and a $\borders$-path
\[
\abfinpath =
(\aloc_0,\abint_0)(\timeelapse{\abint}_0,\apedge_0)
(\aloc_1,\abint_1)(\timeelapse{\abint}_1,\apedge_1)
\cdots (\timeelapse{\abint}_{n-1},\apedge_{n-1})(\aloc_n,\abint_n) \; ,
\]
there exists \emph{at most one} finite path $\afinpath \in \corrtwo{\ansched}{\abfinpath}$
(i.e., $\corrtwo{\ansched}{\abfinpath}$ is either empty or a singleton).
The reasoning underlying this fact is as follows:
for each $i < n$,
letting $\abfinpathprefix{i} =
(\aloc_0,\abint_0)(\timeelapse{\abint}_0,\apedge_0)
(\aloc_1,\abint_1)(\timeelapse{\abint}_1,\apedge_1)
\cdots (\timeelapse{\abint}_{i-1},\apedge_{i-1})(\aloc_i,\abint_i)$,
and assuming that $\corrtwo{\ansched}{\abfinpathprefix{i}}$ contains a unique path denoted by $\afinpath$,
then there exists at most one valuation $\timeelapse{\aval} \in \timeelapse{\abint}_i$
such that $\ansched(\afinpath)(\timeelapse{\aval},\apedge_i) > 0$.
Furthermore,
from the definition of $\mfromtrans{\acdpta}$,
there exists at most one valuation $\aval' \in \abint_{i+1}$
such that $\mfromtrans{\acdpta}(\last(\afinpath),(\timeelapse{\aval},\apedge_i))(\aloc_{i+1},\aval') > 0$.
Conversely, by using similar reasoning,
for any finite path $\afinpath \in \mfinpaths{\ansched}$,
there exists exactly one $\borders$-path $\abfinpath$ such that
$\afinpath \in \corrtwo{\ansched}{\abfinpath}$.

We now construct $\ansched \in \bstrats{\semmdp{\acdpta}}$
such that $\probl{\asched}{(\aloc,\aval)}(\Diamond \targetsetof{\targetlocs})
= \probl{\ansched}{(\aloc,\aval)}(\Diamond \targetsetof{\targetlocs})$,
proceeding by induction on the length of paths.
Given a finite path $\afinpath =
(\aloc_0,\aval_0) (\timeelapse{\aval}_0,\apedge_0)
(\aloc_1,\aval_1) (\timeelapse{\aval}_1,\apedge_1)
\cdots \linebreak[0]
(\timeelapse{\aval}_{n-1},\apedge_{n-1}) (\aloc_n,\aval_n)$,
we consider the problem of defining $\ansched(\afinpath)$
(assuming that $\ansched$ has been defined for all prefixes of $\afinpath$).
Let $\abfinpath =
(\aloc_0,\abint_0)(\timeelapse{\abint}_0,\apedge_0)
(\aloc_1,\abint_1)(\timeelapse{\abint}_1,\apedge_1)
\cdots (\timeelapse{\abint}_{n-1},\apedge_{n-1})(\aloc_n,\abint_n)$
be the unique $\borders$-path such that $\afinpath$ corresponds to $\abfinpath$.
Now consider the extension of $\abfinpath$
with $(\timeelapse{\abint},\apedge) \in \intervalsof{\borders} \times \pedges$
such that $\sourceof{\apedge} = \aloc_n$,
$\timeelapse{\abint} \in \intervalsof{\borders}$,
$\timeelapse{\abint} \geq \abint_n$, and
$\timeelapse{\abint} \vinz \guardof{\apedge} \wedge \inv(\aloc_n)$.
Let $\valnset{\abfinpath}{\timeelapse{\abint}}{\apedge}$
be the set of clock valuations in $\timeelapse{\abint}$ that are featured with $\apedge$
in actions that are assigned positive probability by $\asched$ after paths corresponding to $\abfinpath$;
formally:
\[
\valnset{\abfinpath}{\timeelapse{\abint}}{\apedge} =
\set{ \timeelapse{\aval} \in \timeelapse{\abint} \setsep (\timeelapse{\aval},\apedge) \in \support(\asched(\afinpath')) \mbox{ and } \afinpath' \in \corrtwo{\asched}{\abfinpath} } \; .
\]
In order to define $\ansched(\afinpath)$,
we consider a number of cases that depend on
$\afinpath$ and the pair $(\timeelapse{\abint},\apedge)$ used to extend $\abfinpath$.
In all of the cases, we identify a clock valuation $\timeelapsespecial{\abfinpath}{\timeelapse{\abint}}{\apedge} \in \timeelapse{\abint}$
that depends on the case, 
and define:
\begin{eqnarray*}
\ansched(\afinpath)(\timeelapsespecial{\abfinpath}{\timeelapse{\abint}}{\apedge},\apedge) & = &
\frac{\sum_{\afinpath' \in \corrtwo{\asched}{\abfinpath}} \probfin{\asched}{(\aloc,\aval)}(\afinpath')
\cdot \sum_{\timeelapse{\aval} \in \valnset{\abfinpath}{\timeelapse{\abint}}{\apedge}} \asched(\afinpath')(\timeelapse{\aval},\apedge) }
{\probfin{\asched}{(\aloc,\aval)}(\corrtwo{\asched}{\abfinpath})} \; ,
\end{eqnarray*}
where $\probfin{\asched}{(\aloc,\aval)}(\corrtwo{\asched}{\abfinpath})
= \sum_{\afinpath' \in \corrtwo{\asched}{\abfinpath}} \probfin{\asched}{(\aloc,\aval)}(\afinpath')$.

First we consider the case in which $\timeelapse{\abint}$ is a closed interval,
i.e., $\timeelapse{\abint} = [\ab,\ab]$ for $\ab \in \borders$.
In this case, we simply let $\timeelapsespecial{\abfinpath}{\timeelapse{\abint}}{\apedge} = \ab$.
Note that $\valnset{\abfinpath}{\timeelapse{\abint}}{\apedge}$ is a singleton,
i.e., $\valnset{\abfinpath}{\timeelapse{\abint}}{\apedge}
= \set{\timeelapsespecial{\abfinpath}{\timeelapse{\abint}}{\apedge}}$.

Next, we consider the case in which $\timeelapse{\abint}$ is an open interval.
Recall the description of the three phases given above.
Let $k \leq n$ be the maximum index for which $\abint_k \neq \timeelapse{\abint}$
(and let $k = 0$ if no such index exists).
Note that $\timeelapse{\abint}_k = \abint_{k+1} = \timeelapse{\abint}_{k+1} = \cdots
= \timeelapse{\abint}_{n-1} = \abint_n = \timeelapse{\abint}$.
First we consider the subcase in which all probabilistic edges in the sequence $\apedge_k \cdots \apedge_n$
are constant, and $\apedge$ is also constant:
this corresponds to the first phase described above.
Note that
there may be multiple choices made by $\asched$ that correspond to $(\timeelapse{\abint},\apedge)$
(i.e., from each finite path in $\corrtwo{\asched}{\abfinpath}$,
the scheduler $\asched$ may assign positive probability to \emph{multiple} time delays,
each of which corresponds to a clock valuation in $\timeelapse{\abint}$).
In order to replicate these choices in $\ansched$ we consider a time delay that results in a clock valuation
that is equal to the \emph{minimum} clock valuation assigned positive probability
after any finite path in $\corrtwo{\asched}{\abfinpath}$,
with the motivation that taking such a minimum gives $\ansched$ sufficient freedom in the second phase.
Hence we let $\timeelapsespecial{\abfinpath}{\timeelapse{\abint}}{\apedge} =
\min \valnset{\abfinpath}{\timeelapse{\abint}}{\apedge}$.

Now consider the subcase in which $\timeelapse{\abint}$ is an open interval,
and all probabilistic edges in the sequence $\apedge_k \cdots \apedge_n$
are constant, but $\apedge$ is non-constant,
i.e., corresponding to the second phase described above.
The definition of $\ansched$ is similar to that of the first phase,
although the choice of clock valuation by $\ansched$ corresponds to the \emph{weighted average}
of the choice of clock valuations made by $\asched$,
where the weights refer to the probabilities of the finite paths of $\asched$
multiplied by the probability assigned by $\asched$ to the particular clock valuation.
Formally, let:
\begin{eqnarray*}
\timeelapsespecial{\abfinpath}{\timeelapse{\abint}}{\apedge} & = &
\frac{\sum_{\afinpath' \in \corrtwo{\asched}{\abfinpath}} \probfin{\asched}{(\aloc,\aval)}(\afinpath')
\cdot \sum_{\timeelapse{\aval} \in \valnset{\abfinpath}{\timeelapse{\abint}}{\apedge}} \asched(\afinpath')(\timeelapse{\aval},\apedge) \cdot  \timeelapse{\aval}}
{\sum_{\afinpath' \in \corrtwo{\asched}{\abfinpath}}
\probfin{\asched}{(\aloc,\aval)}(\afinpath')
\cdot
\sum_{\timeelapse{\aval} \in \valnset{\abfinpath}{\timeelapse{\abint}}{\apedge}}
\asched(\afinpath')(\timeelapse{\aval},\apedge)} \; .
\end{eqnarray*}
We will see later in this proof that the use of the weighted average
for the choice of clock valuation by $\ansched$
is appropriate due to the fact that the clock dependencies featured in this paper are affine.

Finally, we consider the subcase in which $\timeelapse{\abint}$ is an open interval,
there exists $i \in \set{k,\ldots,n}$ such that $\apedge_i$ is non-constant,
and for all other $j \in \set{k,\ldots,i{-}1,i{+}1,\ldots,n}$ we have that $\apedge_i$ is constant,
and also $\apedge$ is constant.
This corresponds to the third phase described above.
For this subcase, the clock valuation can be arbitrary:
for simplicity we retain the same clock valuation as was used in the final state of $\afinpath$.
Formally, if $\last(\afinpath)$ is equal to $(\aloc',\aval')$,
we let $\timeelapsespecial{\abfinpath}{\timeelapse{\abint}}{\apedge} = \aval'$.

We repeat this process for all $(\timeelapse{\abint},\apedge) \in \intervalsof{\borders} \times \pedges$
such that $\sourceof{\apedge} = \aloc_n$,
$\timeelapse{\abint} \in \intervalsof{\borders}$,
$\timeelapse{\abint} \geq \abint_i$,
and
$\timeelapse{\abint} \vinz \guardof{\apedge} \wedge \inv(\aloc_n)$.
This suffices to define comprehensively the distribution $\ansched(\afinpath)$.
A formal justification for this fact now follows.
Let $\bactsset$ be the set of pairs $(\timeelapse{\abint},\apedge)$
such that $\sourceof{\apedge} = \aloc_n$,
$\timeelapse{\abint} \in \intervalsof{\borders}$,
$\timeelapse{\abint} \geq \abint_i$,
and
$\timeelapse{\abint} \vinz \guardof{\apedge} \wedge \inv(\aloc_n)$.
Now observe that,
for any $\afinpath' \in  \corrtwo{\asched}{\abfinpath}$,
we have:
\[
\sum_{(\timeelapse{\abint},\apedge) \in \bactsset} \;\;
\sum_{\timeelapse{\aval} \in \valnset{\abfinpath}{\timeelapse{\abint}}{\apedge}} \asched(\afinpath')(\timeelapse{\aval},\apedge)  = 1
\]
(because, for each $(\timeelapse{\aval},\apedge) \in \support(\asched(\afinpath'))$
there exists $(\timeelapse{\abint},\apedge) \in \bactsset$
such that $\timeelapse{\aval} \in \valnset{\abfinpath}{\timeelapse{\abint}}{\apedge}$).
We now show that
$\sum_{(\timeelapse{\abint},\apedge) \in \bactsset} \ansched(\afinpath)(\timeelapsespecial{\abfinpath}{\timeelapse{\abint}}{\apedge},\apedge) = 1$,
i.e., $\ansched(\afinpath)$ is a distribution:
\begin{eqnarray*}
\sum_{(\timeelapse{\abint},\apedge) \in \bactsset} \ansched(\afinpath)(\timeelapsespecial{\abfinpath}{\timeelapse{\abint}}{\apedge},\apedge)
& = &
\sum_{(\timeelapse{\abint},\apedge) \in \bactsset} \left(
\frac{\sum_{\afinpath' \in \corrtwo{\asched}{\abfinpath}} \probfin{\asched}{(\aloc,\aval)}(\afinpath')
\cdot \sum_{\timeelapse{\aval} \in \valnset{\abfinpath}{\timeelapse{\abint}}{\apedge}} \asched(\afinpath')(\timeelapse{\aval},\apedge) }
{\probfin{\asched}{(\aloc,\aval)}(\corrtwo{\asched}{\abfinpath})} \right)
\\
& = &
\frac{\sum_{\afinpath' \in \corrtwo{\asched}{\abfinpath}} \probfin{\asched}{(\aloc,\aval)}(\afinpath')
\cdot
\sum_{(\timeelapse{\abint},\apedge) \in \bactsset}
\sum_{\timeelapse{\aval} \in \valnset{\abfinpath}{\timeelapse{\abint}}{\apedge}} \asched(\afinpath')(\timeelapse{\aval},\apedge) }
{\probfin{\asched}{(\aloc,\aval)}(\corrtwo{\asched}{\abfinpath})}
\\
& = &
\frac{\sum_{\afinpath' \in \corrtwo{\asched}{\abfinpath}} \probfin{\asched}{(\aloc,\aval)}(\afinpath') }
{\probfin{\asched}{(\aloc,\aval)}(\corrtwo{\asched}{\abfinpath})}
\\
& = &
1 \; .
\end{eqnarray*}

We now proceed to show that
$\probl{\asched}{(\aloc,\aval)}(\Diamond \targetsetof{\targetlocs})
= \probl{\ansched}{(\aloc,\aval)}(\Diamond \targetsetof{\targetlocs})$.
It suffices to show that
$\probfin{\asched}{(\aloc,\aval)}(\corrtwo{\asched}{\abfinpath})
= \probfin{\ansched}{(\aloc,\aval)}(\corrtwo{\ansched}{\abfinpath})$
for all $\borders$-paths $\abfinpath$
(because all paths in $\corrtwo{\asched}{\abfinpath}$ reach a location in $\targetlocs$
if and only if the unique path in $\corrtwo{\ansched}{\abfinpath}$ reaches a location in $\targetlocs$).
We proceed by induction on the length of $\borders$-paths.
Let $\abfinpath$ be a $\borders$-path,
and assume that we have established that $\probfin{\asched}{(\aloc,\aval)}(\corrtwo{\asched}{\abfinpath})
= \probfin{\ansched}{(\aloc,\aval)}(\corrtwo{\ansched}{\abfinpath})$.
Now consider the $\borders$-path $\abfinpath (\timeelapse{\abint},\apedge) (\aloc',\abint')$
that extends $\abfinpath$ with one transition.
Our aim is to show that
$\probfin{\asched}{(\aloc,\aval)}(\corrtwo{\asched}{\abfinpath (\timeelapse{\abint},\apedge) (\aloc',\abint')})
= \probfin{\ansched}{(\aloc,\aval)}(\corrtwo{\ansched}{\abfinpath (\timeelapse{\abint},\apedge) (\aloc',\abint')})$.
In the sequel,
we overload the notation $\mfromtrans{\acdpta}$ in the following way:
given $\timeelapse{\aval} \in \valnset{\abfinpath}{\timeelapse{\abint}}{\apedge}$, we let
$\mfromtrans{\acdpta}(\last(\afinpath'),(\timeelapse{\aval},\apedge))(\aloc',\abint') = 0$
if $\mfromtrans{\acdpta}(\last(\afinpath'),(\timeelapse{\aval},\apedge))(\aloc',\aval') = 0$
for all $\aval' \in \abint'$,
and $\mfromtrans{\acdpta}(\last(\afinpath'),(\timeelapse{\aval},\apedge))(\aloc',\abint') =
\mfromtrans{\acdpta}(\last(\afinpath'),(\timeelapse{\aval},\apedge))(\aloc',\aval')$
for the unique clock valuation $\aval' \in \abint'$
such that $\mfromtrans{\acdpta}(\last(\afinpath'),(\timeelapse{\aval},\apedge))(\aloc',\aval') > 0$
(note that uniqueness follows from the fact that either $\aval' = 0$ or $\aval' = \timeelapse{\aval}$).
In the following we consider the most involved case,
namely that concerning $\timeelapse{\abint}$ being open and $\apedge$ being non-constant,
i.e., the second phase of the case of open $\timeelapse{\abint}$ described above.
Assume that $\mfromtrans{\acdpta}(\last(\afinpath'),(\timeelapse{\aval},\apedge))(\aloc',\abint') > 0$,
i.e., there is a clock valuation $\aval' \in \abint'$ such that
$\mfromtrans{\acdpta}(\last(\afinpath'),(\timeelapse{\aval},\apedge))(\aloc',\aval') > 0$.
Furthermore, assume that $\aval' = \timeelapse{\aval}$,
which means that $\mfromtrans{\acdpta}(\last(\afinpath'),(\timeelapse{\aval},\apedge))(\aloc',\abint')
= \adtpara{\pd}{\timeelapse{\aval}}(\emptyset,\aloc')$
(the case in which $\abint'=[0,0]$,
and hence $\aval' = 0$ and $\mfromtrans{\acdpta}(\last(\afinpath'),(\timeelapse{\aval},\apedge))(\aloc',\abint')
= \adtpara{\pd}{\timeelapse{\aval}}(\set{\aclock},\aloc')$,
is similar),
where we use $\adisttemp$ to denote the distribution template of probabilistic edge $\apedge$.
Let $(\fc{\apedge}{(\emptyset,\aloc')}, \fd{\apedge}{(\emptyset,\aloc')}) \in \Qset^2$ be the pair associated
with $\apedge$ and $(\emptyset,\aloc')$ defining the clock dependency;
in order to simplify notation in the sequel,
we let $\fcno = \fc{\apedge}{(\emptyset,\aloc')}$ and $\fdno = \fd{\apedge}{(\emptyset,\aloc')}$.
Furthermore, we let $\corrtwo{\ansched}{\abfinpath} = \afinpath$,
and write $\probfinone{\asched}$ and $\probfinone{\ansched}$
rather than $\probfin{\asched}{(\aloc,\aval)}$ and $\probfin{\ansched}{(\aloc,\aval)}$, respectively.
Then we have:
\begin{longtable}{RCL}
& &
\multicolumn{1}{p{0.9\textwidth}}{
$\probfinone{\asched}(\corrtwo{\asched}{\abfinpath (\timeelapse{\abint},\apedge) (\aloc',\abint')})$
}
\\
& = &
\sum_{\afinpath' \in \corrtwo{\asched}{\abfinpath (\timeelapse{\abint},\apedge) (\aloc',\abint')}}
\probfinone{\asched}(\afinpath')
\\
\multicolumn{3}{r}{
\mbox{(by definition of $\corrtwo{\asched}{\abfinpath (\timeelapse{\abint},\apedge) (\aloc',\abint')}$)}
}
\\
& = &
\sum_{\afinpath' \in \corrtwo{\asched}{\abfinpath}}
\sum_{\timeelapse{\aval} \in \valnset{\abfinpath}{\timeelapse{\abint}}{\apedge}}
\probfinone{\asched}(\afinpath')
\cdot
\asched(\afinpath')(\timeelapse{\aval},\apedge)
\cdot
\mfromtrans{\acdpta}(\last(\afinpath'),(\timeelapse{\aval},\apedge))(\aloc',\abint')
\\
\multicolumn{3}{r}{
\mbox{(by definition of $\probfinone{\asched}$)}
}
\\
& = &
\sum_{\afinpath' \in \corrtwo{\asched}{\abfinpath}}
\probfinone{\asched}(\afinpath')
\cdot
\sum_{\timeelapse{\aval} \in \valnset{\abfinpath}{\timeelapse{\abint}}{\apedge}}
\asched(\afinpath')(\timeelapse{\aval},\apedge)
\cdot
\adtpara{\pd}{\timeelapse{\aval}}(\emptyset,\aloc')
\hspace{4cm}
\\
\multicolumn{3}{r}{
\mbox{(rearranging and by definition of $\mfromtrans{\acdpta}$)}
}
\\
& = &
\sum_{\afinpath' \in \corrtwo{\asched}{\abfinpath}}
\probfinone{\asched}(\afinpath')
\cdot
\sum_{\timeelapse{\aval} \in \valnset{\abfinpath}{\timeelapse{\abint}}{\apedge}}
\asched(\afinpath')(\timeelapse{\aval},\apedge)
\cdot
(\fcno + \fdno \cdot \timeelapse{\aval})
\\
\multicolumn{3}{r}{
\mbox{(by definition of non-constant probabilistic edges)}
}
\\
& = &
\left(\sum_{\afinpath' \in \corrtwo{\asched}{\abfinpath}}
\probfinone{\asched}(\afinpath')
\cdot
\sum_{\timeelapse{\aval} \in \valnset{\abfinpath}{\timeelapse{\abint}}{\apedge}}
\asched(\afinpath')(\timeelapse{\aval},\apedge)
\right)
\\
& &
\times
\left(\fcno
+
\fdno \frac{
\sum_{\afinpath' \in \corrtwo{\asched}{\abfinpath}}
\probfinone{\asched}(\afinpath')
\cdot
\sum_{\timeelapse{\aval} \in \valnset{\abfinpath}{\timeelapse{\abint}}{\apedge}}
\asched(\afinpath')(\timeelapse{\aval},\apedge)
\cdot
\timeelapse{\aval}
}
{\sum_{\afinpath' \in \corrtwo{\asched}{\abfinpath}}
\probfinone{\asched}(\afinpath')
\cdot
\sum_{\timeelapse{\aval} \in \valnset{\abfinpath}{\timeelapse{\abint}}{\apedge}}
\asched(\afinpath')(\timeelapse{\aval},\apedge)}
\right)
\hspace{4cm}
\\
\multicolumn{3}{r}{
\mbox{(rearranging)}
}
\\
& = &
\left(\sum_{\afinpath' \in \corrtwo{\asched}{\abfinpath}}
\probfinone{\asched}(\afinpath')
\cdot
\sum_{\timeelapse{\aval} \in \valnset{\abfinpath}{\timeelapse{\abint}}{\apedge}}
\asched(\afinpath')(\timeelapse{\aval},\apedge)
\right)
(\fcno + \fdno \cdot \timeelapsespecial{\abfinpath}{\timeelapse{\abint}}{\apedge})
\\
\multicolumn{3}{r}{
\mbox{(by definition of $\timeelapsespecial{\abfinpath}{\timeelapse{\abint}}{\apedge}$)}
}
\\
& = &
\left(\sum_{\afinpath' \in \corrtwo{\asched}{\abfinpath}}
\probfinone{\asched}(\afinpath')
\cdot
\sum_{\timeelapse{\aval} \in \valnset{\abfinpath}{\timeelapse{\abint}}{\apedge}}
\asched(\afinpath')(\timeelapse{\aval},\apedge)
\right)
\adtpara{\pd}{\timeelapsespecial{\abfinpath}{\timeelapse{\abint}}{\apedge}}(\emptyset,\aloc')
\\
\multicolumn{3}{r}{
\mbox{(by definition of non-constant probabilistic edges)}
}
\\
& = &
\probfinone{\asched}(\corrtwo{\asched}{\abfinpath})
\cdot
\ansched(\afinpath)(\timeelapsespecial{\abfinpath}{\timeelapse{\abint}}{\apedge},\apedge)
\cdot
\mfromtrans{\acdpta}(\last(\afinpath),(\timeelapsespecial{\abfinpath}{\timeelapse{\abint}}{\apedge},\apedge))(\aloc',\abint')
\hspace{2cm}
\\
\multicolumn{3}{r}{
\mbox{(from the construction of $\ansched$ and by the overloaded definition of $\mfromtrans{\acdpta}$)}
}
\\
& = &
\probfinone{\ansched}(\afinpath)
\cdot
\ansched(\afinpath)(\timeelapsespecial{\abfinpath}{\timeelapse{\abint}}{\apedge},\apedge)
\cdot
\mfromtrans{\acdpta}(\last(\afinpath),(\timeelapsespecial{\abfinpath}{\timeelapse{\abint}}{\apedge},\apedge))(\aloc',\abint')
\\
\multicolumn{3}{r}{
\mbox{(by induction)}
}
\\
& = &
\probfinone{\ansched}(\corrtwo{\ansched}{\abfinpath}  (\timeelapse{\abint},\apedge) (\aloc',\abint'))
\\
\multicolumn{3}{r}{
\mbox{(by definition of $\probfinone{\ansched}$).}
}
\end{longtable}
%
Observe in particular the fact that we use the property that the clock dependencies are affine,
together with the fact that the clock valuation
$\timeelapsespecial{\abfinpath}{\timeelapse{\abint}}{\apedge}$
is a weighted average of the choices of clock valuations made by $\asched$,
to obtain the fourth to seventh steps.

We remark briefly that the cases for the first and third phases are simpler,
because the probabilities $\adtpara{\pd}{\cdot}(\emptyset,\aloc')$ and $\adtpara{\pd}{\cdot}(\set{\aclock},\aloc')$
are constant
(this means that the choice of clock valuation made by $\ansched$,
as long as it is in the open region denoted by $\timeelapse{\abint}$ above,
is arbitrary).
The case in which $\timeelapse{\abint}$ is closed is even more straightforward,
because there is only \emph{one} choice of clock valuation
when considering choices of $\asched$ and $\ansched$ that correspond to $\abfinpath$
extended with $(\timeelapse{\abint},\apedge)$.
\end{proof}

\subsection{IMDP Construction.}
We now present the idea of the IMDP construction.
The states of the IMDP fall into two categories:
(1)~pairs comprising a location and an interval from $\intervalsof{\borders}$,
with the intuition that the state $(\aloc,\abint) \in \locs \times \intervalsof{\borders}$ of the IMDP
represents all states $(\aloc,\aval)$ of $\semmdp{\acdpta}$ such that $\aval \in \abint$;
(2)~triples comprising an interval from $\intervalsof{\borders}$,
a probabilistic edge and a bit that specifies whether the state refers to the left- or right-endpoint of the interval.
A single transition of the semantics of the 1c-cdPTA,
which we recall represents the elapse of time (therefore increasing the value of the clock)
followed by the traversal of a probabilistic edge,
is represented by a sequence of \emph{two} transitions in the IMDP\@.
The first IMDP transition in the sequence represents the choice of
(i)~the probabilistic edge,
(ii)~the interval in $\intervalsof{\borders}$ which contains the valuation of the clock
after letting time elapse and
immediately before the probabilistic edge is traversed,
and (iii)~in the case in which the aforementioned interval is open,
an endpoint of the interval chosen in (ii).
The second IMDP transition in the sequence represents the probabilistic choice made
according to the chosen probabilistic edge, interval and endpoint chosen in the first transition of the sequence.
%


\begin{figure}[t]
\centering

\tiny

\begin{tikzpicture}[->,>=stealth',shorten >=1pt,auto, thin] 

  \tikzstyle{state}=[draw=black, text=black, shape=rectangle, inner sep=3pt, outer sep=0pt, rectangle, rounded corners] 
  \tikzstyle{delay}=[draw=black, text=black, shape=rectangle, inner sep=3pt, outer sep=0pt, rectangle, rounded corners,fill=gray!20] 
  \tikzstyle{final_state}=[draw=black, text=black, shape=rectangle, inner sep=3pt, outer sep=0pt, rectangle, rounded corners, fill= lg] 


	\node at (5,5)[state](W){$(\mathrm{W},[0,0])$};

	\node at (5,4)[fill=black] (nail_W) {};

	\node at (3.5,3)[delay] (left_W) {$((1,3),\apedge_\mathrm{W},\lb)$};

	\node at (6.5,3)[delay] (right_W) {$((1,3),\apedge_\mathrm{W},\rb)$};

	\node at (2,2)[state](S) {$(\mathrm{S},(1,3))$};

	\node at (5,0)[state](Tleft) {$(\mathrm{T},(1,3))$};

	\node at (8,2)[state](F) {$(\mathrm{F},(1,3))$};

	\node at (10,2)[fill=black] (nail_F) {};

	\node at (12,3)[delay] (left_F) {$((4,5),\apedge_\mathrm{W},\lb)$};

	\node at (12,1)[delay] (right_F) {$((4,5),\apedge_\mathrm{W},\rb)$};

	\node at (12,0)[state](Tright) {$(\mathrm{T},(4,5))$};

	\path[rounded corners,dashed]
	(left_W)
		edge [left] node {$[0,0]$} (S);

	\path[rounded corners]

	(W)
		edge [left] node[yshift=0mm] {$((1,3),\apedge_\mathrm{W})$} (nail_W)

	(nail_W)
		edge [left] node {$(0,1)$} (left_W)
		edge [right] node {$(0,1)$} (right_W)

	(left_W)
		edge [left] node {$[\frac{1}{2},\frac{1}{2}]$} (Tleft)
		edge [below,pos=0.8] node {$[\frac{1}{2},\frac{1}{2}]$} (F)

	(right_W)
		edge [below,pos=0.8] node {$[\frac{3}{4},\frac{3}{4}]$} (S)
		edge [right] node {$[\frac{1}{8},\frac{1}{8}]$} (Tleft)
		edge [right] node {$[\frac{1}{8},\frac{1}{8}]$} (F)

	(F)
		edge [above] node[yshift=0mm] {$((4,5),\apedge_\mathrm{F})$} (nail_F)

	(nail_F)
		edge [right] node {$(0,1)$} (left_F)
		edge [right] node {$(0,1)$} (right_F)

	(right_F)
		edge [left] node {$[\frac{1}{2},\frac{1}{2}]$} (Tright);

\draw[->,rounded corners,dashed] (left_F.north) -- (12,5)node[left,text=black,pos=.7]{$[0,0]$} -- (W.east);
\draw[->,rounded corners] (left_F.east) -- (13.2,3) -- (13.2,0) node[left,text=black,pos=.35]{$[1,1]$} -- (Tright.east);
\draw[->,rounded corners] (right_F.west) -- (10,1) -- (10,-0.5) -- (13.7,-0.5) -- node[right,text=black,pos=.7]{$[\frac{1}{2},\frac{1}{2}]$}(13.7,5.5) -- (5,5.5) -- (W.north);

\end{tikzpicture}

\caption{Interval Markov decision process $\oimdpfrom{\acdpta}$ obtained from $\acdpta$ of \figref{fig:one_clock}.}%
\label{fig:imdp}

\end{figure}


\begin{exa}\label{ex:imdp}
The IMDP construction, applied to the example of \figref{fig:one_clock},
is shown in \figref{fig:imdp}
(note that transitions corresponding to probability $0$ are shown with a dashed line).
The location $\mathrm{W}$, and the value of the clock being $0$,
is represented by the state $(\mathrm{W},[0,0])$.
Recall that the outgoing probabilistic edge from $\mathrm{W}$
is enabled when the clock is between $1$ and $3$:
hence the single action $((1,3),\apedge_\mathrm{W})$ is available from $(\mathrm{W},[0,0])$
(representing the set of actions $(\timeelapse{\aval},\apedge_\mathrm{W})$ of $\semmdp{\acdpta}$
with $\timeelapse{\aval} \in (1,3)$).
The action $((1,3),\apedge_\mathrm{W})$ is associated with two target states,
$((1,3),\apedge_\mathrm{W},\lb)$ and $((1,3),\apedge_\mathrm{W},\rb)$,
where the third component of the state ($\lb$ or $\rb$)
denotes whether the state refers to the left- or right-endpoint of $(1,3)$.
Each of the states $((1,3),\apedge_\mathrm{W},\lb)$ and $((1,3),\apedge_\mathrm{W},\rb)$
is associated with the probability interval $(0,1)$,
referring to the probability of making a transition to those states from $(\mathrm{W},[0,0])$
with action $((1,3),\apedge_\mathrm{W})$.
The choice of probability within the interval is done in the IMDP
to represent a choice of clock valuation in $(1,3)$:
for example, the clock valuation $\frac{3}{2}$ is represented by the assignment
that associates probability $\frac{3}{4}$ with $((1,3),\apedge_\mathrm{W},\lb)$
and $\frac{1}{4}$ with $((1,3),\apedge_\mathrm{W},\rb)$
(i.e., assigns a weight of $\frac{3}{4}$ to the lower bound of $(1,3)$,
and a weight of $\frac{1}{4}$ to the upper bound of $(1,3)$,
obtaining the weighted combination $\frac{3}{4} \cdot 1 + \frac{1}{4} \cdot 3 = \frac{3}{2}$).
Then, from both $((1,3),\apedge_\mathrm{W},\lb)$ and $((1,3),\apedge_\mathrm{W},\rb)$,
there is a probabilistic choice regarding the target IMDP state to make the subsequent transition to,
i.e., the transitions from $((1,3),\apedge_\mathrm{W},\lb)$ and $((1,3),\apedge_\mathrm{W},\rb)$
do not involve nondeterminism,
because there is only one action available, and because the resulting interval distribution assigns singleton intervals
to all possible target states.\footnote{Given that there is only one action available from states such as
$((1,3),\apedge_\mathrm{W},\lb)$ and $((1,3),\apedge_\mathrm{W},\rb)$,
we omit both the action and the usual black box from the figure.}
The probabilities of the transitions from $((1,3),\apedge_\mathrm{W},\lb)$ and $((1,3),\apedge_\mathrm{W},\rb)$
are derived from the clock dependencies associated with $1$ (i.e., the left endpoint of $(1,3)$)
and $3$ (i.e., the right endpoint of $(1,3)$), respectively.
Hence the multiplication of the probabilities of the two aforementioned transitions
(from $(\mathrm{W},[0,0])$ to either $((1,3),\apedge_\mathrm{W},\lb)$ or $((1,3),\apedge_\mathrm{W},\rb)$,
and then to $(\mathrm{S},(1,3))$, $(\mathrm{T},(1,3))$ or $(\mathrm{F},(1,3))$)
represents exactly the probability of a single transition in the 1c-cdPTA\@.
For example,
in the 1c-cdPTA, considering again the example of the clock valuation associating $\frac{3}{2}$ with $\aclock$,
the probability of making a transition to location $\mathrm{S}$ is $\frac{3\aclock - 3}{8} = \frac{3}{16}$;
in the IMDP, assigning $\frac{3}{4}$ to the transition to $((1,3),\apedge_\mathrm{W},\lb)$
and $\frac{1}{4}$ to the transition to $((1,3),\apedge_\mathrm{W},\rb)$,
we then obtain that the probability of making a transition to $(\mathrm{S},(1,3))$ from $(\mathrm{W},[0,0])$
is $\frac{3}{4} \cdot 0 + \frac{1}{4} \cdot \frac{3}{4} = \frac{3}{16}$.
Similar reasoning applies to the transitions available from $(\mathrm{F},(1,3))$.
\end{exa}

We now describe formally the construction of the IMDP
$\oimdpfrom{\acdpta} =
(\omfromstates{\oimdpfrom{\acdpta}},\omfromactions{\oimdpfrom{\acdpta}},\omfromtrans{\oimdpfrom{\acdpta}})$.
The set of states of $\oimdpfrom{\acdpta}$ is defined as
$\omfromstates{\oimdpfrom{\acdpta}} = \regstates \cup \eis$,
where:
\begin{eqnarray*}
\regstates & = & \set{(\aloc,\abint) \in \locs \times \intervalsof{\borders} \setsep \abint \vinz \inv(\aloc)}
\\
\eis & = &
\set{(\timeelapse{\abint},(\aloc,\g,\pd),\dir) \in \intervalsof{\borders} \times \pedges \times \set{\lb,\rb}
\setsep
\timeelapse{\abint} \vinz \g \wedge \inv(\aloc)} \; .
\end{eqnarray*}
In order to distinguish states of $\semmdp{\acdpta}$ and states of $\oimdpfrom{\acdpta}$,
we refer to elements of
$\regstates$ as \emph{regions},
and elements of
$\eis$
as \emph{endpoint indicators}.
The set of actions of $\oimdpfrom{\acdpta}$ is defined as
\[
\omfromactions{\oimdpfrom{\acdpta}} = \set{(\timeelapse{\abint},(\aloc,\g,\pd)) \in \intervalsof{\borders} \times \pedges
\setsep
\timeelapse{\abint} \vinz \g \wedge \inv(\aloc)}
\cup \set{ \dummyact }
\]
(i.e., there is an action for each combination of interval from $\intervalsof{\borders}$
and probabilistic edge such that all valuations from the interval satisfy
both the guard of the probabilistic edge
and the invariant condition of its source location).
For each region $(\aloc,\abint) \in \regstates$,
let $\omfromactionsof{\oimdpfrom{\acdpta}}{\aloc,\abint} =
\set{(\timeelapse{\abint},(\aloc',\g,\pd)) \in \omfromactions{\oimdpfrom{\acdpta}} \setsep \aloc = \aloc' \mbox{ and } \timeelapse{\abint} \geq \abint}$.\footnote{Note that
$\omfromactionsof{\oimdpfrom{\acdpta}}{\aloc,\abint} \neq \emptyset$
for each $(\aloc,\abint) \in \regstates$,
by the assumptions that we made on 1c-cdPTA in \sectref{sec:onec-cdpta}
(namely, that it is always possible to take a probabilistic edge,
either immediately or after letting time elapse).}
For each $(\timeelapse{\abint},\apedge,\dir) \in \eis$,
let $\omfromactionsof{\oimdpfrom{\acdpta}}{\timeelapse{\abint},\apedge,\dir} = \set{ \dummyact }$.
The transition function $\omfromtrans{\oimdpfrom{\acdpta}}: \omfromstates{\oimdpfrom{\acdpta}} \times \omfromactions{\oimdpfrom{\acdpta}} \ra \intdist(\omfromstates{\oimdpfrom{\acdpta}}) \cup \set{ \noaction }$
is defined as follows:
\begin{itemize}
%
\item
For each $(\aloc,\abint) \in \regstates$ and $(\timeelapse{\abint},\apedge) \in \omfromactionsof{\oimdpfrom{\acdpta}}{\aloc,\abint}$,
let
$\omfromtrans{\oimdpfrom{\acdpta}}((\aloc,\abint),(\timeelapse{\abint},\apedge))$
be the interval distribution such that
(1)~$\omfromtrans{\oimdpfrom{\acdpta}}((\aloc,\abint),(\timeelapse{\abint},\apedge))
(\timeelapse{\abint},\apedge,\dir)
=(0,1)$
for $\dir \in \set{\lb, \rb}$,
and (2)~$\omfromtrans{\oimdpfrom{\acdpta}}((\aloc,\abint),(\timeelapse{\abint},\apedge))(\astate)
= [0,0]$ for all
$\astate \in \omfromstates{\oimdpfrom{\acdpta}} \setminus \set{(\timeelapse{\abint},\apedge,\lb),(\timeelapse{\abint},\apedge,\rb)}$.

\item
For each $(\timeelapse{\abint},(\aloc,\g,\pd),\dir) \in \eis$
and $(\aloc',\abint') \in \regstates$,
let:
\[
\probof{(\timeelapse{\abint},(\aloc,\g,\pd),\dir)}{(\aloc',\abint')} =
\left\{
\begin{array}{ll}
\adtpara{\pd}{\dir(\timeelapse{\abint})}(\set{\aclock},\aloc')
+
\adtpara{\pd}{\dir(\timeelapse{\abint})}(\emptyset,\aloc')
&
\mbox{if } \abint' = \timeelapse{\abint} = [0,0]
\\
\adtpara{\pd}{\dir(\timeelapse{\abint})}(\emptyset,\aloc')
&
\mbox{if } \abint' = \timeelapse{\abint} > [0,0]
\\
\adtpara{\pd}{\dir(\timeelapse{\abint})}(\set{\aclock},\aloc')
&
\mbox{if } \abint' = [0,0] \mbox{ and } \timeelapse{\abint} > [0,0]
\\
0
&
\mbox{otherwise.}
\end{array}
\right.
\]
Then $\omfromtrans{\oimdpfrom{\acdpta}}((\timeelapse{\abint},(\aloc,\g,\pd),\dir),\dummyact)$ is the interval distribution
such that, for all $\astate \in \omfromstates{\oimdpfrom{\acdpta}}$:
\[
\omfromtrans{\oimdpfrom{\acdpta}}((\timeelapse{\abint},(\aloc,\g,\pd),\dir),\dummyact)(\astate) =
\left\{
\begin{array}{ll}
[\probof{(\timeelapse{\abint},(\aloc,\g,\pd),\dir)}{\astate},\probof{(\timeelapse{\abint},(\aloc,\g,\pd),\dir)}{\astate}]
&
\mbox{if } \astate \in \regstates
\\
\, \! [0,0]
&
\mbox{otherwise.}
\end{array}
\right.
\]
\end{itemize}
We recall that $\omfromtrans{\oimdpfrom{\acdpta}}(\astate,\anomact) = \noaction$
for $\astate \in \omfromstates{\oimdpfrom{\acdpta}}$ and $\anomact \in \omfromactions{\oimdpfrom{\acdpta}} \setminus \omfromactionsof{\oimdpfrom{\acdpta}}{\astate}$.

\subsection{Correctness of the IMDP construction.}
Next, we establish the correctness of the construction of $\oimdpfrom{\acdpta}$,
i.e., that $\oimdpfrom{\acdpta}$ can be used for solving quantitative and qualitative problems of the 1c-cdPTA $\acdpta$.
The proof relies on showing that a transition of the semantic MDP $\semmdp{\acdpta}$ of $\acdpta$
can be mimicked by a sequence of \emph{two} transitions
of the semantic MDP $\sem{\oimdpfrom{\acdpta}}$ of $\oimdpfrom{\acdpta}$,
and vice versa.
Let $\semmdp{\acdpta} = (\mfromstates{\acdpta}, \mfromactions{\acdpta}, \mfromtrans{\acdpta})$
be the semantic MDP of $\acdpta$.
Given state $(\aloc,\aval) \in \mfromstates{\acdpta}$,
we let $\regof{\aloc,\aval} = (\aloc,\abint) \in \regstates$
be the unique region such that $\aval \in \abint$.
In the following, we let
$\sem{\oimdpfrom{\acdpta}} =
(\omfromstates{\oimdpfrom{\acdpta}},\mfromactions{\oimdpfrom{\acdpta}},\mfromtrans{\oimdpfrom{\acdpta}})$
be the semantic MDP of $\oimdpfrom{\acdpta}$.

We now show that,
for any scheduler of (the semantics of) the 1c-cdPTA $\acdpta$,
there exists a scheduler of (the semantics of) the IMDP $\oimdpfrom{\acdpta}$
such that the schedulers assign the same probability
to reaching a certain set of locations
from a given location with the value of the clock equal to $0$.
Let $\targetsetimdpof{\targetlocs} = \set{ (\aloc,\abint) \in \regstates \setsep \aloc \in \targetlocs }$
be the set of regions with location component in $\targetlocs$.

\begin{lem}\label{lem:sched_cdPTA2IMDP}
Let $\aloc \in \locs$ be a location
and let $\targetlocs \subseteq \locs$ be a set of locations.
Given a $\borders$-minimal scheduler $\ansched \in \bstrats{\semmdp{\acdpta}}$,
there exists a scheduler $\anosched \in \strats{\sem{\oimdpfrom{\acdpta}}}$
such that $\probl{\ansched}{(\aloc,0)}(\Diamond \targetsetof{\targetlocs}) =
\probl{\anosched}{\regof{\aloc,0}}(\Diamond \targetsetimdpof{\targetlocs})$.
\end{lem}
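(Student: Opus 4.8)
The plan is to build the scheduler $\anosched$ from $\ansched$ by mimicking each \emph{single} transition of $\semmdp{\acdpta}$ (an elapse of time followed by the traversal of a probabilistic edge) with the corresponding \emph{pair} of transitions of $\sem{\oimdpfrom{\acdpta}}$: first from a region to an endpoint indicator, then from the endpoint indicator back to a region. This mirrors the structure of the proof of \lemref{lem:imdp2imc} (one choice of the source model becomes two choices of the target model), with the $\borders$-minimality of $\ansched$ playing the role that makes the correspondence between paths single-valued. Concretely, I would first set up the path correspondence: every finite path $\afinpath$ of $\sem{\oimdpfrom{\acdpta}}$ that ends in a region determines, by deleting the endpoint-indicator states, the actions and the assignments, a $\borders$-path $\abfinpath$ in the sense of the proof of \lemref{lem:bminimal}; since $\ansched$ is $\borders$-minimal, $\corrtwo{\ansched}{\abfinpath}$ is empty or a singleton, and in the latter case I write $\afinpath^\ansched$ for its unique element. (As usual, by \cite[Lemma~4.10]{Hah13} we may assume $\ansched$ has finite support after each path, whence so will $\anosched$.)

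The construction of $\anosched$ then proceeds by induction on path length. For a region-valued $\afinpath$ with $\afinpath^\ansched$ defined, and for each action $(\timeelapse{\aval},\apedge) \in \support(\ansched(\afinpath^\ansched))$, let $\timeelapse{\abint} \in \intervalsof{\borders}$ be the unique interval with $\timeelapse{\aval} \in \timeelapse{\abint}$ (by $\borders$-minimality the intervals $\timeelapse{\abint}$ obtained for a fixed $\apedge$ are pairwise distinct). If $\timeelapse{\abint}=(\ab,\ab')$ is open, let $\ana_{\timeelapse{\aval},\apedge}$ be the assignment putting weight $\frac{\ab'-\timeelapse{\aval}}{\ab'-\ab}$ on the endpoint indicator $(\timeelapse{\abint},\apedge,\lb)$ and weight $\frac{\timeelapse{\aval}-\ab}{\ab'-\ab}$ on $(\timeelapse{\abint},\apedge,\rb)$; if $\timeelapse{\abint}$ is closed, let $\ana_{\timeelapse{\aval},\apedge}$ split its weight evenly between the two endpoint indicators. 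Since $\timeelapse{\aval}$ lies strictly inside $\timeelapse{\abint}$ in the open case, both weights lie in $(0,1)$, so $\ana_{\timeelapse{\aval},\apedge}$ is a valid assignment for $\omfromtrans{\oimdpfrom{\acdpta}}(\last(\afinpath),(\timeelapse{\abint},\apedge))$ (which labels both endpoint indicators with $(0,1)$), and one checks using the downward-closedness of invariants that $(\timeelapse{\abint},\apedge)$ is indeed available from $\last(\afinpath)$. I then set $\anosched(\afinpath)\big((\timeelapse{\abint},\apedge),\ana_{\timeelapse{\aval},\apedge}\big) = \ansched(\afinpath^\ansched)(\timeelapse{\aval},\apedge)$; injectivity of $(\timeelapse{\aval},\apedge)\mapsto((\timeelapse{\abint},\apedge),\ana_{\timeelapse{\aval},\apedge})$ and the fact that the right-hand sides sum to $1$ make $\anosched(\afinpath)$ a distribution. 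After an endpoint indicator, $\anosched$'s choice is forced, since the only action of $\sem{\oimdpfrom{\acdpta}}$ available from $(\timeelapse{\abint},\apedge,\dir)$ is $\dummyact$ paired with the unique assignment $\probof{(\timeelapse{\abint},\apedge,\dir)}{\cdot}$ determined by the singleton intervals of $\omfromtrans{\oimdpfrom{\acdpta}}$; on all other finite paths $\anosched$ is defined arbitrarily.

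The verification shows, by induction on length, that $\probfin{\anosched}{\regof{\aloc,0}}(\afinpath) = \probfin{\ansched}{(\aloc,0)}(\afinpath^\ansched)$ for every region-valued $\afinpath$ generated by $\anosched$ (the base case being the single-state paths, both of probability $1$). The crucial step collapses the two consecutive IMDP-transitions into one $\semmdp{\acdpta}$-transition: for a non-constant probabilistic edge $\apedge=(\aloc,\g,\pd)$ with affine coefficients $(\fc{\apedge}{\outcome},\fd{\apedge}{\outcome})$ and any outcome $\outcome$, since the endpoints of $\timeelapse{\abint}$ lie in $\closure{\interval{\apedge}}$ we have
\[
\sum_{\dir\in\set{\lb,\rb}} \ana_{\timeelapse{\aval},\apedge}(\timeelapse{\abint},\apedge,\dir)\cdot\adtpara{\pd}{\dir(\timeelapse{\abint})}(\outcome)
= \sum_{\dir\in\set{\lb,\rb}} \ana_{\timeelapse{\aval},\apedge}(\timeelapse{\abint},\apedge,\dir)\cdot\big(\fc{\apedge}{\outcome}+\fd{\apedge}{\outcome}\cdot\dir(\timeelapse{\abint})\big)
= \fc{\apedge}{\outcome}+\fd{\apedge}{\outcome}\cdot\timeelapse{\aval}
= \adtpara{\pd}{\timeelapse{\aval}}(\outcome),
\]
because $\timeelapse{\aval}$ is exactly the convex combination of $\lb(\timeelapse{\abint})$ and $\rb(\timeelapse{\abint})$ with those weights (and this is trivial when $\timeelapse{\abint}$ is a closed singleton, and when $\timeelapse{\abint}=[0,0]$). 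Combining this with the definition of $\omfromtrans{\oimdpfrom{\acdpta}}$ via the quantities $\probof{(\timeelapse{\abint},\apedge,\dir)}{\cdot}$ and with the definition of $\anosched$, the two-step probability of moving from $\last(\afinpath)$ to a region $(\aloc',\abint')$ via $(\timeelapse{\abint},\apedge)$ equals $\ansched(\afinpath^\ansched)(\timeelapse{\aval},\apedge)\cdot\mfromtrans{\acdpta}(\last(\afinpath^\ansched),(\timeelapse{\aval},\apedge))(\aloc',\aval')$, the corresponding one-step $\semmdp{\acdpta}$-probability, after a short case split on reset ($\abint'=[0,0]$ versus $\abint'=\timeelapse{\abint}$) and the corner $\timeelapse{\abint}=[0,0]$ where both summands of $\mfromtrans{\acdpta}$ are merged. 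Since corresponding paths visit the same sequence of locations, a region-valued $\afinpath$ reaches $\targetsetimdpof{\targetlocs}$ iff $\afinpath^\ansched$ reaches $\targetsetof{\targetlocs}$; summing the finite-path equality over cylinders (exactly as at the end of the proof of \lemref{lem:imdp2imc}) gives $\probl{\ansched}{(\aloc,0)}(\Diamond\targetsetof{\targetlocs}) = \probl{\anosched}{\regof{\aloc,0}}(\Diamond\targetsetimdpof{\targetlocs})$.

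I expect the main obstacle to be bookkeeping rather than mathematical depth: precisely threading finite paths of $\sem{\oimdpfrom{\acdpta}}$ through $\borders$-paths to finite paths of $\ansched$ across the two-for-one transition correspondence, checking that every prefix of an $\anosched$-generated path that ends in a region has $\afinpath^\ansched$ defined, and carrying the induction through the several sub-cases of $\omfromtrans{\oimdpfrom{\acdpta}}$ (open versus closed $\timeelapse{\abint}$; reset versus non-reset; the $\timeelapse{\abint}=[0,0]$ corner). The one genuinely load-bearing fact — that a convex combination of the two endpoint probabilities reproduces the probability at the intermediate clock value — is exactly the affinity property already used in \lemref{lem:bminimal}, and the reduction to $\borders$-minimal schedulers done there is what makes the present correspondence clean.
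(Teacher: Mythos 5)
Your construction of $\anosched$ is exactly the paper's (the same endpoint weights $\ana(\lb)=\frac{\rb(\timeelapse{\abint})-\timeelapse{\aval}}{\rb(\timeelapse{\abint})-\lb(\timeelapse{\abint})}$, the same use of affinity and of $\borders$-minimality, the same forced $(\dummyact,\cdot)$ step after endpoint indicators), but the inductive invariant you state is false, and the induction as written breaks at the first transition through a non-constant probabilistic edge. Individual $\anosched$-paths do \emph{not} carry the same probability as the corresponding $\ansched$-path: in the paper's running example, with $\ansched$ choosing $(\frac{3}{2},\apedge_\mathrm{W})$ from $(\mathrm{W},0)$, the $\anosched$-path through $((1,3),\apedge_\mathrm{W},\lb)$ into $(\mathrm{F},(1,3))$ has probability $\frac{3}{4}\cdot\frac{1}{2}=\frac{3}{8}$, the one through $\rb$ has probability $\frac{1}{4}\cdot\frac{1}{8}=\frac{1}{32}$, while the unique $\ansched$-path $(\mathrm{W},0)(\frac{3}{2},\apedge_\mathrm{W})(\mathrm{F},\frac{3}{2})$ has probability $\frac{13}{32}$; only the \emph{sum} over the two $\lb/\rb$-variants matches. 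Your own ``crucial step'' already sums over $\dir\in\set{\lb,\rb}$, i.e.\ it computes an aggregate two-step probability, which is not the probability of any single region-valued $\anosched$-path, so what you verify is not the per-path equality you claim. The per-path claim also corrupts the final step: many $\anosched$-paths map to the same $\ansched$-path, so summing the (false) per-path equality over cylinders would overcount the reachability probability on the IMDP side.

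The repair is precisely what the paper's proof does: take as invariant that for every $\borders$-path $\abfinpath$ one has $\probfin{\anosched}{\regof{\aloc,0}}(\corrtwo{\anosched}{\abfinpath}) = \probfin{\ansched}{(\aloc,0)}(\corrtwo{\ansched}{\abfinpath})$, where the left-hand side aggregates over \emph{all} finite paths of $\anosched$ corresponding to $\abfinpath$ (they differ only in their $\lb/\rb$ components) and the right-hand side is the probability of the unique corresponding $\ansched$-path guaranteed by $\borders$-minimality. This aggregated induction goes through exactly because your construction makes the choice of $\anosched$ after a path depend only on its $\borders$-path, so the same action $((\timeelapse{\abint},\apedge),\ana)$ is chosen after every path in the class; the sum then factors, and your affine identity $\ana(\lb)\,\adtpara{\pd}{\lb(\timeelapse{\abint})}(\outcome)+\ana(\rb)\,\adtpara{\pd}{\rb(\timeelapse{\abint})}(\outcome)=\adtpara{\pd}{\timeelapse{\aval}}(\outcome)$ closes the inductive step. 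Since membership of $\targetsetimdpof{\targetlocs}$ (respectively $\targetsetof{\targetlocs}$) is determined by the $\borders$-path alone, the aggregated equality suffices for $\probl{\ansched}{(\aloc,0)}(\Diamond \targetsetof{\targetlocs}) = \probl{\anosched}{\regof{\aloc,0}}(\Diamond \targetsetimdpof{\targetlocs})$. With this one correction your argument coincides with the paper's proof.
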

%
%
Before presenting the formal proof of \lemref{lem:sched_cdPTA2IMDP},
we sketch the overall approach for the construction of a scheduler $\anosched$ of $\sem{\oimdpfrom{\acdpta}}$
from a $\borders$-minimal scheduler $\ansched$ of $\semmdp{\acdpta}$,
and give an example.
For each finite path $\afinpath$ of $\ansched$, we can identify a set of finite paths of $\anosched$
of length twice that of $\afinpath$, which visit the same locations in order,
choose the same probabilistic edges in order,
and visit the same intervals in order, both regarding the clock valuations/intervals in states and in actions;
in fact, finite paths of $\anosched$ that are
associated with $\afinpath$ differ \emph{only} in terms of the $\lb$ and $\rb$ components used in endpoint indicators.
Furthermore, $\anosched$ replicates exactly the choice of $\ansched$ made after $\afinpath$
in terms of interval of $\intervalsof{\borders}$ and probabilistic edge chosen
in all of its finite paths associated with $\afinpath$.
Finally, $\anosched$ chooses assignments (over edges labelled with $(0,1)$)
in order to represent exactly the choices of clock valuations made by $\ansched$,
in the manner described in Example~\ref{ex:imdp} above:
more precisely, the choice of action $(\timeelapse{\aval},\apedge)$ by $\ansched$,
where $\timeelapse{\abint}$ is the unique open interval such that $\timeelapse{\aval} \in \timeelapse{\abint}$,
is mimicked by $\anosched$ choosing the action $((\timeelapse{\abint},\apedge),\ana)$
for which $\ana(\timeelapse{\abint},\apedge,\lb) =
\frac{\rb(\timeelapse{\abint}) - \timeelapse{\aval}}{\rb(\timeelapse{\abint}) - \lb(\timeelapse{\abint})}$,
and $\ana(\timeelapse{\abint},\apedge,\rb) = 1 - \ana(\timeelapse{\abint},\apedge,\lb)
= \frac{\timeelapse{\aval} - \lb(\timeelapse{\abint})}{\rb(\timeelapse{\abint}) - \lb(\timeelapse{\abint})}$.

\begin{exa}\label{ex:sched_cdPTA2IMDP}
Consider the 1c-cdPTA of \figref{fig:one_clock}.
Let $\ansched \in \bstrats{\semmdp{\acdpta}}$ be a $\borders$-minimal scheduler such that $\ansched(\mathrm{W},0)$
assigns probability $1$ to the action $(\frac{3}{2},\apedge_\mathrm{W})$.
Then $\anosched \in \strats{\sem{\oimdpfrom{\acdpta}}}$ is constructed such that
$\anosched(\mathrm{W},[0,0])$ assigns probability $1$ to
$(((1,3),\apedge_\mathrm{W}),\ana)$,
where $\ana((1,3),\apedge_\mathrm{W},\lb) = \frac{3}{4}$
and $\ana((1,3),\apedge_\mathrm{W},\rb) = \frac{1}{4}$
(observe that $\ana((1,3),\apedge_\mathrm{W},\lb) = \frac{3-\frac{3}{2}}{2}$
and that $\ana((1,3),\apedge_\mathrm{W},\rb) = 1 - \ana((1,3),\apedge_\mathrm{W},\lb) = \frac{\frac{3}{2}-1}{2}$).
Furthermore, $\anosched((\mathrm{W},[0,0])(((1,3),\apedge_\mathrm{W}),\ana))$
assigns probability $1$ to $\rsuccact$.
Now consider the finite path $\afinpath =
(\mathrm{W},0)(\frac{3}{2},\apedge_\mathrm{W})(\mathrm{F},\frac{3}{2})$ of $\ansched$:
then the corresponding set of finite paths of $\anosched$ comprises
$\afinpath' =
(\mathrm{W},[0,0]) (((1,3),\apedge_\mathrm{W}),\ana)
((1,3),\apedge_\mathrm{W},\lb) (\mathrm{F},(1,3))$
and
$\afinpath'' =
(\mathrm{W},[0,0]) (((1,3),\apedge_\mathrm{W}),\ana)
((1,3),\apedge_\mathrm{W},\rb) (\mathrm{F},(1,3))$.
Now say that $\ansched(\afinpath)$ assigns probability $1$
to the action $(\frac{9}{2},\apedge_\mathrm{F})$:
then both $\anosched(\afinpath')$ \emph{and} $\anosched(\afinpath'')$
assign probability $1$ to the action $(((4,5),\apedge_\mathrm{F}),\ana')$,
where $\ana'((4,5),\apedge_\mathrm{F},\lb) = \frac{1}{2}$
and $\ana'((4,5),\apedge_\mathrm{F},\rb) = \frac{1}{2}$
(note that $\ana'((4,5),\apedge_\mathrm{F},\lb) = 5-\frac{9}{2}$
and $\ana'((4,5),\apedge_\mathrm{F},\rb) = 1 - \ana'((4,5),\apedge_\mathrm{F},\lb) = \frac{9}{2}-4$).
Hence, regardless of whether $((1,3),\apedge_\mathrm{W},\lb)$
or $((1,3),\apedge_\mathrm{W},\rb)$ was visited,
scheduler $\anosched$ makes the \emph{same} choice to mimic $\ansched(\afinpath)$.
\end{exa}

\begin{proof}[Proof of \lemref{lem:sched_cdPTA2IMDP}.]
Let $\aloc \in \locs$,
$\targetlocs \subseteq \locs$
and $\ansched \in \bstrats{\semmdp{\acdpta}}$.
We proceed by describing the construction of scheduler $\anosched \in \strats{\sem{\oimdpfrom{\acdpta}}}$,
then show that $\probl{\ansched}{(\aloc,0)}(\Diamond \targetsetof{\targetlocs}) =
\probl{\anosched}{\regof{\aloc,0}}(\Diamond \targetsetimdpof{\targetlocs})$.

Before describing the construction of $\anosched \in \strats{\sem{\oimdpfrom{\acdpta}}}$,
we first simplify the notation for transitions along paths of $\sem{\oimdpfrom{\acdpta}}$
to remove redundant elements,
in the following way:
a sequence
$(\aloc,\abint) ((\timeelapse{\abint},\apedge),\ana) (\timeelapse{\abint},\apedge,\dir)
(\dummyact,\anab) (\aloc',\abint')$
of two transitions between regions
will be simplified to
$(\aloc,\abint) ((\timeelapse{\abint},\apedge),\ana) (\dir) (\aloc',\abint')$
(note that, for the endpoint indicator $(\timeelapse{\abint},\apedge,\dir)$,
the interval $\timeelapse{\abint}$ and the probabilistic edge $\apedge$
have been featured in the previous position in the sequence,
and that, from $(\timeelapse{\abint},\apedge,\dir)$,
only one action is available, namely $(\dummyact,\anab)$,
and hence we can omit it from the sequence).

Recall that, from the proof of \lemref{lem:bminimal},
each finite path $\afinpath \in \mfinpaths{\ansched}$ of $\ansched$ corresponds to a $\borders$-path.
Furthermore, for a given $\borders$-path, we can identify a set of finite paths of $\sem{\oimdpfrom{\acdpta}}$
that agree with the $\borders$-path in terms of locations, intervals, and probabilistic edges.
Formally, for a $\borders$-path $\abfinpath$ which equals
\[
(\aloc_0,\abint_0)(\timeelapse{\abint}_0,\apedge_0)
(\aloc_1,\abint_1)(\timeelapse{\abint}_1,\apedge_1)
\cdots (\timeelapse{\abint}_{n-1},\apedge_{n-1})(\aloc_n,\abint_n)
\]
and a finite path $\afinpath$ of $\oimdpfrom{\acdpta}$, where $\afinpath$ equals
\[
(\aloc_0',\abint_0') ((\timeelapse{\abint}_0',\apedge_0'),\ana_0) (\dir_0)
(\aloc_1',\abint_1') ((\timeelapse{\abint}_1',\apedge_1'),\ana_1) (\dir_1) \linebreak[0]
\cdots \linebreak[0]
((\timeelapse{\abint}_{m-1}',\apedge_{m-1}'),\ana_{m-1}) \linebreak[0] (\dir_{m-1}) \linebreak[0] (\aloc_m',\abint_m') ,
\]
we say that $\afinpath$ \emph{corresponds to} $\abfinpath$
if (1)~$n=m$,
(2)~$\aloc_i = \aloc_i'$, $\abint_i = \abint_i'$ for all $0 \leq i \leq n$,
and (3)~$\timeelapse{\abint}_i = \timeelapse{\abint}_i'$ and $\apedge_i = \apedge_i'$ for all $0 \leq i < n$.
Given a scheduler $\anosched \in \strats{\sem{\oimdpfrom{\acdpta}}}$
and a $\borders$-path $\abfinpath$,
we let $\corrtwo{\anosched}{\abfinpath} \subseteq \mfinpaths{\anosched}$
be the set of finite paths of $\anosched$ that correspond to $\abfinpath$.

Recall that $\oimdpfrom{\acdpta}$ alternates between regions (elements of $\regstates$)
and endpoint indicators (elements of $\eis$).
We partition $\mfinpaths{\sem{\oimdpfrom{\acdpta}}}(\regof{\aloc,0})$
into the set $\mfinpathsendsin{\sem{\oimdpfrom{\acdpta}}}{\regstates}(\regof{\aloc,0})$
of finite paths of the form
\[
(\aloc_0,\abint_0) ((\timeelapse{\abint}_0,\apedge_0),\ana_0) (\dir_0)
(\aloc_1,\abint_1) ((\timeelapse{\abint}_1,\apedge_1),\ana_1) (\dir_1)
\cdots \linebreak[0]
((\timeelapse{\abint}_{n-1},\apedge_{n-1}),\ana_{n-1}) \linebreak[0] (\dir_{n-1}) \linebreak[0] (\aloc_n,\abint_n)
\]
that end with a region,
and the set $\mfinpathsendsin{\sem{\oimdpfrom{\acdpta}}}{\eis}(\regof{\aloc,0})$
of finite paths of the form
\[
(\aloc_0,\abint_0) ((\timeelapse{\abint}_0,\apedge_0),\ana_0) (\dir_0)
(\aloc_1,\abint_1) ((\timeelapse{\abint}_1,\apedge_1),\ana_1) (\dir_1)
\cdots
((\timeelapse{\abint}_m,\apedge_m),\ana_m) (\dir_m)
\]
that end with an endpoint indicator
(note that $\dir_m$, in the context of the last transition of the finite path,
denotes the endpoint indicator $(\timeelapse{\abint}_m,\apedge_m,\dir_m)$).

The construction of $\anosched$ is based on the following principles:
for any two finite paths that correspond to the same $\borders$-path,
the choice made by $\anosched$ after those finite paths will be the same;
furthermore, for any action of $\oimdpfrom{\acdpta}$ of the form $(\timeelapse{\abint},\apedge)$,
the scheduler $\anosched$ assigns positive probability to at most one action of $\sem{\oimdpfrom{\acdpta}}$
of the form $((\timeelapse{\abint},\apedge),\ana)$,
where $\ana$ is used to represent the actual clock valuation chosen by $\ansched$
within the interval $\timeelapse{\abint}$.

We now describe the formal construction of $\anosched$.
As elsewhere in this paper, we proceed by induction on the length of paths.
Let $\afinpath \in \mfinpathsendsin{\sem{\oimdpfrom{\acdpta}}}{\regstates}(\regof{\aloc,0})$
be a finite path of $\oimdpfrom{\acdpta}$ that ends in a region,
and assume that we have already defined the choices of $\anosched$ along all of the prefixes of $\afinpath$.
Let $\abfinpath$ be the unique $\borders$-path to which $\afinpath$ corresponds
(it is obtained by simply removing the components denoted by $\ana$ and $\dir)$.
If $\corrtwo{\ansched}{\abfinpath} = \emptyset$,
i.e., there is no path of $\ansched$ that corresponds to $\abfinpath$,
then the choice of $\anosched$ after $\afinpath$ can be made in an arbitrary manner.
Otherwise,
letting $\afinpath' \in \mfinpaths{\ansched}$ be the unique path of $\ansched$ that corresponds to $\abfinpath$,
we aim to mimic each choice of $(\timeelapse{\aval},\apedge) \in \support(\ansched(\afinpath'))$
in the construction of $\anosched(\afinpath)$.
Consider $(\timeelapse{\aval},\apedge) \in \support(\ansched(\afinpath'))$.
%
%
Let $\timeelapse{\abint} \in \intervalsof{\borders}$ be the unique interval
for which $\timeelapse{\aval} \in \timeelapse{\abint}$.
By definition, $(\timeelapse{\abint},\apedge) \in \omfromactionsof{\oimdpfrom{\acdpta}}{\last(\afinpath)}$.
Then we let $\anosched(\afinpath)((\timeelapse{\abint},\apedge),\ana) = \ansched(\afinpath')(\timeelapse{\aval},\apedge)$,
where $\ana$ is defined as follows:
let $\ana(\timeelapse{\abint},\apedge,\lb) =
\frac{\rb(\timeelapse{\abint}) - \timeelapse{\aval}}{\rb(\timeelapse{\abint}) - \lb(\timeelapse{\abint})}$,
and let $\ana(\timeelapse{\abint},\apedge,\rb) = 1 - \ana(\timeelapse{\abint},\apedge,\lb)
= \frac{\timeelapse{\aval} - \lb(\timeelapse{\abint})}{\rb(\timeelapse{\abint}) - \lb(\timeelapse{\abint})}$.
That is, $\ana$ represents the position of the clock valuation $\timeelapse{\aval}$
within the interval $\timeelapse{\abint}$.
The fact that this definition of $\ana$ is adequate for our purposes is due to the fact that
clock dependencies are affine,
as we will see later in the proof.
This completes the construction of $\anosched$,
because only the action $\dummyact$ is available in the final states of finite paths in
$\mfinpathsendsin{\sem{\oimdpfrom{\acdpta}}}{\eis}(\regof{\aloc,0})$,
and because $\anosched$ can be defined in an arbitrary manner for paths not starting in $\regof{\aloc,0}$.

Next we show that
$\probl{\ansched}{(\aloc,0)}(\Diamond \targetsetof{\targetlocs}) =
\probl{\anosched}{\regof{\aloc,0}}(\Diamond \targetsetimdpof{\targetlocs})$.
Our approach is to show that
$\probl{\ansched}{(\aloc,0)}(\corrtwo{\ansched}{\abfinpath}) =
\probl{\anosched}{\regof{\aloc,0}}(\corrtwo{\anosched}{\abfinpath})$
for all $\borders$-paths $\abfinpath$
(this is sufficient to show
$\probl{\ansched}{(\aloc,0)}(\Diamond \targetsetof{\targetlocs}) =
\probl{\anosched}{\regof{\aloc,0}}(\Diamond \targetsetimdpof{\targetlocs})$
because the path in $\corrtwo{\ansched}{\abfinpath}$,
if it exists,
reaches a location in $\targetlocs$
if and only if all paths in $\corrtwo{\anosched}{\abfinpath}$ reach a location in $\targetlocs$).
We proceed by induction on the length of $\borders$-paths.
Consider the $\borders$-path $\abfinpath (\timeelapse{\abint},\apedge) (\aloc',\abint')$,
and assume that it has already been established that
$\probl{\ansched}{(\aloc,0)}(\corrtwo{\ansched}{\abfinpath}) =
\probl{\anosched}{\regof{\aloc,0}}(\corrtwo{\anosched}{\abfinpath})$.
Our aim is now to show that
$\probl{\ansched}{(\aloc,0)}(\corrtwo{\ansched}{\abfinpath (\timeelapse{\abint},\apedge) (\aloc',\abint')}) =
\probl{\anosched}{\regof{\aloc,0}}(\corrtwo{\anosched}{\abfinpath (\timeelapse{\abint},\apedge) (\aloc',\abint')})$.
As in the construction of $\anosched$,
also in the following we denote $\corrtwo{\ansched}{\abfinpath}$ by $\afinpath'$.
Given that $\ansched$ is a $\borders$-minimal scheduler,
there is at most
one clock valuation $\timeelapse{\aval} \in \timeelapse{\abint}$ such that
$(\timeelapse{\aval},\apedge) \in \support(\asched(\afinpath'))$.
For the case in which
no such $\timeelapse{\aval} \in \timeelapse{\abint}$ exists,
i.e., $\set{ \timeelapse{\aval} \setsep (\timeelapse{\aval},\apedge) \in \support(\asched(\afinpath'))}
\cap \timeelapse{\abint} = \emptyset$,
then
$\probl{\ansched}{(\aloc,0)}(\corrtwo{\ansched}{\abfinpath (\timeelapse{\abint},\apedge) (\aloc',\abint')}) = 0$
and
$\probl{\anosched}{\regof{\aloc,0}}(\corrtwo{\anosched}{\abfinpath (\timeelapse{\abint},\apedge) (\aloc',\abint')}) = 0$,
and we are done.
In the remainder of this proof,
we consider the case in which there exists $\timeelapse{\aval} \in \timeelapse{\abint}$ such that
$(\timeelapse{\aval},\apedge) \in \support(\ansched(\afinpath'))$.
As in the proof of \lemref{lem:bminimal},
we let $\mfromtrans{\acdpta}(\last(\afinpath'),(\timeelapse{\aval},\apedge))(\aloc',\abint')$
be the probability of making a transition to a state in $(\aloc',\abint')$
from state $\last(\afinpath')$ with action $(\timeelapse{\aval},\apedge)$.
We also concentrate on the case in which
$\timeelapse{\abint}$ is open, $\apedge$ is non-constant,
and $\abint' = \timeelapse{\abint}$,
which corresponds to the outcome $(\emptyset,\aloc')$
(i.e., the clock is not reset),
with the other cases being similar.
We write $\fcno$ and $\fdno$ rather than
$\fc{\apedge}{(\emptyset,\aloc')}$ and $\fdno = \fd{\apedge}{(\emptyset,\aloc')}$, respectively.
Let $\ana \in \dist(\eis)$ be the unique assignment such that
$((\timeelapse{\abint},\apedge),\ana) \in \support(\anosched(\afinpath))$
for all $\afinpath \in \corrtwo{\anosched}{\abfinpath}$
(such an assignment will be unique
from the construction of $\anosched$).
%
Then we write $\ana(\dir)$ rather than $\ana(\timeelapse{\abint},\apedge,\dir)$,
for $\dir \in \set{\lb,\rb}$.

First we observe the following (where the second step is obtained from
$\timeelapse{\aval} - \lb(\timeelapse{\abint}) = \ana(\rb) ({\rb(\timeelapse{\abint}) - \lb(\timeelapse{\abint})})$,
and the fourth and fifth steps also use the fact that $\ana(\lb)+\ana(\rb) = 1$):
\begin{eqnarray*}
\fcno + \fdno \cdot \timeelapse{\aval}
& = &
\fcno + \fdno \cdot (\lb(\timeelapse{\abint}) + \timeelapse{\aval} - \lb(\timeelapse{\abint}))
\\
& = &
\fcno + \fdno \cdot (\lb(\timeelapse{\abint}) + \ana(\rb)(\rb(\timeelapse{\abint}) - \lb(\timeelapse{\abint})))
\\
& = &
\fcno + \fdno \cdot (\lb(\timeelapse{\abint}) +
\ana(\rb) \cdot \rb(\timeelapse{\abint}) - \ana(\rb) \cdot \lb(\timeelapse{\abint}))
\\
& = &
\fcno + \fdno \cdot (\lb(\timeelapse{\abint}) +
\ana(\rb) \cdot \rb(\timeelapse{\abint}) - \lb(\timeelapse{\abint}) + \ana(\lb) \cdot \lb(\timeelapse{\abint}))
\\
& = &
(\ana(\lb)+\ana(\rb)) \cdot \fcno
+ \fdno \cdot \ana(\rb) \cdot \rb(\timeelapse{\abint}) + \fdno \cdot \ana(\lb) \cdot \lb(\timeelapse{\abint})
\\
& = &
\ana(\lb) (\fcno + \fdno \cdot \lb(\timeelapse{\abint}))
+ \ana(\rb) (\fcno + \fdno \cdot \rb(\timeelapse{\abint})) \; .
\end{eqnarray*}
The fact that $\fcno + \fdno \cdot \timeelapse{\aval}
= \ana(\lb) (\fcno + \fdno \cdot \lb(\timeelapse{\abint}))
+ \ana(\rb) (\fcno + \fdno \cdot \rb(\timeelapse{\abint}))$
is then used in the following derivation
(where $\adisttemp$ denotes the distribution template of probabilistic edge $\apedge$):
\[
\begin{array}{rcl}
& &
\probl{\ansched}{(\aloc,0)}(\corrtwo{\ansched}{\abfinpath (\timeelapse{\abint},\apedge) (\aloc',\abint')})
\\
& = &
\probl{\ansched}{(\aloc,0)}(\corrtwo{\ansched}{\abfinpath})
\cdot
\ansched(\afinpath')(\timeelapse{\aval},\apedge)
\cdot
\mfromtrans{\acdpta}(\last(\afinpath'),(\timeelapse{\aval},\apedge))(\aloc',\abint')
\hspace{3cm}
\\
& = &
\probl{\ansched}{(\aloc,0)}(\corrtwo{\ansched}{\abfinpath})
\cdot
\ansched(\afinpath')(\timeelapse{\aval},\apedge)
\cdot
\adtpara{\pd}{\timeelapse{\aval}}(\emptyset,\aloc')
\\
& = &
\probl{\ansched}{(\aloc,0)}(\corrtwo{\ansched}{\abfinpath})
\cdot
\ansched(\afinpath')(\timeelapse{\aval},\apedge)
\cdot
( \fcno + \fdno \cdot \timeelapse{\aval} )
\\
& = &
\probl{\ansched}{(\aloc,0)}(\corrtwo{\ansched}{\abfinpath})
\cdot
\ansched(\afinpath')(\timeelapse{\aval},\apedge)
\cdot
\left(
\ana(\lb) (\fcno + \fdno \cdot \lb(\timeelapse{\abint}))
+\ana(\rb) (\fcno + \fdno \cdot \rb(\timeelapse{\abint}))
\right)
\\
& = &
\probl{\ansched}{(\aloc,0)}(\corrtwo{\ansched}{\abfinpath})
\cdot
\ansched(\afinpath')(\timeelapse{\aval},\apedge)
\cdot
\left(
\ana(\lb) \cdot \adtpara{\pd}{\lb(\timeelapse{\abint})}(\emptyset,\aloc')
+ \ana(\rb) \cdot \adtpara{\pd}{\rb(\timeelapse{\abint})}(\emptyset,\aloc')
\right) \; .
\end{array}
\]
Note that the combination of the definition of $\ana$
and the fact that clock dependencies are affine allows us to obtain the last three steps above.
For the subsequent steps, we require the following notation.
Recall that, for any given $(\timeelapse{\abint},\apedge,\dir) \in \eis$,
the interval distribution $\omfromtrans{\oimdpfrom{\acdpta}}((\timeelapse{\abint},\apedge,\dir),\dummyact)$
assigns \emph{singleton} intervals to each
$\astate \in \omfromstates{\oimdpfrom{\acdpta}}$.
Hence, there is only one action associated with $(\timeelapse{\abint},\apedge,\dir)$ in $\sem{\oimdpfrom{\acdpta}}$,
which we denote as $(\dummyact,\dummyassof{\timeelapse{\abint},\apedge,\dir})$,
where $\dummyassof{\timeelapse{\abint},\apedge,\dir}$ is the assignment such that
$\dummyassof{\timeelapse{\abint},\apedge,\dir}(\astate) =  \probof{(\timeelapse{\abint},\apedge,\dir)}{\astate}$
for each $\astate \in \regstates$.
Recall that we have assumed
$\probl{\ansched}{(\aloc,0)}(\corrtwo{\ansched}{\abfinpath})
= \probl{\anosched}{\regof{\aloc,0}}(\corrtwo{\anosched}{\abfinpath})$,
and, from the construction of $\anosched$,
have $\anosched(\afinpath)((\timeelapse{\abint},\apedge),\ana) = \ansched(\afinpath')(\timeelapse{\aval},\apedge)$
for all finite paths $\afinpath \in \corrtwo{\anosched}{\abfinpath}$.
In the following, we let $\afinpathrepr{\anosched}{\abfinpath} \in \corrtwo{\anosched}{\abfinpath}$
be an arbitrary finite path from $\corrtwo{\anosched}{\abfinpath}$.
Then by construction:
\begin{eqnarray*}
& &
\probl{\ansched}{(\aloc,0)}(\corrtwo{\ansched}{\abfinpath})
\cdot
\ansched(\afinpath')(\timeelapse{\aval},\apedge)
\cdot
\left(
\ana(\lb) \cdot \adtpara{\pd}{\lb(\timeelapse{\abint})}(\emptyset,\aloc')
+ \ana(\rb) \cdot \adtpara{\pd}{\rb(\timeelapse{\abint})}(\emptyset,\aloc')
\right)
\\
& = &
\probl{\anosched}{\regof{\aloc,0}}(\corrtwo{\anosched}{\abfinpath})
\cdot
\anosched(\afinpathrepr{\anosched}{\abfinpath})((\timeelapse{\abint},\apedge),\ana)
\\
& &
\times
\left(
\ana(\lb) \cdot \mfromtrans{\oimdpfrom{\acdpta}}((\timeelapse{\abint},\apedge,\lb),(\dummyact,\dummyassof{\timeelapse{\abint},\apedge,\lb}))(\aloc',\abint')
\right.
\\
& &
\left.
~~~+ \ana(\rb) \cdot \mfromtrans{\oimdpfrom{\acdpta}}((\timeelapse{\abint},\apedge,\rb),(\dummyact,\dummyassof{\timeelapse{\abint},\apedge,\rb}))(\aloc',\abint')
\right)
\\
& = &
\probl{\anosched}{\regof{\aloc,0}}(\corrtwo{\anosched}{\abfinpath})
\cdot
\anosched(\afinpathrepr{\anosched}{\abfinpath})((\timeelapse{\abint},\apedge),\ana)
\cdot
\ana(\lb) \cdot \mfromtrans{\oimdpfrom{\acdpta}}((\timeelapse{\abint},\apedge,\lb),(\dummyact,\dummyassof{\timeelapse{\abint},\apedge,\lb}))(\aloc',\abint')
\\
& &
+
\probl{\anosched}{\regof{\aloc,0}}(\corrtwo{\anosched}{\abfinpath})
\cdot
\anosched(\afinpathrepr{\anosched}{\abfinpath})((\timeelapse{\abint},\apedge),\ana)
\cdot
\ana(\rb) \cdot \mfromtrans{\oimdpfrom{\acdpta}}((\timeelapse{\abint},\apedge,\rb),(\dummyact,\dummyassof{\timeelapse{\abint},\apedge,\rb}))(\aloc',\abint')
\\
& = &
\sum_{\afinpath \in \corrtwo{\anosched}{\abfinpath}}
\sum_{\dir \in \set{\lb,\rb}}
\probl{\anosched}{\regof{\aloc,0}}(\afinpath)
\cdot
\anosched(\afinpath)((\timeelapse{\abint},\apedge),\ana) \cdot \ana(\dir)
\hspace{4cm}
\\
& &
~~~~~~~~~~~~~~~~~~~~~\times
\mfromtrans{\oimdpfrom{\acdpta}}((\timeelapse{\abint},\apedge,\dir),(\dummyact,\dummyassof{\timeelapse{\abint},\apedge,\dir}))(\aloc',\abint')
\\
& = &
\probl{\anosched}{\regof{\aloc,0}}(\corrtwo{\anosched}{\abfinpath (\timeelapse{\abint},\apedge) (\aloc',\abint')}) \; .
\end{eqnarray*}
This completes showing that
$\probl{\ansched}{(\aloc,0)}(\corrtwo{\ansched}{\abfinpath (\timeelapse{\abint},\apedge) (\aloc',\abint')}) =
\probl{\anosched}{\regof{\aloc,0}}(\corrtwo{\anosched}{\abfinpath (\timeelapse{\abint},\apedge) (\aloc',\abint')})$
when $\abint'=\timeelapse{\abint}$.
The case for $\abint'=[0,0]$ follows similarly.
\end{proof}

The following lemma considers the converse direction,
namely that (starting from a given location with the clock equal to $0$)
any scheduler of $\sem{\oimdpfrom{\acdpta}}$
can be mimicked by a $\borders$-minimal scheduler of $\semmdp{\acdpta}$
such that the schedulers assign the same probability of reaching a certain set of locations.
%
%
\begin{lem}\label{lem:sched_IMDP2cdPTA}
Let $\aloc \in \locs$ be a location
and let $\targetlocs \subseteq \locs$ be a set of locations.
Given a scheduler $\anosched \in \strats{\sem{\oimdpfrom{\acdpta}}}$,
there exists a $\borders$-minimal scheduler $\ansched \in \bstrats{\semmdp{\acdpta}}$,
such that $\probl{\ansched}{(\aloc,0)}(\Diamond \targetsetof{\targetlocs}) =
\probl{\anosched}{\regof{\aloc,0}}(\Diamond \targetsetimdpof{\targetlocs})$.
\end{lem}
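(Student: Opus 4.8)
The plan is to run the construction of \lemref{lem:bminimal} in reverse: from the IMDP scheduler $\anosched$ we build a $\borders$-minimal 1c-cdPTA scheduler $\ansched$ whose choices after each finite path are \emph{probability-weighted averages} of those of $\anosched$, and the correctness computation will then mirror that of \lemref{lem:sched_cdPTA2IMDP}, again powered by affineness of the clock dependencies. First I would reuse the $\borders$-path machinery of \lemref{lem:bminimal}: a finite path of a $\borders$-minimal scheduler corresponds to a unique $\borders$-path, and for a $\borders$-path $\abfinpath$ I write $\corrtwo{\anosched}{\abfinpath}\subseteq\mfinpaths{\anosched}$ for the set of finite paths of $\sem{\oimdpfrom{\acdpta}}$ \emph{ending in a region} that visit the same locations, intervals and probabilistic edges as $\abfinpath$, differing only in the assignments chosen along the way and in the $\lb/\rb$-components of their endpoint indicators. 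Unlike in \lemref{lem:imc2imdp}, I would not assume a deterministic choice of assignment --- $\anosched$ may branch both on which endpoint indicator was visited and on which assignment was chosen, and it is cleanest to keep $\anosched$ arbitrary and let the averaging over $\corrtwo{\anosched}{\abfinpath}$ absorb this; the only w.l.o.g.\ step needed is that after any finite path ending in an endpoint indicator the choice is forced, which holds because $\omfromtrans{\oimdpfrom{\acdpta}}((\timeelapse{\abint},\apedge,\dir),\dummyact)$ is a singleton interval distribution.

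Next I would define $\ansched$ by induction on path length, paralleling \lemref{lem:bminimal}. Given the unique finite path $\afinpath$ of $\ansched$ corresponding to $\abfinpath$ and an admissible pair $(\timeelapse{\abint},\apedge)$ (extending the $\borders$-path of $\afinpath$ as in \lemref{lem:bminimal}), each $\afinpath' \in \corrtwo{\anosched}{\abfinpath}$ and each $\ana$ with $((\timeelapse{\abint},\apedge),\ana)\in\support(\anosched(\afinpath'))$ encodes, as in \exampref{ex:sched_cdPTA2IMDP}, the clock valuation $\timeelapse{\aval}_\ana = \ana(\lb)\cdot\lb(\timeelapse{\abint}) + \ana(\rb)\cdot\rb(\timeelapse{\abint}) \in \timeelapse{\abint}$ (which is just the single point of $\timeelapse{\abint}$ when $\timeelapse{\abint}$ is closed). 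I would then have $\ansched(\afinpath)$ assign, for the pair $(\timeelapse{\abint},\apedge)$, all its mass to the single valuation $\timeelapsespecial{\abfinpath}{\timeelapse{\abint}}{\apedge}$ equal to the weighted barycentre of the $\timeelapse{\aval}_\ana$, where the weight of the pair $(\afinpath',\ana)$ is $\probfin{\anosched}{\regof{\aloc,0}}(\afinpath')\cdot\anosched(\afinpath')((\timeelapse{\abint},\apedge),\ana)$, with total probability equal to the sum of these weights divided by $\probfin{\anosched}{\regof{\aloc,0}}(\corrtwo{\anosched}{\abfinpath})$. Summing over admissible $(\timeelapse{\abint},\apedge)$ and using that each $\anosched(\afinpath')$ is a distribution over IMDP actions shows $\ansched(\afinpath)$ is a distribution (the same telescoping as in \lemref{lem:bminimal}), and $\borders$-minimality is immediate since $\ansched$ uses at most one valuation per interval. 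The structural point that makes this work is that $\afinpath(\timeelapse{\aval},\apedge)(\aloc',\aval')$ corresponds to the $\borders$-path $\abfinpath(\timeelapse{\abint},\apedge)(\aloc',\abint')$, whose set $\corrtwo{\anosched}{\cdot}$ already contains \emph{both} the $\lb$- and $\rb$-continuations of every relevant path of $\anosched$; thus the weighted average merges the two endpoint-indicator subtrees automatically, so the $\dir$-branching of $\anosched$ needs no special treatment.

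Finally I would establish $\probfin{\anosched}{\regof{\aloc,0}}(\corrtwo{\anosched}{\abfinpath}) = \probfin{\ansched}{(\aloc,0)}(\corrtwo{\ansched}{\abfinpath})$ for every $\borders$-path $\abfinpath$ by induction on length; as in \lemref{lem:sched_cdPTA2IMDP} this suffices, since the unique path in $\corrtwo{\ansched}{\abfinpath}$ (if any) reaches a location in $\targetlocs$ iff all paths of $\corrtwo{\anosched}{\abfinpath}$ do. The inductive step reverses the corresponding display in \lemref{lem:sched_cdPTA2IMDP}: for a transition with $\timeelapse{\abint}$ open and $\apedge$ non-constant, the two-step IMDP probability along $\afinpath' ((\timeelapse{\abint},\apedge),\ana)(\dir)(\aloc',\abint')$, summed over $\dir\in\set{\lb,\rb}$, equals $\ana(\lb)\cdot\probof{(\timeelapse{\abint},\apedge,\lb)}{(\aloc',\abint')} + \ana(\rb)\cdot\probof{(\timeelapse{\abint},\apedge,\rb)}{(\aloc',\abint')}$, and since $\probof{(\timeelapse{\abint},\apedge,\dir)}{(\aloc',\abint')}$ is the value at $\dir(\timeelapse{\abint})$ of the affine map $x\mapsto\fcno+\fdno\cdot x$, this sum is $\fcno+\fdno\cdot\timeelapse{\aval}_\ana = \adtpara{\pd}{\timeelapse{\aval}_\ana}(\emptyset,\aloc')$ (with $\pd$ the template of $\apedge$); then summing over $\afinpath'$ and $\ana$ with the weights above, a second use of affineness turns the weighted average of the $\timeelapse{\aval}_\ana$ into $\timeelapsespecial{\abfinpath}{\timeelapse{\abint}}{\apedge}$, so the total becomes $\probfin{\anosched}{\regof{\aloc,0}}(\corrtwo{\anosched}{\abfinpath})\cdot\ansched(\afinpath)(\timeelapsespecial{\abfinpath}{\timeelapse{\abint}}{\apedge},\apedge)\cdot\mfromtrans{\acdpta}(\last(\afinpath),(\timeelapsespecial{\abfinpath}{\timeelapse{\abint}}{\apedge},\apedge))(\aloc',\abint')$, which by the induction hypothesis and the construction of $\ansched$ is exactly $\probfin{\ansched}{(\aloc,0)}(\corrtwo{\ansched}{\abfinpath(\timeelapse{\abint},\apedge)(\aloc',\abint')})$; the cases $\abint'=[0,0]$, $\timeelapse{\abint}$ closed, and $\apedge$ constant are simpler, exactly as in \lemref{lem:bminimal}. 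The main obstacle will be the bookkeeping: combining the two nested averages --- over $\corrtwo{\anosched}{\abfinpath}$ and over the several assignments $\ana$ --- with the barycentric re-encoding $\ana\mapsto\timeelapse{\aval}_\ana$ so that the single affine identity still applies after the sums; the rest is essentially a transcription of the arguments of \lemref{lem:bminimal} and \lemref{lem:sched_cdPTA2IMDP}.
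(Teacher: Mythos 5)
Your overall skeleton — the $\borders$-path correspondence sets $\corrtwo{\anosched}{\abfinpath}$, a weighted-average definition of $\ansched(\afinpath)$, induction on the length of $\borders$-paths, and the affine computation converting $\ana(\lb)\cdot\adtpara{\pd}{\lb(\timeelapse{\abint})}(\cdot)+\ana(\rb)\cdot\adtpara{\pd}{\rb(\timeelapse{\abint})}(\cdot)$ into $\adtpara{\pd}{\timeelapse{\aval}_\ana}(\cdot)$ and then re-averaging — is the same as the paper's, and your treatment of the $\lb/\rb$ branching (and your choice not to normalise $\anosched$ to a single assignment per action) is unproblematic. The genuine gap is your decision to use the probability-weighted barycentre of the currently chosen valuations $\timeelapse{\aval}_\ana$ \emph{uniformly for every pair} $(\timeelapse{\abint},\apedge)$, in particular for \emph{constant} probabilistic edges taken while the clock stays inside an open interval $\timeelapse{\abint}$ before any non-constant edge has been taken there. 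In $\semmdp{\acdpta}$ the clock is non-decreasing between resets, so once $\ansched$ commits to a valuation inside $\timeelapse{\abint}$ it can never select a smaller one while remaining in $\timeelapse{\abint}$; but $\anosched$ is subject to no such monotonicity — the assignments it chooses at successive regions of the same open interval may encode arbitrary, even decreasing, valuations (harmlessly, since constant edges do not depend on them). Hence your barycentre at a constant step can exceed the valuation that must be realised later at the non-constant edge (it can even exceed the barycentre of the next constant step), in which case the action $(\timeelapsespecial{\abfinpath}{\timeelapse{\abint}}{\apedge},\apedge)$ required later is not available from $\last(\afinpath)$ at all: your $\ansched$ is then not a well-defined scheduler of $\semmdp{\acdpta}$, and exact matching of the probability of the non-constant edge fails. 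Your closing remark that ``$\apedge$ constant is simpler, exactly as in \lemref{lem:bminimal}'' glosses over precisely this point.

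This is why the paper's construction is three-phased and, in this direction, requires \emph{look-ahead}: for a constant $\apedge$ taken in an open $\timeelapse{\abint}$ before any non-constant edge has been taken there, it defines the set $\futfpaths{\abfinpath}{\timeelapse{\abint}}{\apedge}$ of $(\timeelapse{\abint},\apedge)$-critical suffixes of paths of $\anosched$ (suffixes starting with $(\timeelapse{\abint},\apedge)$, remaining in $\timeelapse{\abint}$, taking only constant edges until a final non-constant one), converts the assignments at their final transitions into valuations via $\assTOvaln{\timeelapse{\abint}}{\apedge}$, and sets the current valuation to the \emph{minimum} of these; the barycentre is used only at the non-constant edge itself (your second case), and after it the valuation is arbitrary. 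This guarantees feasibility — the phase-one values are bounded by every valuation that will be needed at the non-constant edge, and a weighted average is at least a minimum — while leaving the probability computation untouched, since constant edges do not depend on the clock. Note also that, unlike in \lemref{lem:bminimal}, taking the minimum of the \emph{current} choices would not suffice here, because the IMDP scheduler's later assignments may encode smaller valuations than its earlier ones; the look-ahead through critical suffixes is the missing idea you would need to add to repair the proposal.
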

\begin{proof}
Let $\aloc \in \locs$,
$\targetlocs \subseteq \locs$
and $\anosched \in \strats{\sem{\oimdpfrom{\acdpta}}}$.
We describe the construction of scheduler $\ansched \in \bstrats{\semmdp{\acdpta}}$,
and then show that
$\probl{\ansched}{(\aloc,0)}(\Diamond \targetsetof{\targetlocs}) =
\probl{\anosched}{\regof{\aloc,0}}(\Diamond \targetsetimdpof{\targetlocs})$.

Before presenting the details, we sketch the overall approach of the proof.
The proof proceeds in a similar manner to that of \lemref{lem:bminimal}:
here, as for the proof of \lemref{lem:bminimal},
our aim is to obtain a ($\borders$-minimal) scheduler $\ansched$ of $\semmdp{\acdpta}$,
i.e., a scheduler for which there is a one-to-one relationship between its finite paths and
(a subset of) $\borders$-paths.
The principles underlying the construction of $\ansched$ are the same as those underlying
the analogous construction of \lemref{lem:bminimal}:
for a finite path $\afinpath$ of $\semmdp{\acdpta}$,
in order to define $\ansched(\afinpath)$,
we consider extensions of the unique $\borders$-path $\abfinpath$ that corresponds to $\afinpath$;
for each of those extensions that correspond to a set of finite paths of $\anosched$,
we define a choice made by the distribution $\ansched(\afinpath)$
that mimics the final transition of the aforementioned set of finite paths of $\anosched$.
The probabilities of the distribution $\ansched(\afinpath)$
are obtained as weighted averages of the choices of $\anosched$.
Showing that
$\probl{\ansched}{(\aloc,0)}(\Diamond \targetsetof{\targetlocs}) =
\probl{\anosched}{\regof{\aloc,0}}(\Diamond \targetsetimdpof{\targetlocs})$
will be done in a similar manner to the analogous part of the proof of \lemref{lem:sched_cdPTA2IMDP}.

We first describe the construction of $\ansched \in \bstrats{\semmdp{\acdpta}}$.
Consider the finite path
\[
\afinpath =
(\aloc_0,\aval_0) (\timeelapse{\aval}_0,\apedge_0)
(\aloc_1,\aval_1) (\timeelapse{\aval}_1,\apedge_1)
\cdots
(\timeelapse{\aval}_{n-1},\apedge_{n-1}) (\aloc_n,\aval_n)
\]
of $\semmdp{\acdpta}$.
Let
\[
\abfinpath =
(\aloc_0,\abint_0)(\timeelapse{\abint}_0,\apedge_0)
(\aloc_1,\abint_1)(\timeelapse{\abint}_1,\apedge_1)
\cdots (\timeelapse{\abint}_{n-1},\apedge_{n-1})(\aloc_n,\abint_n)
\]
be the unique $\borders$-path such that $\afinpath$ corresponds to $\abfinpath$.
Similarly, recall that $\corrtwo{\anosched}{\abfinpath}$ is the set of finite paths of $\anosched$
that correspond to $\abfinpath$
(where the notion of correspondence of finite paths of $\anosched$ to a $\borders$-path
is given in the proof of \lemref{lem:sched_cdPTA2IMDP}).
Consider the extension of $\abfinpath$ with $(\timeelapse{\abint},\apedge)$
such that $\sourceof{\apedge} = \aloc_n$,
$\timeelapse{\abint} \in \intervalsof{\borders}$,
$\timeelapse{\abint} \geq \abint_n$, and
$\timeelapse{\abint} \vinz \guardof{\apedge} \wedge \inv(\aloc_n)$.

We recall an assumption that we can make without loss of generality,
as already described in the proof of \lemref{lem:imdp2imc}:
given $\afinpath' \in \corrtwo{\anosched}{\abfinpath}$,
we assume that there exists at most one
$\ana \in
\ass{\omfromtrans{\oimdpfrom{\acdpta}}(\last(\afinpath'),(\timeelapse{\abint},\apedge))}$
such that
$((\timeelapse{\abint},\apedge),\ana) \in \support(\anosched(\afinpath'))$
(this follows from the fact that the set of assignments
$\ass{\omfromtrans{\oimdpfrom{\acdpta}}(\last(\afinpath'),(\timeelapse{\abint},\apedge))}$
is closed under convex combinations).
If such $\ana \in
\ass{\omfromtrans{\oimdpfrom{\acdpta}}(\last(\afinpath'),(\timeelapse{\abint},\apedge))}$
exists,
we denote it by $\anaof{\afinpath'}{\timeelapse{\abint}}{\apedge}$.

As in the proof of \lemref{lem:bminimal},
to define $\ansched(\afinpath)$,
for each pair $(\timeelapse{\abint},\apedge)$ used to extend $\abfinpath$,
we identify a clock valuation $\timeelapsespecial{\abfinpath}{\timeelapse{\abint}}{\apedge}$,
and then let:
\begin{eqnarray*}
\ansched(\afinpath)(\timeelapsespecial{\abfinpath}{\timeelapse{\abint}}{\apedge},\apedge) & = &
\frac{\sum_{\afinpath' \in \corrtwo{\anosched}{\abfinpath}} \probfin{\anosched}{\regof{\aloc,0}}(\afinpath')
\cdot
\anosched(\afinpath')((\timeelapse{\abint},\apedge),\anaof{\afinpath'}{\timeelapse{\abint}}{\apedge}) }
{\probfin{\anosched}{\regof{\aloc,0}}(\corrtwo{\anosched}{\abfinpath})} \; ,
\end{eqnarray*}
where $\probfin{\anosched}{\regof{\aloc,0}}(\corrtwo{\anosched}{\abfinpath})
= \sum_{\afinpath' \in \corrtwo{\anosched}{\abfinpath}} \probfin{\anosched}{\regof{\aloc,0}}(\afinpath')$,
and where $\anosched(\afinpath')((\timeelapse{\abint},\apedge),\anaof{\afinpath'}{\timeelapse{\abint}}{\apedge}) = 0$
if there does not exist any $\ana \in
\ass{\omfromtrans{\oimdpfrom{\acdpta}}(\last(\afinpath'),(\timeelapse{\abint},\apedge))}$
such that
$((\timeelapse{\abint},\apedge),\ana) \in \support(\anosched(\afinpath'))$.

For the case in which $\timeelapse{\abint}$ is a closed interval,
i.e., $\timeelapse{\abint}=[\ab,\ab]$ for $\ab \in \borders$,
we let $\timeelapsespecial{\abfinpath}{\timeelapse{\abint}}{\apedge} = \ab$.

We now consider the case in which $\timeelapse{\abint}$ is an open interval.
The approach we take is analogous to that taken in the proof of \lemref{lem:bminimal}:
intuitively, if $\apedge$ is a constant probabilistic edge,
and no non-constant probabilistic edge has been taken while the clock value
remains in $\timeelapse{\abint}$,
then the scheduler $\ansched$ maintains the clock as having a value that is
a (small) amount greater than the left endpoint of $\timeelapse{\abint}$;
if $\apedge$ is instead a non-constant probabilistic edge,
then the value of the clock is advanced to a particular value in the interval $\timeelapse{\abint}$,
in order to mimic the choice of particular assignments in $\anosched$;
finally, if $\apedge$ is a constant probabilistic edge,
and a non-constant probabilistic edge \emph{has} been taken previously while the clock value
remains in $\timeelapse{\abint}$,
then the scheduler $\ansched$ can choose the value of the clock from $\timeelapse{\abint}$
in an arbitrary manner.
As in the proof of \lemref{lem:bminimal},
the fact that we need to consider only these three cases relies on the assumption of initialisation.
The construction is more complicated than that of the proof of \lemref{lem:bminimal},
because the clock valuations must be chosen by $\ansched$
to reflect \emph{assignments} chosen by $\anosched$
(instead, for \lemref{lem:bminimal}, the clock valuations were chosen
in the construction of $\ansched$ to reflect \emph{clock valuations} chosen by $\asched$,
which was simpler conceptually).
As in the proof of \lemref{lem:bminimal},
let $k \leq n$ be the maximum index for which $\abint_k \neq \timeelapse{\abint}$
(with $k = 0$ if no such index exists),
noting that $\timeelapse{\abint}_k = \abint_{k+1} = \timeelapse{\abint}_{k+1} = \cdots
= \timeelapse{\abint}_{n-1} = \abint_n = \timeelapse{\abint}$.
We consider the following function
$\assTOvaln{\timeelapse{\abint}}{\apedge}:
\bigcup_{\apedge' \in \pedges}
\dist(\set{(\timeelapse{\abint},\apedge',\lb), (\timeelapse{\abint},\apedge',\rb)})
\ra \timeelapse{\abint}$
that maps an assignment (over the target endpoint indicators
$(\timeelapse{\abint},\apedge',\lb), (\timeelapse{\abint},\apedge',\rb)$
for some $\apedge' \in \pedges$)
to a clock valuation in $\timeelapse{\abint}$.
The function $\assTOvaln{\timeelapse{\abint}}{\apedge}$ is defined as follows:
given $\ana \in \dist(\set{(\timeelapse{\abint},\apedge',\lb), (\timeelapse{\abint},\apedge',\rb)})$,
we let $\assTOvaln{\timeelapse{\abint}}{\apedge}(\ana)
= \rb(\timeelapse{\abint}) -
\ana(\lb) \cdot (\rb(\timeelapse{\abint}) - \lb(\timeelapse{\abint}))$
(note that, as in the proof of \lemref{lem:sched_cdPTA2IMDP},
we write $\ana(\lb)$ rather than $\ana(\timeelapse{\abint},\apedge',\lb)$ for simplicity;
in the sequel, we will also usually write $\ana(\rb)$ rather than $\ana(\timeelapse{\abint},\apedge',\rb)$).
The intuition underlying the definition of $\assTOvaln{\timeelapse{\abint}}{\apedge}$
is that $\assTOvaln{\timeelapse{\abint}}{\apedge}(\ana)$
is a clock valuation that represents faithfully the assignment $\ana$,
in the sense that the greater the probability of $\ana(\lb)$,
the closer the clock valuation is to the left endpoint of $\timeelapse{\abint}$.

First we consider the subcase in which all probabilistic edges in $\apedge_k \cdots \apedge_n$
are constant, and $\apedge$ is also constant.
While the clock must be set to some value in $\timeelapse{\abint}$,
it is important that the value chosen is not greater than the value needed in the future to replicate
the probabilities corresponding to a non-constant probabilistic edge.
We now identify such values.
We first define the set $\futfpaths{\abfinpath}{\timeelapse{\abint}}{\apedge}$,
which contains finite suffixes of paths in $\corrtwo{\anosched}{\abfinpath}$,
where the suffixes
(1)~start with a transition derived from $(\timeelapse{\abint},\apedge)$,
(2)~feature only $\timeelapse{\abint}$ (apart from the suffixes' first state),
and (3)~features only constant probabilistic edges until the final transition,
which features a non-constant probabilistic edge.
We now define formally $\futfpaths{\abfinpath}{\timeelapse{\abint}}{\apedge}$.
For a finite path $\afinpath' \in \mfinpathsendsin{\anosched}{\regstates}(\regof{\aloc,0})$,
let $\msuffpathfrom{\anosched}{\afinpath'}$ be the set of (finite) suffixes of $\afinpath'$
generated by $\anosched$
that terminate in a state in $\eis$;
formally, $\msuffpathfrom{\anosched}{\afinpath'}$ is the smallest set such that,
if $\afinpath'' \in \mfinpathsendsin{\anosched}{\eis}(\regof{\aloc,0})$,
where $\afinpath'' = \afinpath' \afinpath'''$,
then $\afinpath''' \in \msuffpathfrom{\anosched}{\afinpath'}$.
We let $\msuffpath{\anosched}(\regof{\aloc,0})$ be the set of all suffixes of finite paths of $\anosched$,
starting in a state of $\regstates$ and terminating in a state in $\eis$, that is:
\[
\msuffpath{\anosched}(\regof{\aloc,0})
= \bigcup_{\afinpath' \in \mfinpathsendsin{\anosched}{\regstates}(\regof{\aloc,0})}
\msuffpathfrom{\anosched}{\afinpath'} \; .
\]
Consider $\afinpath' \in \msuffpath{\anosched}(\regof{\aloc,0})$,
where \[\afinpath' = (\aloc_0,\abint_0) ((\timeelapse{\abint}_0,\apedge_0),\ana_0) (\dir_0)
\cdots \linebreak[0]
((\timeelapse{\abint}_m,\apedge_m),\ana_m) (\dir_m).\]
We let $\finalass{\afinpath'} = \ana_m$
be the assignment featured in the final transition of $\afinpath'$.
Furthermore, we say that:
\begin{itemize}
\item
$\afinpath'$ \emph{starts with $(\timeelapse{\abint},\apedge)$}
if $\timeelapse{\abint}_0 = \timeelapse{\abint}$
and $\apedge_0 = \apedge$;
\item
$\afinpath'$ \emph{remains in $\timeelapse{\abint}$}
if $\timeelapse{\abint}_i = \timeelapse{\abint}$ for all $0 \leq i \leq m$
and $\abint_i = \timeelapse{\abint}$ for all $0 < i \leq m$;
\item
$\afinpath'$ \emph{terminates with a non-constant probabilistic edge}
if $\apedge_m$ is a non-constant probabilistic edge
and $\apedge_i$ is a constant probabilistic edge for all $0 \leq i < m$.
\end{itemize}
We say that $\afinpath' \in \msuffpath{\anosched}(\regof{\aloc,0})$
is a \emph{$(\timeelapse{\abint},\apedge)$-critical path}
if $\afinpath'$ starts with $(\timeelapse{\abint},\apedge)$,
remains in $\timeelapse{\abint}$,
and terminates with a non-constant probabilistic edge.
Let $\critical{(\timeelapse{\abint},\apedge)}{\anosched}{(\regof{\aloc,0})}$
be the set of $(\timeelapse{\abint},\apedge)$-critical paths.
We can now let:
\[
\futfpaths{\abfinpath}{\timeelapse{\abint}}{\apedge}
=
\critical{(\timeelapse{\abint},\apedge)}{\anosched}{(\regof{\aloc,0})}
\cap
\bigcup_{\afinpath' \in \corrtwo{\anosched}{\abfinpath}} \msuffpathfrom{\anosched}{\afinpath'} \; .
\]
Then we let $\futass{\abfinpath}{\timeelapse{\abint}}{\apedge}$
be the set of assignments featured in the final transition
of finite paths in $\futfpaths{\abfinpath}{\timeelapse{\abint}}{\apedge}$:
formally $\futass{\abfinpath}{\timeelapse{\abint}}{\apedge}
= \set{\finalass{\afinpath'} \setsep \afinpath' \in \futfpaths{\abfinpath}{\timeelapse{\abint}}{\apedge}}$.
Finally we let $\timeelapsespecial{\abfinpath}{\timeelapse{\abint}}{\apedge}
= \min \set{ \assTOvaln{\timeelapse{\abint}}{\apedge}(\ana)
\setsep \ana \in \futass{\abfinpath}{\timeelapse{\abint}}{\apedge} }$.

Now consider the subcase in which $\timeelapse{\abint}$ is an open interval,
and all probabilistic edges in the sequence $\apedge_k \cdots \apedge_n$
are constant, but $\apedge$ is non-constant.
As in the analogous case of the proof of \lemref{lem:bminimal},
the choice of clock valuation by $\ansched$ corresponds to a weighted average
from a set of clock valuations.
In the case of this proof,
the set of clock valuations used is that obtained from assignments used
by $\anosched$ after finite paths corresponding to $\abfinpath$ extended
with $(\timeelapse{\abint},\apedge)$,
using a similar ``transformation from assignment to clock valuation''
approach as used in the previous paragraph.
We now give the formal details.
Recall that we have assumed w.l.o.g.\ that,
given $\afinpath' \in \corrtwo{\anosched}{\abfinpath}$,
there exists at most one
$\ana \in
\ass{\omfromtrans{\oimdpfrom{\acdpta}}(\last(\afinpath'),(\timeelapse{\abint},\apedge))}$
such that
$((\timeelapse{\abint},\apedge),\ana) \in \support(\anosched(\afinpath'))$,
and that we denote such an assignment $\ana$ by $\anaof{\afinpath'}{\timeelapse{\abint}}{\apedge}$.
Then we let:
\begin{eqnarray*}
\timeelapsespecial{\abfinpath}{\timeelapse{\abint}}{\apedge} & = &
\frac{
\sum_{\afinpath' \in \corrtwo{\anosched}{\abfinpath}} \probfin{\anosched}{(\regof{\aloc,0})}(\afinpath')
\cdot
\anosched(\afinpath')((\timeelapse{\abint},\apedge),\anaof{\afinpath'}{\timeelapse{\abint}}{\apedge})
\cdot
\assTOvaln{\timeelapse{\abint}}{\apedge}(\anaof{\afinpath'}{\timeelapse{\abint}}{\apedge})
}
{
\sum_{\afinpath' \in \corrtwo{\anosched}{\abfinpath}}
\probfin{\anosched}{(\regof{\aloc,0})}(\afinpath')
\cdot
\anosched(\afinpath')((\timeelapse{\abint},\apedge),\anaof{\afinpath'}{\timeelapse{\abint}}{\apedge})
} \; .
\end{eqnarray*}

Finally, we consider the subcase in which $\timeelapse{\abint}$ is an open interval,
there exists $i \in \set{k,\ldots,n}$ such that $\apedge_i$ is non-constant,
and for all other $j \in \set{k,\ldots,n}$ we have that $\apedge_i$ is constant,
and also $\apedge$ is constant.
As in the proof of \lemref{lem:bminimal},
for this subcase, the clock valuation can be arbitrary;
for simplicity we retain the same clock valuation as was used in the final state of $\afinpath$,
i.e., if $\last(\afinpath)$ is equal to $(\aloc',\aval')$,
we let $\timeelapsespecial{\abfinpath}{\timeelapse{\abint}}{\apedge} = \aval'$.

Repeating this overall process for all
$(\timeelapse{\abint},\apedge) \in \omfromactionsof{\oimdpfrom{\acdpta}}{\aloc,\abint}$
suffices to define the distribution $\ansched(\afinpath)$.
Following similar reasoning to the analogous part of \lemref{lem:bminimal},
we can show that $\ansched(\afinpath)$ is indeed a distribution,
i.e.,
$\sum_{(\timeelapse{\abint},\apedge) \in \omfromactionsof{\oimdpfrom{\acdpta}}{\regof{\last(\afinpath)}}}
\ansched(\afinpath)(\timeelapsespecial{\abfinpath}{\timeelapse{\abint}}{\apedge},\apedge)
= 1$
(recall that $\regof{\last(\afinpath)}=(\aloc_n,\abint_n)$).
From the construction of $\ansched$,
and from the fact that $\sum_{(\timeelapse{\abint},\apedge) \in \omfromactionsof{\oimdpfrom{\acdpta}}{\aloc_n,\abint_n}}
\anosched(\afinpath')((\timeelapse{\abint},\apedge),\anaof{\afinpath'}{\timeelapse{\abint}}{\apedge})
= 1$,
we have the following:
\begin{eqnarray*}
& & \sum_{(\timeelapse{\abint},\apedge) \in \omfromactionsof{\oimdpfrom{\acdpta}}{\aloc_n,\abint_n}} \ansched(\afinpath)(\timeelapsespecial{\abfinpath}{\timeelapse{\abint}}{\apedge},\apedge)
\\
& = &
\sum_{(\timeelapse{\abint},\apedge) \in \omfromactionsof{\oimdpfrom{\acdpta}}{\aloc_n,\abint_n}} \left(
\frac{\sum_{\afinpath' \in \corrtwo{\anosched}{\abfinpath}} \probfin{\anosched}{\regof{\aloc,0}}(\afinpath')
\cdot
\anosched(\afinpath')((\timeelapse{\abint},\apedge),\anaof{\afinpath'}{\timeelapse{\abint}}{\apedge}) }
{\probfin{\anosched}{\regof{\aloc,0}}(\corrtwo{\anosched}{\abfinpath})}
\right)
\\
& = &
\frac{
\sum_{\afinpath' \in \corrtwo{\anosched}{\abfinpath}} \probfin{\anosched}{\regof{\aloc,0}}(\afinpath')
\cdot
\sum_{(\timeelapse{\abint},\apedge) \in \omfromactionsof{\oimdpfrom{\acdpta}}{\aloc_n,\abint_n}}
\anosched(\afinpath')((\timeelapse{\abint},\apedge),\anaof{\afinpath'}{\timeelapse{\abint}}{\apedge}) }
{\probfin{\anosched}{\regof{\aloc,0}}(\corrtwo{\anosched}{\abfinpath})}
\\
& = &
\frac{\sum_{\afinpath' \in \corrtwo{\anosched}{\abfinpath}} \probfin{\anosched}{\regof{\aloc,0}}(\afinpath') }
{\probfin{\anosched}{\regof{\aloc,0}}(\corrtwo{\anosched}{\abfinpath})}
\\
& = &
1 \; .
\end{eqnarray*}

The next step of the proof is to show that
$\probl{\ansched}{(\aloc,0)}(\Diamond \targetsetof{\targetlocs}) =
\probl{\anosched}{\regof{\aloc,0}}(\Diamond \targetsetimdpof{\targetlocs})$.
As in previous proofs,
it is sufficient to show that
$\probfin{\ansched}{(\aloc,0)}(\corrtwo{\ansched}{\abfinpath}) =
\probfin{\anosched}{\regof{\aloc,0}}(\corrtwo{\anosched}{\abfinpath})$
for all $\borders$-paths $\abfinpath$.
Proceeding by induction on the length of $\borders$-paths,
consider the $\borders$-path $\abfinpath (\timeelapse{\abint},\apedge) (\aloc',\abint')$,
and assume that
$\probfin{\ansched}{(\aloc,0)}(\corrtwo{\ansched}{\abfinpath}) =
\probfin{\anosched}{\regof{\aloc,0}}(\corrtwo{\anosched}{\abfinpath})$.
Our aim is to show that
\[
\probfin{\ansched}{(\aloc,0)}(\corrtwo{\ansched}{\abfinpath (\timeelapse{\abint},\apedge) (\aloc',\abint')}) =
\probfin{\anosched}{\regof{\aloc,0}}(\corrtwo{\anosched}{\abfinpath (\timeelapse{\abint},\apedge) (\aloc',\abint')}) \; .
\]
We consider the case in which $\timeelapse{\abint}$ is open,
$\apedge$ is non-constant
and $\abint' = \timeelapse{\abint}$,
i.e., the target region $(\aloc',\abint')$ is obtained using the outcome $(\emptyset,\aloc')$
(which does not reset the clock, and hence the interval $\abint'$ of the target region
is equal to the interval $\timeelapse{\abint}$ obtained just before the probabilistic edge $\apedge$
is taken).
Other cases are dealt with similarly;
in particular, the cases for constant probabilistic edges are more straightforward.
Our approach is slightly different from that used in the proof of \lemref{lem:sched_cdPTA2IMDP}
because, in the scheduler of $\sem{\oimdpfrom{\acdpta}}$
constructed in the proof of \lemref{lem:sched_cdPTA2IMDP},
the choice made after all finite paths corresponding to a particular $\borders$-path was the same;
instead, this property does not necessarily hold for $\anosched$.
Note that in this final part of the proof,
the fact that clock dependencies are affine will be used at multiple points.
Let $\afinpath \in \corrtwo{\anosched}{\abfinpath}$.
Now consider the finite paths in
$\corrtwo{\anosched}{\abfinpath (\timeelapse{\abint},\apedge) (\aloc',\abint')}$
that have $\afinpath$ as a prefix:
these finite paths are
$\afinpath
((\timeelapse{\abint},\apedge),\anaof{\afinpath}{\timeelapse{\abint}}{\apedge})(\lb)(\aloc',\abint')$
and
$\afinpath
((\timeelapse{\abint},\apedge),\anaof{\afinpath}{\timeelapse{\abint}}{\apedge})(\rb)(\aloc',\abint')$.
From the construction of $\anosched$ and by definition
(where $\adisttemp$ denotes the distribution template of probabilistic edge $\apedge$,
and $\fcno$ and $\fdno$ denote
$\fc{\apedge}{(\emptyset,\aloc')}$ and $\fd{\apedge}{(\emptyset,\aloc')}$, respectively),
we have the following:
\begin{eqnarray*}
& &
\sum_{\dir \in \set{\lb,\rb}}
\probfin{\anosched}{\regof{\aloc,0}}(\afinpath
((\timeelapse{\abint},\apedge),\anaof{\afinpath}{\timeelapse{\abint}}{\apedge})(\dir)(\aloc',\abint'))
\\
& = &
\probfin{\anosched}{\regof{\aloc,0}}(\afinpath)
\cdot
\anosched(\afinpath)((\timeelapse{\abint},\apedge),\anaof{\afinpath}{\timeelapse{\abint}}{\apedge})
\\
& &
\times
\sum_{\dir \in \set{\lb,\rb}}
\anaof{\afinpath}{\timeelapse{\abint}}{\apedge}(\timeelapse{\abint},\apedge,\dir)
\cdot
\mfromtrans{\oimdpfrom{\acdpta}}((\timeelapse{\abint},\apedge,\dir),(\dummyact,\dummyassof{\timeelapse{\abint},\apedge,\dir}))(\aloc',\abint')
\\
& = &
\probfin{\anosched}{\regof{\aloc,0}}(\afinpath)
\cdot
\anosched(\afinpath)((\timeelapse{\abint},\apedge),\anaof{\afinpath}{\timeelapse{\abint}}{\apedge})
\cdot
\sum_{\dir \in \set{\lb,\rb}}
\anaof{\afinpath}{\timeelapse{\abint}}{\apedge}(\timeelapse{\abint},\apedge,\dir)
\cdot
\adtpara{\pd}{\dir(\timeelapse{\abint})}(\emptyset,\aloc')
\\
& = &
\probfin{\anosched}{\regof{\aloc,0}}(\afinpath)
\cdot
\anosched(\afinpath)((\timeelapse{\abint},\apedge),\anaof{\afinpath}{\timeelapse{\abint}}{\apedge})
\cdot
\sum_{\dir \in \set{\lb,\rb}}
\anaof{\afinpath}{\timeelapse{\abint}}{\apedge}(\timeelapse{\abint},\apedge,\dir)
\cdot
( \fcno + \fdno \cdot \dir(\timeelapse{\abint}) )
\\
& = &
\fcno
\cdot
\probfin{\anosched}{\regof{\aloc,0}}(\afinpath)
\cdot
\anosched(\afinpath)((\timeelapse{\abint},\apedge),\anaof{\afinpath}{\timeelapse{\abint}}{\apedge})
\cdot
\sum_{\dir \in \set{\lb,\rb}}
\anaof{\afinpath}{\timeelapse{\abint}}{\apedge}(\timeelapse{\abint},\apedge,\dir)
\\
& &
+
\,
\fdno
\cdot
\probfin{\anosched}{\regof{\aloc,0}}(\afinpath)
\cdot
\anosched(\afinpath)((\timeelapse{\abint},\apedge),\anaof{\afinpath}{\timeelapse{\abint}}{\apedge})
\cdot
\sum_{\dir \in \set{\lb,\rb}}
\anaof{\afinpath}{\timeelapse{\abint}}{\apedge}(\timeelapse{\abint},\apedge,\dir)
\cdot \dir(\timeelapse{\abint}) \; .
\end{eqnarray*}
Note that $\sum_{\dir \in \set{\lb,\rb}}
\anaof{\afinpath}{\timeelapse{\abint}}{\apedge}(\timeelapse{\abint},\apedge,\dir)
= 1$
(because $\anaof{\afinpath}{\timeelapse{\abint}}{\apedge}$
is a distribution over the set $\set{(\timeelapse{\abint},\apedge,\lb),(\timeelapse{\abint},\apedge,\rb)}$).
Furthermore,
recalling the definition of
$\assTOvaln{\timeelapse{\abint}}{\apedge}$,
we observe that:
\begin{eqnarray*}
\anaof{\afinpath}{\timeelapse{\abint}}{\apedge}(\timeelapse{\abint},\apedge,\lb)
& = &
\frac{\rb(\timeelapse{\abint}) -
\assTOvaln{\timeelapse{\abint}}{\apedge}(\anaof{\afinpath}{\timeelapse{\abint}}{\apedge})}
{\rb(\timeelapse{\abint}) - \lb(\timeelapse{\abint})}
\\
\anaof{\afinpath}{\timeelapse{\abint}}{\apedge}(\timeelapse{\abint},\apedge,\rb)
& = &
\frac{
\assTOvaln{\timeelapse{\abint}}{\apedge}(\anaof{\afinpath}{\timeelapse{\abint}}{\apedge})
- \lb(\timeelapse{\abint})}
{\rb(\timeelapse{\abint}) - \lb(\timeelapse{\abint})} \; .
\end{eqnarray*}
Hence we have:
\begin{eqnarray*}
& &
\sum_{\dir \in \set{\lb,\rb}}
\anaof{\afinpath}{\timeelapse{\abint}}{\apedge}(\timeelapse{\abint},\apedge,\dir)
\cdot \dir(\timeelapse{\abint})
\\
& = &
\frac{\rb(\timeelapse{\abint}) -
\assTOvaln{\timeelapse{\abint}}{\apedge}(\anaof{\afinpath}{\timeelapse{\abint}}{\apedge})}
{\rb(\timeelapse{\abint}) - \lb(\timeelapse{\abint})}
\cdot
\lb(\timeelapse{\abint})
+
\frac{
\assTOvaln{\timeelapse{\abint}}{\apedge}(\anaof{\afinpath}{\timeelapse{\abint}}{\apedge})
- \lb(\timeelapse{\abint})}
{\rb(\timeelapse{\abint}) - \lb(\timeelapse{\abint})}
\cdot
\rb(\timeelapse{\abint})
\\
& = &
\frac{
\rb(\timeelapse{\abint}) \cdot \lb(\timeelapse{\abint})
-
\assTOvaln{\timeelapse{\abint}}{\apedge}(\anaof{\afinpath}{\timeelapse{\abint}}{\apedge})
\cdot \lb(\timeelapse{\abint})
+
\assTOvaln{\timeelapse{\abint}}{\apedge}(\anaof{\afinpath}{\timeelapse{\abint}}{\apedge})
\cdot \rb(\timeelapse{\abint})
-
\lb(\timeelapse{\abint})
\cdot \rb(\timeelapse{\abint})
}
{\rb(\timeelapse{\abint}) - \lb(\timeelapse{\abint})}
\\
& = &
\frac{
\assTOvaln{\timeelapse{\abint}}{\apedge}(\anaof{\afinpath}{\timeelapse{\abint}}{\apedge})
\cdot \rb(\timeelapse{\abint})
-
\assTOvaln{\timeelapse{\abint}}{\apedge}(\anaof{\afinpath}{\timeelapse{\abint}}{\apedge})
\cdot \lb(\timeelapse{\abint})
}
{\rb(\timeelapse{\abint}) - \lb(\timeelapse{\abint})}
\\
& = &
\assTOvaln{\timeelapse{\abint}}{\apedge}(\anaof{\afinpath}{\timeelapse{\abint}}{\apedge}) \; .
\end{eqnarray*}
Therefore we have:
\begin{eqnarray*}
& &
\fcno
\cdot
\probfin{\anosched}{\regof{\aloc,0}}(\afinpath)
\cdot
\anosched(\afinpath)((\timeelapse{\abint},\apedge),\anaof{\afinpath}{\timeelapse{\abint}}{\apedge})
\cdot
\sum_{\dir \in \set{\lb,\rb}}
\anaof{\afinpath}{\timeelapse{\abint}}{\apedge}(\timeelapse{\abint},\apedge,\dir)
\\
& &
+
\,
\fdno
\cdot
\probfin{\anosched}{\regof{\aloc,0}}(\afinpath)
\cdot
\anosched(\afinpath)((\timeelapse{\abint},\apedge),\anaof{\afinpath}{\timeelapse{\abint}}{\apedge})
\cdot
\sum_{\dir \in \set{\lb,\rb}}
\anaof{\afinpath}{\timeelapse{\abint}}{\apedge}(\timeelapse{\abint},\apedge,\dir)
\cdot \dir(\timeelapse{\abint})
\\
& = &
\fcno
\cdot
\probfin{\anosched}{\regof{\aloc,0}}(\afinpath)
\cdot
\anosched(\afinpath)((\timeelapse{\abint},\apedge),\anaof{\afinpath}{\timeelapse{\abint}}{\apedge})
\\
& &
+
\,
\fdno
\cdot
\probfin{\anosched}{\regof{\aloc,0}}(\afinpath)
\cdot
\anosched(\afinpath)((\timeelapse{\abint},\apedge),\anaof{\afinpath}{\timeelapse{\abint}}{\apedge})
\cdot
\assTOvaln{\timeelapse{\abint}}{\apedge}(\anaof{\afinpath}{\timeelapse{\abint}}{\apedge}) \; .
\\
\end{eqnarray*}
We now use this equality to establish that
$\probfin{\ansched}{(\aloc,0)}(\corrtwo{\ansched}{\abfinpath}) =
\probfin{\anosched}{\regof{\aloc,0}}(\corrtwo{\anosched}{\abfinpath})$.
First observe the following:
\begin{eqnarray*}
& &
\probfin{\anosched}{(\aloc,0)}(\corrtwo{\ansched}{\abfinpath (\timeelapse{\abint},\apedge)} (\aloc',\abint'))
\\
& = &
\sum_{\afinpath \in \corrtwo{\anosched}{\abfinpath (\timeelapse{\abint},\apedge)} (\aloc',\abint')}
\probfin{\anosched}{\regof{\aloc,0}}(\afinpath)
\\
& = &
\sum_{\afinpath \in \corrtwo{\anosched}{\abfinpath}}
\sum_{\dir \in \set{\lb,\rb}}
\probfin{\anosched}{\regof{\aloc,0}}(\afinpath
((\timeelapse{\abint},\apedge),\anaof{\afinpath}{\timeelapse{\abint}}{\apedge})(\dir)(\aloc',\abint'))
\\
& = &
\sum_{\afinpath \in \corrtwo{\anosched}{\abfinpath}}
\fcno
\cdot
\probfin{\anosched}{\regof{\aloc,0}}(\afinpath)
\cdot
\anosched(\afinpath)((\timeelapse{\abint},\apedge),\anaof{\afinpath}{\timeelapse{\abint}}{\apedge})
\\
& &
+
\,
\sum_{\afinpath \in \corrtwo{\anosched}{\abfinpath}}
\fdno
\cdot
\probfin{\anosched}{\regof{\aloc,0}}(\afinpath)
\cdot
\anosched(\afinpath)((\timeelapse{\abint},\apedge),\anaof{\afinpath}{\timeelapse{\abint}}{\apedge})
\cdot
\assTOvaln{\timeelapse{\abint}}{\apedge}(\anaof{\afinpath}{\timeelapse{\abint}}{\apedge})
\\
\end{eqnarray*}
Recalling the definition of $\timeelapsespecial{\abfinpath}{\timeelapse{\abint}}{\apedge}$
(for the case of open $\timeelapse{\abint}$ and non-constant $\apedge$),
we have that
$\sum_{\afinpath \in \corrtwo{\anosched}{\abfinpath}} \probfin{\anosched}{(\regof{\aloc,0})}(\afinpath)
\cdot
\anosched(\afinpath)((\timeelapse{\abint},\apedge),\anaof{\afinpath}{\timeelapse{\abint}}{\apedge})
\cdot
\assTOvaln{\timeelapse{\abint}}{\apedge}(\anaof{\afinpath}{\timeelapse{\abint}}{\apedge})
=
\timeelapsespecial{\abfinpath}{\timeelapse{\abint}}{\apedge}
\cdot
\sum_{\afinpath \in \corrtwo{\anosched}{\abfinpath}}
\probfin{\anosched}{(\regof{\aloc,0})}(\afinpath)
\cdot
\anosched(\afinpath)((\timeelapse{\abint},\apedge),\anaof{\afinpath}{\timeelapse{\abint}}{\apedge})
$.
Hence we then obtain:
\begin{eqnarray*}
& &
\sum_{\afinpath \in \corrtwo{\anosched}{\abfinpath}}
\fcno
\cdot
\probfin{\anosched}{\regof{\aloc,0}}(\afinpath)
\cdot
\anosched(\afinpath)((\timeelapse{\abint},\apedge),\anaof{\afinpath}{\timeelapse{\abint}}{\apedge})
\\
& &
+
\,
\sum_{\afinpath \in \corrtwo{\anosched}{\abfinpath}}
\fdno
\cdot
\probfin{\anosched}{\regof{\aloc,0}}(\afinpath)
\cdot
\anosched(\afinpath)((\timeelapse{\abint},\apedge),\anaof{\afinpath}{\timeelapse{\abint}}{\apedge})
\cdot
\assTOvaln{\timeelapse{\abint}}{\apedge}(\anaof{\afinpath}{\timeelapse{\abint}}{\apedge})
\\
& = &
\fcno
\cdot
\sum_{\afinpath \in \corrtwo{\anosched}{\abfinpath}}
\probfin{\anosched}{\regof{\aloc,0}}(\afinpath)
\cdot
\anosched(\afinpath)((\timeelapse{\abint},\apedge),\anaof{\afinpath}{\timeelapse{\abint}}{\apedge})
\\
& &
+
\,
\fdno
\cdot
\timeelapsespecial{\abfinpath}{\timeelapse{\abint}}{\apedge}
\cdot
\sum_{\afinpath \in \corrtwo{\anosched}{\abfinpath}}
\probfin{\anosched}{(\regof{\aloc,0})}(\afinpath)
\cdot
\anosched(\afinpath)((\timeelapse{\abint},\apedge),\anaof{\afinpath}{\timeelapse{\abint}}{\apedge}) \; .
\\
\end{eqnarray*}
Let $\afinpathspecial$
denote the unique path in $\corrtwo{\ansched}{\abfinpath}$.
By the definition of $\ansched(\afinpathspecial)$,
and from the fact that $\probfin{\ansched}{(\aloc,0)}(\corrtwo{\ansched}{\abfinpath})
=
\probfin{\anosched}{\regof{\aloc,0}}(\corrtwo{\anosched}{\abfinpath})$
by induction,
we have:
\begin{eqnarray*}
\sum_{\afinpath \in \corrtwo{\anosched}{\abfinpath}} \probfin{\anosched}{\regof{\aloc,0}}(\afinpath)
\cdot
\anosched(\afinpath)((\timeelapse{\abint},\apedge),\anaof{\afinpath}{\timeelapse{\abint}}{\apedge})
& = &
\probfin{\ansched}{(\aloc,0)}(\corrtwo{\ansched}{\abfinpath})
\cdot
\ansched(\afinpathspecial)(\timeelapsespecial{\abfinpath}{\timeelapse{\abint}}{\apedge},\apedge)  \; .
\end{eqnarray*}
We then obtain:
\begin{eqnarray*}
& &
\fcno
\cdot
\sum_{\afinpath \in \corrtwo{\anosched}{\abfinpath}}
\probfin{\anosched}{\regof{\aloc,0}}(\afinpath)
\cdot
\anosched(\afinpath)((\timeelapse{\abint},\apedge),\anaof{\afinpath}{\timeelapse{\abint}}{\apedge})
\\
& &
+
\,
\fdno
\cdot
\timeelapsespecial{\abfinpath}{\timeelapse{\abint}}{\apedge}
\cdot
\sum_{\afinpath \in \corrtwo{\anosched}{\abfinpath}}
\probfin{\anosched}{(\regof{\aloc,0})}(\afinpath)
\cdot
\anosched(\afinpath)((\timeelapse{\abint},\apedge),\anaof{\afinpath}{\timeelapse{\abint}}{\apedge})
\\
& = &
\fcno
\cdot
\probfin{\ansched}{(\aloc,0)}(\corrtwo{\ansched}{\abfinpath})
\cdot
\ansched(\afinpathspecial)(\timeelapsespecial{\abfinpath}{\timeelapse{\abint}}{\apedge},\apedge)
\\
& &
+
\,
\fdno
\cdot
\timeelapsespecial{\abfinpath}{\timeelapse{\abint}}{\apedge}
\cdot
\probfin{\ansched}{(\aloc,0)}(\corrtwo{\ansched}{\abfinpath})
\cdot
\ansched(\afinpathspecial)(\timeelapsespecial{\abfinpath}{\timeelapse{\abint}}{\apedge},\apedge)
\\
& = &
\probfin{\ansched}{(\aloc,0)}(\corrtwo{\ansched}{\abfinpath})
\cdot
\ansched(\afinpathspecial)(\timeelapsespecial{\abfinpath}{\timeelapse{\abint}}{\apedge},\apedge)
\cdot
(\fcno + \fdno
\cdot
\timeelapsespecial{\abfinpath}{\timeelapse{\abint}}{\apedge}
)
\\
& = &
\probfin{\ansched}{(\aloc,0)}(\corrtwo{\ansched}{\abfinpath})
\cdot
\ansched(\afinpathspecial)(\timeelapsespecial{\abfinpath}{\timeelapse{\abint}}{\apedge},\apedge)
\cdot
\adtpara{\pd}{\timeelapsespecial{\abfinpath}{\timeelapse{\abint}}{\apedge}}(\emptyset,\aloc')
\\
& = &
\probfin{\ansched}{(\aloc,0)}(\corrtwo{\ansched}{\abfinpath})
\cdot
\ansched(\afinpathspecial)(\timeelapsespecial{\abfinpath}{\timeelapse{\abint}}{\apedge},\apedge)
\cdot
\mfromtrans{\acdpta}(\last(\afinpathspecial),
(\timeelapsespecial{\abfinpath}{\timeelapse{\abint}}{\apedge},\apedge))(\aloc',\abint')
\\
& = &
\probfin{\ansched}{(\aloc,0)}(\corrtwo{\ansched}{\abfinpath (\timeelapse{\abint},\apedge) (\aloc',\abint')}) \; .
\\
\end{eqnarray*}
This establishes that
$\probfin{\ansched}{(\aloc,0)}(\corrtwo{\ansched}{\abfinpath (\timeelapse{\abint},\apedge) (\aloc',\abint')}) =
\probfin{\anosched}{\regof{\aloc,0}}(\corrtwo{\anosched}{\abfinpath (\timeelapse{\abint},\apedge) (\aloc',\abint')})$.
We note that the cases for other outcomes are similar,
and for constant probabilistic edges and/or closed intervals are more straightforward.
\end{proof}

We characterise the size of a 1c-cdPTA as the sum of the number of its locations,
the size of the binary encoding of the clock constraints used in invariant conditions and guards,
and the size of the binary encoding of the constants used in the distribution templates of the probabilistic edges
(i.e., $\fc{\apedge}{\outcome}$ and $\fd{\apedge}{\outcome}$
for each $\apedge \in \pedges$ and $\outcome \in 2^{\set{\aclock}} \times \locs$).

\begin{thm}
Quantitative and qualitative problems for 1c-cdPTA can be solved in polynomial time.
\end{thm}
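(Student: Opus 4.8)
The plan is to reduce, in polynomial time, every quantitative and qualitative reachability problem for a 1c-cdPTA $\acdpta$ to the corresponding problem on the IMC $\oimcfrom{\oimdpfrom{\acdpta}}$, and then to invoke the existing polynomial-time algorithms for IMCs with open, half-open and closed intervals.

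First I would establish that the optimal reachability probabilities and the four qualitative predicates agree on $\semmdp{\acdpta}$ and on $\sem{\oimdpfrom{\acdpta}}$, for the relevant initial state $(\aloc,0)$ and target set $\targetsetof{\targetlocs}$ on the one side and $\regof{\aloc,0}=(\aloc,[0,0])$ and $\targetsetimdpof{\targetlocs}$ on the other. By \lemref{lem:bminimal}, the supremum and infimum of $\probl{\astrat}{(\aloc,0)}(\Diamond \targetsetof{\targetlocs})$ over $\strats{\semmdp{\acdpta}}$ coincide with those over the $\borders$-minimal schedulers $\bstrats{\semmdp{\acdpta}}$, and the existence or universality of a scheduler achieving value $\pthresh \in \set{0,1}$ is likewise unaffected by restricting to $\borders$-minimal schedulers (for the $\forall$-predicates, if some scheduler witnessed their failure then \lemref{lem:bminimal} would produce a $\borders$-minimal one with the same value). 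Then \lemref{lem:sched_cdPTA2IMDP} and \lemref{lem:sched_IMDP2cdPTA} together show that the set of values $\probl{\ansched}{(\aloc,0)}(\Diamond \targetsetof{\targetlocs})$ ranging over $\ansched \in \bstrats{\semmdp{\acdpta}}$ is exactly the set of values $\probl{\anosched}{(\aloc,[0,0])}(\Diamond \targetsetimdpof{\targetlocs})$ ranging over $\anosched \in \strats{\sem{\oimdpfrom{\acdpta}}}$. Consequently $\maxval{\semmdp{\acdpta}}{\targetsetof{\targetlocs}}{(\aloc,0)} = \maxval{\sem{\oimdpfrom{\acdpta}}}{\targetsetimdpof{\targetlocs}}{(\aloc,[0,0])}$, the analogous equality for $\minval{}{}{}$ holds, and each of $(\forall 0)$, $(\exists 0)$, $(\exists 1)$, $(\forall 1)$ has the same answer on the two models. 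Applying \propref{prop:imdp2imc_props} (which rests on \lemref{lem:imdp2imc} and \lemref{lem:imc2imdp}) to the IMDP $\oimdpfrom{\acdpta}$ then transfers these equalities and predicates to the IMC $\oimcfrom{\oimdpfrom{\acdpta}}$, with the same target set (a subset of $\omfromstates{\oimdpfrom{\acdpta}}$, hence of the states of $\oimcfrom{\oimdpfrom{\acdpta}}$) and initial state $(\aloc,[0,0])$.

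Second I would bound the size of $\oimcfrom{\oimdpfrom{\acdpta}}$ polynomially in the size of $\acdpta$. Writing $k = |\constants{\acdpta}|$, so that $|\borders| = k+1$, the partition $\intervalsof{\borders}$ has $2k+1$ elements and is computed, together with its total order, in polynomial time. Hence $\oimdpfrom{\acdpta}$ has at most $|\locs|\cdot(2k+1)$ regions, at most $2\cdot(2k+1)\cdot|\pedges|$ endpoint indicators, and at most $(2k+1)\cdot|\pedges|+1$ actions. Every interval occurring in $\omfromtrans{\oimdpfrom{\acdpta}}$ is $[0,1]$, $(0,1)$, $[0,0]$, or a singleton $[\fc{\apedge}{\outcome}+\fd{\apedge}{\outcome}\cdot\ab,\ \fc{\apedge}{\outcome}+\fd{\apedge}{\outcome}\cdot\ab]$ with $\ab \in \borders$ and $\outcome \in 2^{\set{\aclock}}\times\locs$; since $\ab$ and the affine constants $\fc{\apedge}{\outcome},\fd{\apedge}{\outcome}$ are part of the input, these endpoints have binary encodings of size polynomial in the size of $\acdpta$. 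The IMC $\oimcfrom{\oimdpfrom{\acdpta}}$ adds a fresh state $(\astate,\anomact)$ for each state $\astate$ of $\oimdpfrom{\acdpta}$ and each action $\anomact$ available in $\astate$, of which there are at most $|\omfromstates{\oimdpfrom{\acdpta}}|\cdot|\omfromactions{\oimdpfrom{\acdpta}}|$, and its interval-transition function uses only the intervals of $\oimdpfrom{\acdpta}$ together with the fresh interval $[0,1]$. Thus $\oimcfrom{\oimdpfrom{\acdpta}}$ is of size polynomial in the size of $\acdpta$, in the sense of the size of an IMC introduced in \sectref{sec:oimdps}.

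Finally, to solve a maximal (respectively minimal) reachability problem for $\acdpta$, $\targetlocs$, $\aloc$, $\unrhd$ (respectively $\unlhd$) and $\pthresh$, I would compute $\maxval{\sem{\oimcfrom{\oimdpfrom{\acdpta}}}}{\targetsetimdpof{\targetlocs}}{(\aloc,[0,0])}$ (respectively the minimal value) with the algorithm for IMCs with open, half-open and closed intervals --- which by~\cite{CK15} amounts to closing all the (half-)open intervals and running a standard reachability computation of~\cite{SVA06,CHK13} --- and then compare the result with $\pthresh$; the four qualitative problems are decided directly by the polynomial-time algorithms of~\cite{Spr18}. Since all these algorithms run in time polynomial in the size of the IMC, and $\oimcfrom{\oimdpfrom{\acdpta}}$ has polynomial size in the size of $\acdpta$, the whole procedure runs in polynomial time. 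I expect the only delicate point to be making the size estimates of the previous paragraph precise: one must represent $\intervalsof{\borders}$ via the finite set $\borders$ rather than via individual clock valuations, so that the numbers of IMDP states, actions and transitions are genuinely polynomial, and one must observe that the singleton-interval endpoints $\fc{\apedge}{\outcome}+\fd{\apedge}{\outcome}\cdot\ab$ cannot blow up once the affine constants are treated symbolically as part of the input.
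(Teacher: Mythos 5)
Your proposal is correct and follows essentially the same route as the paper, whose proof simply assembles \lemref{lem:bminimal}, \lemref{lem:sched_cdPTA2IMDP}, \lemref{lem:sched_IMDP2cdPTA} and \propref{prop:imdp2imc_props} with the polynomial-time constructibility of $\oimdpfrom{\acdpta}$ and the polynomial-time IMC algorithms of~\cite{PLSS13,CHK13,CK15,Spr18}. Your additional size accounting (regions, endpoint indicators, actions, and the binary encodings of the singleton endpoints $\fc{\apedge}{\outcome}+\fd{\apedge}{\outcome}\cdot\ab$) is a correct elaboration of what the paper leaves implicit.
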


The theorem follows from
\lemref{lem:bminimal},
\lemref{lem:sched_cdPTA2IMDP}, \lemref{lem:sched_IMDP2cdPTA},
\propref{prop:imdp2imc_props},
the fact that the IMDP defined in this section can be constructed in polynomial time,
and the fact that
quantitative and qualitative problems for IMDPs
can be solved in polynomial time,
given that there exist polynomial-time algorithms for analogous problems on IMCs
with the semantics adopted in this paper~\cite{PLSS13,CHK13,CK15,Spr18}.
%
We add that the quantitative and qualitative problems for 1c-cdPTAs are PTIME-hard,
following from the PTIME-hardness of the corresponding problems for MDPs~\cite{PT87,CDH10}.

\section{Conclusion}
We have presented a method for the transformation of a class of 1c-cdPTAs to IMDPs
such that there is a precise relationship between the schedulers of the 1c-cdPTA and the IMDP,
allowing us to use established polynomial-time algorithms for IMDPs to decide
quantitative and qualitative reachability problems on the 1c-cdPTA\@.
Overall, the results establish that such problems are in PTIME\@.
The techniques rely on the initialisation requirement,
which ensures that optimal choices for non-constant probabilistic edges
correspond to the left or right endpoints of intervals
that are derived from the syntactic description of the 1c-cdPTA\@.
The initialisation requirement restricts dependencies between non-constant probabilistic edges:
while this necessarily restricts the expressiveness of the formalism,
the resulting model nevertheless retains the expressive power to represent basic
situations in which the probability of certain events depends on the exact amount of time elapsed,
such as those described in the introduction.

The IMDP construction can be simplified in a number of cases:
for example, in the case in which at most two outcomes $\outcome_1, \outcome_2$
of every probabilistic edge $\apedge$ are non-constant,
i.e., for which $\fd{\apedge}{\outcome_1} \neq 0$ and $\fd{\apedge}{\outcome_2} \neq 0$,
endpoint indicators are unnecessary;
instead, when a probabilistic edge is taken from an open interval $\timeelapse{\abint}$,
each of $\outcome_1$ and $\outcome_2$ are associated with (non-singleton) intervals
(other outcomes are associated with singleton intervals),
and the choice of probability to assign between the two intervals
represents the choice of clock valuation in $\timeelapse{\abint}$.
This construction is also polynomial in the size of the 1c-cdPTA\@.

Future work could consider
time-bounded reachability problems for 1c-cdPTAs,
or lifting one of the two restrictions that
we have applied on the original clock-dependent probabilistic timed automata formalism
as presented in~\cite{Spr21},
namely initialisation and the restriction to one clock.
We discuss these restrictions in turn.
Consider first the case of initialisation.
Observe that initialisation enforces that the choice of clock value
at the point at which a non-constant probabilistic edge is taken
is independent of the choice of clock value at the point
at which any other non-constant probabilistic edge is taken.
For example, for the 1c-cdPTA of \figref{fig:one_clock},
the clock values chosen when leaving location $\mathrm{W}$
and (in the case in which the outcome to location $\mathrm{F}$ is taken)
when leaving location $\mathrm{F}$
are independent from each other,
because they belong to different intervals of $\intervalsof{\borders}$.
Instead, in the example of \figref{fig:notinit} (taken from~\cite{Spr21}),
in which initialisation does not hold,
the choice of clock value when leaving location $\mathrm{B}$ ($\mathrm{C}$, respectively)
depends on (i.e., cannot be less than)
the clock value when leaving location $\mathrm{A}$ ($\mathrm{B}$, respectively).
The characteristic of independence of the choice of clock values
for different non-constant probabilistic edges
permits the reduction from 1c-cdPTAs to IMDPs presented in this paper:
the IMDP construction encodes the choice of clock value of the 1c-cdPTA
by the chosen assignment of an interval distribution
(i.e., the interval distributions with intervals $(0,1)$ available from regions).
If the choice of clock values for different non-constant probabilistic edges are \emph{not} independent,
as is the case when the 1c-cdPTA is not initialised,
then we would require some mechanism in the IMDP to enforce such dependencies
between the choices of assignment in different IMDP states,
which does not exist in the classical IMDP formalism.
Solutions to the qualitative problem for non-initialised 1c-cdPTAs
could potentially utilise connections with parametric MDPs~\cite{HHZ11,WJPK21},
in which dependence between parameters on probabilities of transitions
from different states is an inherent part of the formalism.

\begin{figure}[t]
{

\centering
\scriptsize

\begin{tikzpicture}[->,>=stealth',shorten >=1pt,auto, thin] 

  \tikzstyle{state}=[draw=black, text=black, shape=circle, inner sep=3pt, outer sep=0pt, circle, rounded corners] 


	\node[state, node distance=1.2cm](A){$\mathrm{A}$};
	\node[yshift=-0.5cm,above of = A](inv_A){$\aclock < 1$};

	\node[fill=black, node distance=1.7cm] (nail_A) [right of=A] {};

	\node[state, node distance=1.7cm](B)[right of=nail_A]{$\mathrm{B}$};
	\node[yshift=-0.5cm,above of = B](inv_B){$\aclock < 1$};

	\node[fill=black, node distance=1.7cm] (nail_B) [right of=B] {};

	\node[state, node distance=1.7cm](C)[right of=nail_B]{$\mathrm{C}$};
	\node[yshift=-0.5cm,above of = C](inv_C){$\aclock < 1$};

	\node[fill=black, node distance=1.7cm] (nail_C) [right of=C] {};

	\node[state, node distance=1.7cm](D)[right of=nail_C]{$\mathrm{D}$};

	\node[state, node distance=1.7cm](E)[below of=nail_B]{$\mathrm{E}$};

	\path[rounded corners]

	(A)
		edge [above] node {$\aclock>0$} (nail_A)

	(nail_A)
		edge [above] node {\colorbox{gray!15}{$\aclock$}} (B)

	(B)
		edge [above] node {$\aclock>0$} (nail_B)

	(nail_B)
		edge [above] node {\colorbox{gray!15}{$1-\aclock$}} (C)
		edge [right] node {\colorbox{gray!15}{$\aclock$}} (E)

	(C)
		edge [above] node {$\aclock>0$} (nail_C)

	(nail_C)
		edge [above] node {\colorbox{gray!15}{$1-\frac{\aclock}{2}$}} (D);

\draw[->,rounded corners] (nail_A.south) |- node[above,text=black,pos=.7]{\colorbox{gray!15}{$1-\aclock$}} (E.west);
\draw[->,rounded corners] (nail_C.south) |- node[above,text=black,pos=.7]{\colorbox{gray!15}{$\frac{\aclock}{2}$}} (E.east);

\end{tikzpicture}

}

\caption{A non-initialised 1c-cdPTA (taken from~\cite{Spr21}).}%
\label{fig:notinit}
\end{figure}

For the case of lifting the restriction to one clock,
while retaining initialisation
(i.e., a clock has a natural-numbered value between any two non-constant probabilistic edges
with a dependence on that clock),
a natural way of generalising the results of this paper would be to consider
classical multidimensional regions (as presented for timed automata in~\cite{AD94}
and used for approximate analysis of clock-dependent probabilistic timed automata in~\cite{Spr21})
as states of a finite-state MDP (whether an IMDP or a standard MDP).
Now consider a clock-dependent probabilistic timed automaton
that has the same overall structure as the 1c-cdPTA of \figref{fig:notinit},
but which has three clocks, $\aclock$, $\anotherclock$ and $\yanotherclock$,
and for which the clock dependencies of the three probabilistic edges
are as follows:
for the probabilistic edge from location $\mathrm{A}$ ($\mathrm{B}$, $\mathrm{C}$, respectively),
the outcome to location $\mathrm{B}$ ($\mathrm{C}$, $\mathrm{D}$, respectively)
has clock dependence $\aclock$ ($1 - \anotherclock$, $1 - \frac{\yanotherclock}{2}$, respectively).
This example satisfies the notion of initialisation in the case of multiple clocks,
but features the same dependence between the
times at which the probabilistic edges can be taken
as in the case of the previous paragraph,
and hence a reduction to IMDPs (or even standard MDPs) is likely to be challenging if not impossible.
Future work could explore the applications of a subclass of this formalism,
for example with multiple clocks but only one clock used for clock dependencies,
or a more strict definition of initialisation than that described above:
such subclasses may yield models that have more significance in practice than 1c-cdPTAs
while offering the possibility of exact rather than approximate analysis.

Finally, we also discuss lifting the restriction to affine clock dependencies
in the case of initialised 1c-cdPTAs.
We recall that the IMDP construction presented in this paper
depends on the fact that the distributions corresponding to
the endpoints of intervals in $\intervalsof{\borders}$ are ``extremal'',
in the sense that, for a given interval, non-constant probabilistic edge and outcome,
the maximum or minimum probability assigned to that outcome according to the
probabilistic edge's distribution template 
will be assigned at one of the endpoints of the interval.
This property allows any scheduler of the 1c-cdPTA to be mimicked by a IMDP scheduler
by using probabilistic edges available from regions to obtain linear combinations
of the distributions corresponding to an interval's endpoints.
In the case of non-affine clock dependencies, this property does not necessarily hold,
i.e., the maximum or minimum probability assigned to an outcome
may fall \emph{within} an interval,
rather than at one of its endpoints.
Hence, dealing with non-affine clock dependencies requires non-trivial developments
beyond those featured in this paper,
or requires extra assumptions in order to handle the greater expressiveness of the clock dependencies,
for example restrictions on the non-affine functions used
(for example, so that their maximum or minimum
is always at the endpoint of any interval in $\intervalsof{\borders}$)
or, for any probabilistic edge, on the number of outcomes that can have non-affine clock dependencies.
\bibliographystyle{alpha}
\bibliography{onec-cdpta}

\end{document}